\definecolor{b2}{RGB}{51,153,255}
\definecolor{mygreen}{RGB}{80,180,0}
\theoremstyle{plain}
\newtheorem{theorem}{Theorem}[section]
\newtheorem{lemma}[theorem]{Lemma}
\newtheorem{lem}[theorem]{Lemma}
\newtheorem{proposition}[theorem]{Proposition}
\newtheorem{claim}[theorem]{Claim}
\newtheorem{corollary}[theorem]{Corollary}
\newtheorem{question}[]{Question}
\theoremstyle{definition}
\newtheorem{definition}[theorem]{Definition}
\newtheorem{problem}[theorem]{Problem}
\theoremstyle{remark}
\newtheorem{remark}[theorem]{Remark}
\newcommand{\wh}{\widehat}
\newcommand{\ov}{\overline}
\newcommand{\eps}{\varepsilon}
\renewcommand{\epsilon}{\varepsilon}
\renewcommand{\phi}{\varphi}
\newcommand{\HF}{\hat{F}}
\newcommand{\calP}{\mathcal{P}}
\renewcommand{\hat}{\wh}
\renewcommand{\bar}{\ov}
\renewcommand{\d}{\mathrm{d}}
\newcommand{\Ind}{\mathbf{1}}
\newcommand{\G}{\mathcal{G}}
\newcommand{\defeq}{\stackrel{{\text{def}}}{=}}
\DeclareMathOperator*{\E}{\mathbb{E}}
\DeclareMathOperator*{\M}{\mathcal{M}}
\DeclareMathOperator{\poly}{poly}
\DeclareMathAlphabet{\mathpzc}{OT1}{pzc}{m}{it}
\DeclarePairedDelimiterX{\xdivergence}[2]{(}{)}{%
  #1\;\delimsize\|\;#2%
}
\newcommand{\deltacurve}{\delta\xdivergence*}
\newcommand{\tradeoff}{T\xdivergence*}
\newcommand{\lp}{\left(}
\newcommand{\lb}{\left[}
\newcommand{\lc}{\left\{}
\newcommand{\rp}{\right)}
\newcommand{\rb}{\right]}
\newcommand{\rc}{\right\}}
\def\hpi{\hat{\pi}}
\def\hx{\hat{x}}
\def\hv{\hat{v}}
\def\TF{{\Tilde{F}}}
\def\barS{\bar{S}}
\newcommand{\R}{\mathbb{R}} 
\newcommand{\bbP}{\mathbb P}
\newcommand{\cA}{\mathcal A}
\newcommand{\cD}{\mathcal D}
\newcommand{\cK}{\mathcal K}
\newcommand{\cM}{\mathcal M}
\newcommand{\cN}{\mathcal N}
\newcommand{\cP}{\mathcal P}
\newcommand{\cV}{\mathcal V}
\newcommand{\cY}{\mathcal Y}
\newcommand{\indicator}{\mathbbm{1}} 
\newcommand{\sign}{\mathrm{sign}}
\newcommand{\inpro}[2]{\left\langle #1,#2 \right\rangle} 
\newcommand{\norm}[1]{\left\lVert #1\right\rVert} 
\renewcommand{\epsilon}{\varepsilon}
  \newcommand{\beq}{\begin{equation}}
  \newcommand{\eeq}{\end{equation}}
  \newcommand{\beqn}{\begin{equation*}}
  \newcommand{\eeqn}{\end{equation*}}
  \newcommand{\beqr}{\begin{eqnarray}}
  \newcommand{\eeqr}{\end{eqnarray}}
  \newcommand{\beqrn}{\begin{eqnarray*}}
  \newcommand{\eeqrn}{\end{eqnarray*}}
  \newcommand{\bmline}{\begin{multline}}
  \newcommand{\emline}{\end{multline}}
  \newcommand{\bmlinen}{\begin{multline*}}
  \newcommand{\emlinen}{\end{multline*}}
\title{Private Convex Optimization via Exponential Mechanism}
\author{
Sivakanth Gopi\thanks{Microsoft Research. Email: \texttt{sigopi@microsoft.com}}
\quad
Yin Tat Lee \thanks{University of Washington and Microsoft Research. Email: \texttt{yintat@uw.edu}}
\quad
Daogao Liu \thanks{University of Washington. Email: \texttt{dgliu@uw.edu}}
}
\date{}
\begin{document}

\begin{titlepage}
\maketitle
\begin{abstract}
    In this paper, we study private optimization problems for non-smooth convex functions $F(x)=\mathbb{E}_i f_i(x)$ on $\mathbb{R}^d$.
We show that modifying the exponential mechanism by adding an $\ell_2^2$ regularizer to $F(x)$ and sampling from $\pi(x)\propto \exp(-k(F(x)+\mu\|x\|_2^2/2))$ recovers both the known optimal empirical risk and population loss under $(\eps,\delta)$-DP. Furthermore, we show how to implement this mechanism using $\widetilde{O}(n \min(d, n))$ queries to $f_i(x)$ for the DP-SCO where $n$ is the number of samples/users and $d$ is the ambient dimension.
We also give a (nearly) matching lower bound $\widetilde{\Omega}(n \min(d, n))$ on the number of evaluation queries. 

Our results utilize the following tools that are of independent interest:
\begin{itemize}
    \item We prove Gaussian Differential Privacy (GDP) of the exponential mechanism if the loss function is strongly convex and the perturbation is Lipschitz. Our privacy bound is \emph{optimal} as it includes the privacy of Gaussian mechanism as a special case and is proved using the isoperimetric inequality for strongly log-concave measures.
    \item We show how to sample from $\exp(-F(x)-\mu \|x\|^2_2/2)$ for $G$-Lipschitz $F$ with $\eta$ error in total variation (TV) distance using $\widetilde{O}((G^2/\mu) \log^2(d/\eta))$ unbiased queries to $F(x)$. This is the first sampler whose query complexity has \emph{polylogarithmic dependence} on both dimension $d$ and accuracy $\eta$.
\end{itemize}

\end{abstract}
  \thispagestyle{empty}
\end{titlepage}

{\hypersetup{linkcolor=black}
\tableofcontents
}
\newpage
\section{Introduction}
Differential Privacy (DP), introduced in~\cite{DMNS06,DKMMN06}, is increasingly becoming the universally accepted standard in privacy protection. We see an increasing array of adoptions in industry~\cite{Apple17,EPK14,BEM+17,DKY17} and more recently the US census bureau \cite{Abo16,KCK+18}. 
Differential privacy allows us to quantify the privacy loss of an algorithm and is defined as follows.

\begin{definition}[$(\epsilon,\delta)$-DP]
A randomized mechanism $\M$ is $(\epsilon,\delta)$-differentially private if for any neighboring databases $\cD,\cD'$ and any subset $S$ of outputs, one has
\begin{align*}
    \Pr[\M(\cD)\in S]\leq e^\epsilon\Pr[\M(\cD')\in S] +\delta.
\end{align*} In this paper, we say $\cD$ and $\cD'$ are neighboring databases if they agree on all the user inputs except for a single user's input.
\end{definition}

Privacy concerns are particularly acute in machine learning and optimization using private user data. Suppose we want to minimize some loss function $F(x;\cD):\cK\to \R$ for some domain $\cK$ where $\cD$ is some database.
We want to output a solution $x^{priv}$ using differentially private mechanism $\cM$ such that we minimize the \emph{excess empirical risk}
\begin{equation}
\label{eqn:empirical_risk}
\E_{\cM}[F(x^{priv};\cD)]-F(x^*;\cD),    
\end{equation}
 where $x^*\in \cK$ is the true minimizer of $F(x;\cD)$. 
 
 \paragraph{Exponential Mechanism} One of the first mechanisms invented in differential privacy, the \emph{exponential mechanism}, was proposed by \cite{MT07} precisely to solve this. It involves sampling $x^{priv}$ from the density 
\begin{equation}
    \label{eqn:exponential_mechanism}
    \pi_\cD(x)\propto \exp\lp-kF(x;\cD)\rp.
\end{equation}
 Here $k$ controls the privacy-vs-utility tradeoff, large $k$ ensures that we get a good solution but less privacy and small $k$ ensures that we get good privacy but we lose utility. Suppose $\Delta_F = \sup_{\cD\sim\cD'} \sup_x |F(x;\cD)-F(x;\cD')|$ is the sensitivity of $F$, where the supremum is over all neighboring databases $\cD,\cD'$. Then choosing $k=\frac{\eps}{2\Delta_F}$, the exponential mechanism satisfies $(\eps,0)$-DP. 
 
 Exponential mechanism is widely used both in theory and in practice, such as in mechanism design \cite{HK12}, convex optimization \cite{BST14,MV21}, statistics \cite{WZ10,WM10,AKR+19}, machine learning and AI \cite{ZP19}. Even for infinite and continuous domains, exponential mechanism can be implemented efficiently for many problems \cite{HT10,CSS13,KT13,BV19,CKS20}. There are also several variants and generalizations of the exponential mechanism which can improve its utility based on different assumptions \cite{TS13,BNS13,RS16,LT19}. See \cite{LT19} for a survey of these results.

 
 



 \paragraph{DP Empirical Risk Minimization (DP-ERM)}
 In many applications, the loss function is given by the average of the loss of each user:
   \begin{align}
\label{eq:DPERM}
    F(x;\cD):=\frac{1}{n}\sum_{i=1}^n f(x; s_i).
\end{align} 
 where $\cD = \{s_1,s_2,\cdots,s_n\}$ is the collection of users $s_i$ and $f(x; s_i)$ is the loss function of user $s_i$. 
 
 Throughout this paper, we assume $f(x;s)$ is convex and $f(x;s)-f(x;s')$ is $G$-Lipschitz for all $s,s'$, and $\cK\subset \R^d$ is convex with diameter $D$.\footnote{Some of our results can handle the unconstrained domain, such as $\cK=\R^d$.} We call the problem of minimizing the excess empirical risk in \eqref{eq:DPERM} as DP Empirical Risk Minimization (DP-ERM). This setting is well studied by the DP community with many exciting results \cite{CM08,rbht09,cms11,jt14,BST14,kj16,fts17,zzmw17,Wang18,ins+19,bftt19,FKT20,KLL21,bgn21,LL21,AFKT21,sstt21,MBST21,GTU22}.\footnote{Most of the literature uses a stronger assumption that $f(x;s)$ is $G$-Lipschitz, while some of our results only need to assume the difference $f(x;s)-f(x;s')$ is $G$-Lipschitz.}
 
In particular, \cite{BST14} shows that exponential mechanism in (\ref{eqn:exponential_mechanism}) achieves the optimal excess empirical risk of $O\lp \frac{GDd}{n\eps}\rp$ under $(\eps,0)$-DP. On the other hand, \cite{BST14,bftt19,bfgt20} show that \emph{noisy gradient descent} on $F(x;\cD)$ achieves an excess empirical risk of 
\begin{equation}
\label{eqn:optimal_empirical_risk}
O\lp \frac{GD\sqrt{d \log(1/\delta)}}{n\eps}\rp    
\end{equation}
under $(\eps,\delta)$-DP, which is also shown to be optimal~\cite{BST14}. This is a significant $\sqrt{d}$ improvement over the exponential mechanism. 

Exponential mechanism is a universally powerful tool in differential privacy.
However, nearly all of the previous works on DP-ERM rely on noisy gradient descent or its variants to achieve the significant $\sqrt{d}$ improvement over exponential mechanism under $(\eps,\delta)$-DP. 
One natural question is whether noisy gradient descent has some extra ability that exponential mechanism lacks or we didn't use exponential mechanism optimally in this setting. This brings us to the first question.
\begin{question}
Can we obtain the optimal empirical risk in \eqref{eqn:empirical_risk} under $(\eps,\delta)$-DP using exponential mechanism?
\end{question}

\paragraph{DP Stochastic Convex Optimization (DP-SCO)} Beyond the privacy guarantee and the empirical risk guarantee, another important guarantee is the generalization guarantee. Formally, we assume the users are sampled from an unknown distribution $\cP$ over convex functions. We define the loss function as
\begin{align}
\label{eq:DPSCO}
    \HF(x)=\E_{s\sim \cP}[f(x;s)].
\end{align}
We want to design a DP mechanism $\cM$ which outputs $x^{priv}$ given users $\cD = \{s_1,s_2,\dots,s_n\}$ independently sampled from $\cP$ and minimize the \emph{excess population loss}
\begin{equation}
\label{eqn:population_loss}
    \E_{\cM,\cD\sim \cP}[\HF(x^{priv})] - \HF(x^*)
\end{equation}
where $x^*$ is the minimizer of $\HF(x)$. We call the problem of minimizing the excess population loss in \eqref{eqn:population_loss} as DP Stochastic Convex Optimization (DP-SCO).
By a suitable modification of noisy stochastic gradient descent, \cite{bftt19,FKT20} show that one can achieve the optimal population loss of
\begin{equation}
\label{eqn:optimal_population_loss}
O\lp GD \lp\frac{1}{\sqrt{n}}+\frac{\sqrt{d\log(1/\delta)}}{\eps n}\rp\rp.
\end{equation}
\cite{bftt19} bounds the generalization error by showing that running SGD on smooth functions is stable and \cite{FKT20} proposes an iterative localization technique.
Note that only the algorithm for smooth functions in \cite{bftt19} can achieve both optimal empirical risk and optimal population loss at the same time, with the price of taking more gradient queries and loss of efficiency.
It is unclear to us how one can obtain both using current techniques for non-smooth functions.
This brings us to the second question. 
\begin{question}
Can we achieve both the optimal empirical risk and the optimal population loss for non-smooth functions with the same algorithm?
\end{question}

\paragraph{Sampling} Without extra smoothness assumptions on $f$, currently, there is no optimally efficient algorithm for both problems. For example, with oracle access to gradients of $f$, the previous best algorithms for DP-SCO use:
\begin{itemize}
    \item $\widetilde{O}(n d)$ queries to $\nabla f(x;s)$ (by combining \cite{FKT20}, Moreau-Yosida regularization and cutting plane methods),
    \item $\widetilde{O}(\min(n^{3/2},n^{2}/\sqrt{d}))$ queries to $\nabla f(x;s)$ \cite{AFKT21},
    \item $\widetilde{O}(\min(n^{5/4}d^{1/8},n^{3/2}/d^{1/8}))$ queries to $\nabla f(x;s)$ \cite{KLL21}.
\end{itemize}
Combining these results, this gives an algorithm for DP-SCO that uses 
$$\widetilde{O}(\min(nd, n^{5/4}d^{1/8}, n^{3/2}/d^{1/8},n^{2}/\sqrt{d}))$$
many queries to $\nabla f(x;s)$. Although the information lower bound for non-smooth functions with the gradient queries is open, it is unlikely that the answer involves four different cases. 

In this paper, we focus on the function value query (zeroth order query) on $f(x;s)$. This query is weaker than gradient query as it obtains $d$ times less information. They are used in many practical applications such as clinical trials and ads placement when the gradient is not available and is also useful in bandit problems.
This brings us to the third question.

\begin{question}
Can we obtain an algorithm with optimal query complexity for DP-SCO for zeroth order query model?
\end{question}

\subsection{Our Contributions}
In this paper, we give a positive answer to all these questions using the \emph{Regularized Exponential Mechanism}. If we add an $\ell_2^2$ regularizer to $F$ and sample $x^{priv}$ from the density 
\begin{equation}
\label{eqn:our_mechanism}
\exp\lp-k\lp F(x;\cD)+\mu \norm{x}_2^2/2\rp\rp,
\end{equation}
then, for a suitable choice of $\mu$ and $k$, we recover the optimal excess risk in (\ref{eqn:optimal_empirical_risk}) for DP-ERM and optimal population loss in (\ref{eqn:optimal_population_loss}) for DP-SCO. Finally, we give an algorithm to sample $x^{priv}$ from the density \eqref{eqn:our_mechanism} with nearly optimal number of queries to $f(x;s)$ (See Figure \ref{fig:sample_runtime}). To the best of our knowledge, our algorithm is the first whose query complexity has \emph{polylogarithmic dependence} in both dimension and accuracy (in TV distance).



Formally, our result is follows:
\begin{theorem}[DP-ERM, Informal]
Let $\cK$ be a convex set with diameter $D$ and $\{f(\cdot;s)\}$
be a family of convex functions on $\cK$ where $f(\cdot;s)-f(\cdot;s')$ is $G$-Lipschitz for all $s,s'$. Given a
database $\mathcal{D}=\{s_{1},s_{2},\cdots,s_{n}\}$, for any $\epsilon,\delta\in(0,\frac{1}{10})$,
\footnote{See Theorem~\ref{thm:DPERM} for general conclusions for all $\eps>0$}
the regularized exponential mechanism 
\[
x^{(priv)}\propto\exp\lp-k\cdot\lp\frac{1}{n}\sum_{i=1}^{n}f(x;s_{i})+\frac{\mu}{2}\|x\|_{2}^{2}\rp\rp
\]
is $(\epsilon,\delta)$-DP with expected excess empirical loss
\[
\frac{2GD\sqrt{d\log(1/\delta)}}{\epsilon n}
\]
for some appropriate choices of $k$ and $\mu$. Furthermore, if $f(\cdot;s)$ is $G$-Lipschitz for all $s$, 
we can sample $x^{(priv)}$ using $O(\frac{\eps^2n^2}{\log(1/\delta)}\log^2(\frac{nd}{\delta}))$ queries in expectation to the values of $f(x;s)$.
\end{theorem}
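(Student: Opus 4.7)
The plan is to establish the three claims—privacy, excess empirical loss, and query complexity—in sequence, treating the two tools announced in the abstract as black boxes and reducing the theorem to a choice of the parameters $k$ and $\mu$. Throughout let $\tilde F(x;\cD) = \frac{1}{n}\sum_{i=1}^n f(x;s_i) + \tfrac{\mu}{2}\|x\|_2^2$, so that the mechanism samples from $\pi \propto \exp(-k\tilde F)$.

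\emph{Privacy.} I would invoke the Gaussian Differential Privacy guarantee for the exponential mechanism on strongly convex losses with Lipschitz perturbation (the first tool). After rescaling by $k$, the potential $k\tilde F$ is $k\mu$-strongly convex, while replacing a single user changes $kF$ by a function whose Lipschitz constant is at most $kG/n$. The GDP result then yields a GDP parameter of order $\frac{G}{n}\sqrt{k/\mu}$, and the standard conversion to $(\epsilon,\delta)$-DP forces
\[
\frac{G}{n}\sqrt{k/\mu} \;\lesssim\; \frac{\epsilon}{\sqrt{\log(1/\delta)}},
\qquad \text{i.e., } \quad k \;=\; \Theta\!\left(\frac{n^2 \mu\, \epsilon^2}{G^2 \log(1/\delta)}\right).
\]

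\emph{Utility.} Since $\tilde F$ is $\mu$-strongly convex, $\pi \propto \exp(-k\tilde F)$ is $k\mu$-strongly log-concave, and the standard second-moment bound for strongly log-concave measures (Brascamp–Lieb/isoperimetry) gives $\E_\pi[\tilde F(X)] - \min_{\cK}\tilde F \le d/(2k)$. Comparing $\tilde F$ at the unconstrained optimizer with $\tilde F(x^*) \le F(x^*) + \mu D^2/2$ yields the regularization-bias inequality $\min_\cK \tilde F \le F(x^*) + \mu D^2/2$, so
\[
\E_\pi[F(X;\cD)] - F(x^*;\cD) \;\le\; \frac{d}{2k} + \frac{\mu D^2}{2}.
\]
Substituting the privacy-saturating $k$ from above and minimizing in $\mu$ gives $\mu = \Theta\!\bigl(\frac{G}{Dn\epsilon}\sqrt{d\log(1/\delta)}\bigr)$, and hence the claimed excess empirical loss of $O\!\bigl(\frac{GD\sqrt{d\log(1/\delta)}}{\epsilon n}\bigr)$.

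\emph{Query complexity.} For the implementation I would feed $\exp\bigl(-kF(\cdot;\cD) - k\mu\|x\|_2^2/2\bigr)$ into the polylogarithmic sampler (the second tool), using the assumption that each $f(\cdot;s)$ is $G$-Lipschitz so that $kF(\cdot;\cD)$ is $kG$-Lipschitz. The sampler reports a sample within TV distance $\eta$ using $\tilde O\bigl(\frac{(kG)^2}{k\mu}\log^2(d/\eta)\bigr) = \tilde O\bigl(\frac{kG^2}{\mu}\log^2(d/\eta)\bigr)$ value queries, which after substituting the chosen $k$ becomes $\tilde O\bigl(\frac{n^2\epsilon^2}{\log(1/\delta)}\log^2(d/\eta)\bigr)$; taking $\eta$ inverse-polynomial in $n,d,1/\delta$ produces the stated $O\!\bigl(\frac{\epsilon^2 n^2}{\log(1/\delta)}\log^2(nd/\delta)\bigr)$.

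\emph{Main obstacle.} The two substantive technical ingredients—the GDP bound for strongly convex exponential mechanism and the polylog sampler—are treated as black boxes here, so the remaining subtlety is purely \emph{bookkeeping around the approximate sampler}: the TV error $\eta$ leaks into both the $(\epsilon,\delta)$-DP guarantee (as an additive slack after post-processing) and the excess risk (contributing at most $\eta \cdot GD$ after clipping or truncation). I expect the hardest step to be verifying that $\eta = \poly(n,d,1/\delta)^{-1}$ suffices on all three counts simultaneously, which only inflates $\log(1/\eta)$ by $O(\log(nd/\delta))$ and is therefore absorbed into the polylogarithmic factor already present in the query bound.
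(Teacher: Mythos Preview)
Your proposal is correct and mirrors the paper's proof: privacy via the GDP bound (Theorem~\ref{thm:privacy_technical}), utility via the log-concave sampling lemma (the paper uses Lemma~\ref{lm:utility_tech}, which gives $d/k$ rather than your $d/(2k)$ and needs only convexity, not strong convexity or Brascamp--Lieb), and query complexity via the sampler (Theorem~\ref{thm:sampler}) with TV error set to $\min\{\delta/3,\poly(n)^{-1}\}$. The parameter choices $k=\Theta(n^2\mu\epsilon^2/(G^2\log(1/\delta)))$, $\mu=\Theta\!\bigl(\frac{G}{Dn\epsilon}\sqrt{d\log(1/\delta)}\bigr)$ and the bookkeeping around the TV slack are exactly as you outline.
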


\begin{theorem}[DP-SCO, Informal]
Let $\cK$ be a convex set with diameter $D$ and $\{f(\cdot;s)\}$
be a family of convex functions on $\cK$ where $f(\cdot;s)-f(\cdot;s')$ is $G$-Lipschitz for all $s,s'$. 
Given a
database $\mathcal{D}=\{s_{1},s_{2},\cdots,s_{n}\}$ of samples from
some unknown distribution $\mathcal{P}$. For any $\epsilon,\delta\in(0,\frac{1}{10})$,\footnote{
See Theorem~\ref{thm:dpsco_impl} for general conclusions for all $\eps>0$.
}
the regularized exponential mechanism 
\[
x^{(priv)}\propto\exp\lp-k\cdot\lp\frac{1}{n}\sum_{i=1}^{n}f(x;s_{i})+\frac{\mu}{2}\|x\|_{2}^{2}\rp\rp
\]
is $(\epsilon,\delta)$-DP with expected excess population loss
\[
\frac{2GD}{\sqrt{n}}+\frac{2GD\sqrt{d\log(1/\delta)}}{\epsilon n}
\]
for some appropriate choice of $k$ and $\mu$. Furthermore, if $f(\cdot;s)$ is $G$-Lipschitz for all $s$, 
we can sample $x^{(priv)}$ using $O(\min\{\frac{\eps^2n^2}{\log(1/\delta)},nd\}\log^2(\frac{nd}{\delta}))$ queries in expectation to the
values of $f(x;s)$ and the expected number of queries is optimal up to logarithmic terms. 
\end{theorem}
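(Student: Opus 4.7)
The plan is to combine the two main tools from the excerpt---the Gaussian Differential Privacy (GDP) bound for the exponential mechanism on strongly convex objectives with Lipschitz perturbation, and the polylog-query sampler for strongly log-concave densities---to handle privacy, population loss, and query complexity separately, and then pick $k,\mu$ to satisfy all three constraints simultaneously.

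\textbf{Privacy.} Set $F_\mu(x;\cD):=F(x;\cD)+\frac{\mu}{2}\|x\|_2^2$. The regularizer makes $F_\mu$ a $\mu$-strongly convex function, and switching one user perturbs $F(\cdot;\cD)$ by a $(G/n)$-Lipschitz function. Thus the log-density $kF_\mu$ is $k\mu$-strongly convex with a $(kG/n)$-Lipschitz perturbation between neighbors. Plugging into the GDP theorem yields a GDP parameter of order $(G/n)\sqrt{k/\mu}$, and the standard conversion from GDP to $(\eps,\delta)$-DP imposes $k/\mu=O\lp\frac{\eps^2 n^2}{G^2\log(1/\delta)}\rp$.

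\textbf{Population loss.} Using $0\in\cK$ and the fact that $x^*_\mu$ minimizes $F_\mu(\cdot;\cD)$, I decompose
\begin{align*}
\E[\HF(x^{priv})]-\HF(x^*) &\le \underbrace{\E[\HF(x^{priv})-F(x^{priv};\cD)]}_{\text{generalization}} \\
&\quad + \underbrace{\E[F_\mu(x^{priv};\cD)-F_\mu(x^*_\mu;\cD)]}_{\text{optimization}} + \underbrace{\tfrac{\mu D^2}{2}}_{\text{bias}}.
\end{align*}
The optimization term is the expected gap of a $k\mu$-strongly log-concave sample above its mode, which is $O(d/k)$ by standard concentration for log-concave measures. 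For the generalization term I use a stability-via-coupling argument: the two neighboring densities $\exp(-kF_\mu(\cdot;\cD))$ and $\exp(-kF_\mu(\cdot;\cD'))$ are both $k\mu$-strongly log-concave and their log-ratio is $(kG/n)$-Lipschitz, which gives $W_1$-stability of $O(G/(n\mu))$ and hence uniform stability of $O(G^2/(n\mu))$ for any $G$-Lipschitz loss. Putting the three contributions together gives $O(d/k+\mu D^2+G^2/(n\mu))$; choosing $\mu=\Theta(G/(D\sqrt{n}))$ balances generalization against bias at $O(GD/\sqrt{n})$, and saturating the privacy constraint sets $k$ so that $d/k=O(GD\sqrt{d\log(1/\delta)}/(\eps n))$, matching the claimed population loss.

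\textbf{Query complexity and the main obstacle.} For the $\widetilde{O}(\eps^2 n^2/\log(1/\delta))$ branch, I apply the polylog sampler to $\exp(-kF_\mu)$ with effective Lipschitz constant $kG$ and effective strong convexity $k\mu$; this needs $\widetilde{O}(kG^2/\mu)$ unbiased function-value queries to $kF$, each of which is simulated by evaluating $k\cdot f(x;s_i)$ at a uniformly random $i$ at cost one call to $f$. Substituting the parameters above gives $\widetilde{O}(\eps^2 n^2/\log(1/\delta))$ calls. For the $\widetilde{O}(nd)$ branch I instead evaluate $F(\cdot;\cD)$ exactly at cost $n$ per step and run a standard strongly log-concave sampler (e.g.\ MALA) with $\widetilde{O}(d)$ iterations, giving $\widetilde{O}(nd)$ calls; the minimum yields the theorem. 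The hardest step will be the generalization bound: one must control the entire sampling distribution rather than just the position of its mode, so the stability argument relies on a coupling of the two strongly log-concave measures---presumably via contraction of the Langevin semigroup or a direct $W_1$ estimate using the common strong log-concavity---rather than classical stability of the regularized ERM minimizer.
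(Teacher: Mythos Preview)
Your overall architecture---GDP for privacy, the three-term decomposition for population loss, and Wasserstein stability for generalization---matches the paper's. Two concrete gaps remain.

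First, fixing $\mu=\Theta(G/(D\sqrt{n}))$ does not give $d/k=O\bigl(GD\sqrt{d\log(1/\delta)}/(\eps n)\bigr)$ in general. With the privacy constraint $k/\mu=O\bigl(\eps^2n^2/(G^2\log(1/\delta))\bigr)$ saturated and your $\mu$, one gets
\[
\frac{d}{k}=\Theta\Bigl(\frac{GD\,d\log(1/\delta)}{\eps^2n^{3/2}}\Bigr)
=\frac{GD\sqrt{d\log(1/\delta)}}{\eps n}\cdot\frac{\sqrt{d\log(1/\delta)}}{\eps\sqrt{n}},
\]
which exceeds the target whenever $\sqrt{d\log(1/\delta)}>\eps\sqrt{n}$, i.e., precisely in the privacy-dominated regime. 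The paper fixes this by letting $\mu$ absorb the privacy term as well, setting $\mu=\Theta\bigl(\tfrac{G}{D}\sqrt{\tfrac{d\log(1/\delta)}{\eps^2n^2}+\tfrac{1}{n}}\bigr)$ so that $d/k$, $\mu D^2$, and $G^2/(n\mu)$ are balanced simultaneously.

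Second, the $\widetilde{O}(nd)$ branch via ``MALA with $\widetilde{O}(d)$ iterations'' does not work: $F$ is non-smooth, and no known sampler for non-smooth strongly log-concave targets achieves $\widetilde{O}(d)$ value (or gradient) queries---the paper's own comparison table makes this point. The paper obtains the $nd$ branch from the \emph{same} stochastic sampler by simply capping $k$ at $\Theta(nd\,\mu/G^2)$. Since the sampler's cost is $\widetilde{O}(kG^2/\mu)$, this caps the query count at $\widetilde{O}(nd)$; the resulting $d/k=\Theta(G^2/(n\mu))$ is then of the same order as the generalization term and is absorbed into the $GD/\sqrt{n}$ rate, which is harmless because the $nd$ branch is active exactly when the statistical rate already dominates. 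Reducing $k$ also only improves privacy, so no separate argument is needed there.

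On the generalization step your instinct is right, but the paper makes the coupling concrete rather than appealing to Langevin contraction: the GDP bound (via post-processing of the tradeoff comparison) yields $\mathrm{D}_{KL}(\pi_\cD\|\pi_{\cD'})\le kG^2/(2n^2\mu)$, and then Talagrand's transportation inequality for the $k\mu$-strongly log-concave measure gives $\mathrm{W}_2(\pi_\cD,\pi_{\cD'})\le G/(n\mu)$ directly.
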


For DP-SCO, we provide a nearly matching information-theoretic lower bound on the number of value queries (Section~\ref{sec:infolower}), proving the optimality of our sampling algorithm.
Moreover, when $f$ is already strongly convex, our proof shows the exponential mechanism (without adding a regularizer) itself simultaneously achieves both the optimal excess empirical risk and optimal population loss. 

In a {\em concurrent and independent} work, \cite{GTU22} study the DP properties of Langevin Diffusion, and provide optimal/best known private empirical risk and population loss under both pure-DP ($\delta=0$) and approximate-DP ($\delta>0$) constraints.
Utility/privacy trade-off of non-convex functions is also discussed.






\section{Techniques}

The main contribution of this paper is the discovery that adding regularization terms in exponential mechanism leads to optimal algorithms for DP-ERM and DP-SCO. For this, we develop some important tools that could be of independent interest. We now briefly discuss each of the main tools.


\subsection{Gaussian Differential Privacy (GDP) of Regularized Exponential Mechanism}

To analyze the privacy of the regularized exponential mechanism, we need to bound the privacy curve between a strongly log-concave distribution and its Lipschitz perturbation in the exponent. \cite{MASN16} gave a nearly tight (up to constants) privacy guarantee of exponential mechanism if the distribution $\exp(-k F(x;\cD))$ satisfies Logarithmic Sobolev inequality (LSI). Since strongly log-concave distributions satisfy LSI, their result immediately gives the $(\epsilon,\delta)$-DP guarantee of our algorithm. However, this gives a sub-optimal privacy bound because it does not fully take advantage of the strongly log-concave property.

Instead, we show directly that the privacy curve between a strongly log-concave distribution and its Lipschitz perturbation in the exponent is upper bounded by the privacy curve of an appropriate Gaussian mechanism. This new proof uses the notion of tradeoff function introduced in~\cite{dong2019gaussian} and the isoperimetric inequality for strongly log-concave distribution.

\begin{theorem}
\label{thm:privacy_technical}
Given convex set $\cK\subseteq \R^d$ and $\mu$-strongly convex functions $F,\Tilde{F}$ over $\cK$. Let $P,Q$ be distributions over $\cK$ such that $P(x)\propto e^{-F(x)}$ and $Q(x)\propto e^{-\Tilde{F}(x)}$.
If $\Tilde{F}-F$ is $G$-Lipschitz over $\cK$, then for all $\eps>0$,
\begin{align*}
    \deltacurve{P}{Q}(\epsilon) 
    \leq \deltacurve{\cN\lp 0,1\rp}{\cN\lp\frac{G}{\sqrt{\mu}},1\rp}(\epsilon).
\end{align*}
\end{theorem}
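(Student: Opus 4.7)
The plan is to work in the tradeoff-function framework of \cite{dong2019gaussian}. The privacy-curve inequality in the theorem is equivalent to the pointwise tradeoff-function dominance $T(P,Q)(\alpha) \geq T(\cN(0,1),\cN(s,1))(\alpha)$ for all $\alpha\in[0,1]$, where $s := G/\sqrt{\mu}$. By the Neyman--Pearson lemma, the optimal test for $P$ versus $Q$ is a threshold on the likelihood ratio $Q(x)/P(x) = (Z_P/Z_Q)\,e^{F(x)-\tilde F(x)}$, which is strictly monotone in the $G$-Lipschitz function $\ell := F-\tilde F$. Parameterizing by a threshold $t$ on $\ell$, the tradeoff curve consists of the points $(1-u(t),\,v(t))$ with $u(t) := P(\ell < t)$ and $v(t) := Q(\ell < t)$, so it suffices to show that for every $t$,
\[
v(t) \geq \Phi\bigl(\Phi^{-1}(u(t)) - s\bigr).
\]

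To prove this I would first collect two ingredients. First, the 1D pushforwards $\mu_P := \ell_\# P$ and $\mu_Q := \ell_\# Q$, with densities $p, q$, satisfy the exponential tilt $q(y) = C e^{y} p(y)$ for $C = Z_P/Z_Q$, which is immediate from $P(x) \propto e^{-F(x)}$ and $Q(x) \propto e^{-F(x)+\ell(x)}$ after conditioning on the level sets of $\ell$. Second, since $P$ is $\mu$-strongly log-concave on the convex set $\cK$, the Bakry--Ledoux Gaussian isoperimetric inequality gives $\Phi^{-1}(P(A_r)) \geq \Phi^{-1}(P(A)) + r\sqrt{\mu}$ for every Borel $A$ and every $r>0$. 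Applied to $A = \{\ell < t\}$, the $G$-Lipschitzness of $\ell$ forces $A_r \subseteq \{\ell < t+Gr\}$, and differentiating at $r=0$ produces the key infinitesimal bound $T_P'(t) \geq 1/s$ for the function $T_P(t) := \Phi^{-1}(u(t))$.

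Finally, a short 1D calculus argument closes the loop. Define $h(t) := v(t) - \Phi(T_P(t) - s)$, so that $h(\pm\infty) = 0$. Using $v'(t) = C e^t p(t) = C e^t \phi(T_P(t)) T_P'(t)$ we obtain
\[
h'(t) \;=\; T_P'(t)\,\phi(T_P(t)-s)\,\bigl[C\,e^{\,t - sT_P(t) + s^2/2} - 1\bigr],
\]
so $\sgn(h'(t)) = \sgn(f(t))$ for $f(t) := t - sT_P(t) + s^2/2 + \log C$. But $f'(t) = 1 - s T_P'(t) \leq 0$ by the isoperimetric bound, so $f$ is non-increasing, $h'$ switches sign from $+$ to $-$ at most once, and therefore $h$ is unimodal. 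Combined with $h(\pm\infty) = 0$, this forces $h \geq 0$ throughout, which is exactly the desired inequality.

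I expect the main delicacy to be setting up the 1D reduction cleanly: justifying the exponential-tilt identity for the pushforward densities, handling continuity/tie-breaking in Neyman--Pearson, and invoking the isoperimetric inequality in the form appropriate for distributions supported on a convex subset of $\R^d$. The algebraic heart of the argument is the identity $f' = 1 - s T_P'$, which converts the sharp isoperimetric bound $T_P' \geq 1/s$ into the single-crossing structure of $h'$. Tightness is transparent in the Gaussian instance $P = \cN(0, I/\mu)$ with $\ell$ linear, where $T_P$ is affine, $f$ is constant, and $h \equiv 0$.
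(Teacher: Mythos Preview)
Your proposal is correct and shares the paper's skeleton---reduce to tradeoff functions via Proposition~\ref{prop:delta_tradeoff}, apply Neyman--Pearson to reduce to a one-dimensional question about the level sets of the log-likelihood ratio, and control those level sets with the Bakry--Ledoux isoperimetric inequality (Proposition~\ref{prop:isoperimetry} / Corollary~\ref{cor:isoperimetry_lip})---but the final one-dimensional step is carried out differently. The paper writes $T(P\|Q)(z)=\bigl(1+\E_P[e^{-\alpha}\Ind_{\barS}]/\E_P[e^{-\alpha}\Ind_S]\bigr)^{-1}$, bounds numerator and denominator separately by integrating the isoperimetric tail bound against $e^{\pm t}$ via integration by parts, and then evaluates the resulting Gaussian integrals explicitly (Claim~\ref{claim:some_gaussian_integrals}). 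You instead use the exponential-tilt identity $q=Ce^y p$ for the pushforwards to express $h'(t)$ in closed form, observe that its sign is governed by $f(t)=t-sT_P(t)+s^2/2+\log C$, and deduce unimodality of $h$ from the \emph{infinitesimal} isoperimetric bound $T_P'\ge 1/s$ via $f'=1-sT_P'\le 0$. Your route is slicker---no Gaussian integrals, and the tightness in the Gaussian case is transparent since $T_P'$ is then identically $1/s$---while the paper's route is more robust to the regularity issues you flag: it uses the isoperimetric inequality only in its integrated form and never needs the pushforward $\ell_\#P$ to have a density or $T_P$ to be differentiable. The delicacies you anticipate (atoms in $\ell_\#P$, tie-breaking in Neyman--Pearson, absolute continuity of $h$) are real but routine; a small linear perturbation of $\ell$ followed by a limiting argument suffices, or you can work directly with the finite-difference form $T_P(t+h)\ge T_P(t)+h/s$ to get monotonicity of $f$ without differentiation.
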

This proves that the privacy curve for distinguishing between $P,Q$ is upper bounded the privacy curve of a Gaussian mechanism with sensitivity $G/\sqrt{\mu}$ and noise scale 1.

\paragraph{Tightness:} Note that Theorem~\ref{thm:privacy_technical} is completely tight because it contains the privacy of Gaussian mechanism as a special case. If $F(x)=\norm{x}_2^2/2$ and $\Tilde{F}(x)=\norm{x-a}_2^2/2$ for some $a\in \R^d$, then $\Tilde{F}(x)-F(x)=-\inpro{x}{a}+\norm{a}_2^2/2$ is $G$-Lipschitz with $G=\norm{a}_2$ and $F,\Tilde{F}$ are $1$-strongly convex. And $P=\cN(0,I_d)$ and $Q=\cN(a,I_d)$. Therefore:
$$\deltacurve{P}{Q}=\deltacurve{\cN(0,I_d)}{\cN(a,I_d)}=\deltacurve{\cN(0,1)}{\cN\lp\norm{a}_2,1\rp}$$ which is precisely the upper bound guaranteed by the theorem.

\subsection{Generalization Error of Sampling}
Many important and fundamental problems in machine learning, optimization and operations research are special cases of
SCO, and ERM is a classic and widely-used approach to solve it, though their relationships are not well-understood.
If one can solve the ERM problem optimally and get the exact optimal solution $x^*$ to minimizing $F(\cdot;\cD)$ (see Equation~\ref{eq:DPERM}), then \cite{SSSSS09} showed $x^*$ will also be a good solution to the SCO for strongly convex functions.
But in most situations, solving ERM optimally costs too much or even impossible. 
Can we find a approximately good solution to ERM and hope that it is also a good solution for SCO?
\cite{Fel16} provides a negative answer and shows there is no good uniform convergence between $F(\cdot;\cD)$ and $\HF$, that is there always exists $x\in\cK$ such that $|F(x;\cD)-\HF(x)|$ is large.
This fact forces us to find approximate solution to ERM with very high accuracy, which makes the algorithms inefficient.

Prior works proposed a few interesting ways to overcome this difficulty, such as the uniform stability in \cite{HRS16} and the iterative localization technique in \cite{AFKT21}.
Roughly speaking, uniform stability means that if running algorithms on neighboring datasets lead to similar output distributions, then the generalization error of the ERM algorithm is bounded.
Thus a good solution to ERM obtained by a stable algorithm is also a good solution for SCO.
\cite{bftt19} makes use of the stability of running SGD on smooth functions to get a tight bound on the population loss for DP-SCO.

Recall $F(x;\cD)$ and $\HF(x)$ are defined in Equation~\eqref{eq:DPERM} and \eqref{eq:DPSCO} respectively. 
Our result enriches the toolbox of bounding the generalization error and provides new insights for this problem.
\begin{theorem}
Suppose $\{f_i\}$ is a family of $\mu$-strongly convex functions over $\cK$ and $f_i-f_{i'}$ is $G$-Lipschitz for any two functions $f_i,f_{i'}$ in the family.
For any $k>0$ and suppose the $n$ samples in data set $\cD$ are drawn i.i.d from the underlying distribution, then by sampling $x^{(sol)}$ from density $\propto e^{-kF(x^{(sol)};\cD)}$, the population loss satisfies
\begin{align*}
    \E[\HF(x^{(sol)})]-\min_{x\in\cK}\HF(x)\leq \frac{G^2}{\mu n}+ \frac{d}{k}.
\end{align*}
\end{theorem}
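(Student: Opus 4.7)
My plan is to split the excess population loss into a \emph{stability} term and an \emph{empirical excess-risk} term. Let $\pi_\cD \propto \exp(-kF(\cdot;\cD))$ and let $x^* = \argmin_{x \in \cK} \hat F(x)$. Since $x^*$ does not depend on $\cD$, $\E_\cD[F(x^*;\cD)] = \hat F(x^*)$, hence
\[
\E[\hat F(x^{(sol)})] - \hat F(x^*) \;=\; \underbrace{\E\!\left[\hat F(x^{(sol)}) - F(x^{(sol)};\cD)\right]}_{(\mathrm{I})\text{ stability}} \;+\; \underbrace{\E\!\left[F(x^{(sol)};\cD) - F(x^*;\cD)\right]}_{(\mathrm{II})\text{ ERM error}}.
\]

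For $(\mathrm{II})$, I invoke the classical estimate that for any convex $V$ on a convex domain and $\pi \propto e^{-kV}$, $\E_\pi[V] - \min V \le d/k$. Applied to $V = F(\cdot;\cD)$ and using $\min F(\cdot;\cD) \le F(x^*;\cD)$, this gives $(\mathrm{II}) \le d/k$ after taking expectation over $\cD$.

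For $(\mathrm{I})$, I run a ghost-sample argument. Draw iid copies $s_1',\ldots,s_n' \sim \cP$ and let $\cD^{(i)}$ denote $\cD$ with $s_i$ replaced by $s_i'$. By exchangeability of $s_i$ and $s_i'$, $(\mathrm{I}) = \tfrac{1}{n}\sum_{i=1}^n \E[f(x^{(sol)}_\cD;s_i') - f(x^{(sol)}_{\cD^{(i)}};s_i')]$. Since only pairwise differences $g_i := f(\cdot;s_i')-f(\cdot;s_i)$ are $G$-Lipschitz, I use the swap symmetry $\cD \leftrightarrow \cD^{(i)}$ once more to rewrite each summand as $\tfrac{1}{2}\E[g_i(x^{(sol)}_\cD) - g_i(x^{(sol)}_{\cD^{(i)}})]$, which is bounded by $\tfrac{G}{2}\,W_2(\pi_\cD,\pi_{\cD^{(i)}})$ by Kantorovich--Rubinstein duality.

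The crux is then a Wasserstein stability bound $W_2(\pi_\cD,\pi_{\cD^{(i)}}) \le 2G/(n\mu)$. Since $kF(\cdot;\cD)$ is $k\mu$-strongly convex and the perturbation $(k/n)(f(\cdot;s_i) - f(\cdot;s_i'))$ is $(kG/n)$-Lipschitz, I combine Talagrand's $T_2$ inequality for $k\mu$-strongly log-concave measures with the Jensen estimate $\mathrm{KL}(\pi_{\cD^{(i)}} \| \pi_\cD) \le \E_{\pi_\cD}[V'{-}V] - \E_{\pi_{\cD^{(i)}}}[V'{-}V]$ on the ratio of normalizing constants, and close the loop via Kantorovich--Rubinstein. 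This yields each summand $\le G^2/(n\mu)$, so $(\mathrm{I}) \le G^2/(\mu n)$. Combining with $(\mathrm{II})$ finishes the proof. The main obstacle is establishing the tight Wasserstein stability bound; a secondary subtlety is that the hypothesis gives Lipschitz regularity only for \emph{differences} of the $f_i$'s, which is what forces the swap/pivot manipulation above.
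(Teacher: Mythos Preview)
Your proof is correct and follows the same overall architecture as the paper's: split into a stability term and an ERM term, bound the ERM term by $d/k$, express the stability term via a ghost-sample replacement, and control it by $G\cdot W_2(\pi_\cD,\pi_{\cD^{(i)}})$ with $W_2$ bounded through Talagrand's $T_2$ inequality. The two proofs diverge only in how the $W_2$ bound and the Lipschitz reduction are obtained. The paper bounds $\mathrm{KL}(\pi_\cD\|\pi_{\cD'})\le kG^2/(2n^2\mu)$ directly from its isoperimetric/GDP machinery (the corollary to the main privacy theorem), yielding $W_2\le G/(n\mu)$, and then handles the ``only differences are Lipschitz'' issue by subtracting a fixed pivot $f(\cdot;s'')$ whose expectation is equal under $\pi_\cD$ and $\pi_{\cD'}$ by exchangeability. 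You instead bound $\mathrm{KL}$ by the elementary Jensen estimate $\mathrm{KL}\le \E_{\pi_\cD}[V'-V]-\E_{\pi_{\cD^{(i)}}}[V'-V]\le (kG/n)\,W_2$ and bootstrap through $T_2$ to get $W_2\le 2G/(n\mu)$; the lost factor of $2$ is recovered exactly by your swap-symmetrization, which produces the extra $\tfrac12$. Your route has the advantage of being self-contained---it does not invoke the paper's isoperimetric result---while the paper's route gives the sharper pointwise statement $W_2(\pi_\cD,\pi_{\cD'})\le G/(n\mu)$.
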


Considering two neighboring datasets $\cD$ and $\cD'$, our result is based on bounding the Wasserstein distance between the distributions proportional to $e^{-kF(x;\cD)}$ and $e^{-kF(x;\cD')}$, which means the sampling scheme is stable and leads to the $\frac{G^2}{\mu n}$ term in generalization error.
The other term $\frac{d}{k}$ is excess empirical loss of the sampling mechanism.
One advantage of our result is that it works for both smooth and non-smooth functions. 
Moreover, we may choose the value $k$ carefully and get a solution with both optimal empirical loss and optimal population loss.


\subsection{Non-smooth Sampling and DP Convex Optimization }
Implementing the exponential mechanism involves sampling from a log-concave distribution.
When the negative log-density function $F$ is smooth, i.e. the gradient of $F$ is Lipschitz,
there are many efficient algorithms for this sampling tasks
such as \cite{D17,LSV18,MMW+19,CV19,DMM19,shen2019randomized,CDW+20,LST20}.
For example, if $F=\frac{1}{n}\sum_{i=1}^{n}f_{i}$
and each $f_{i}$ is $1$-strongly convex with $\kappa$-Lipschitz
gradient,\footnote{For convenience, we used $f_{i}$ to denote the function $f(\cdot;s_i)$ in this and Section \ref{sec:sampling}.} we can sample $x\sim\exp(-F(x))$ in $\widetilde{O}(n+\kappa\max(d,\sqrt{nd})\log(1/\delta))$
iterations with $\delta$ error in total variation distance and each iteration involves computing one $\nabla f_{i}(x)$ \cite{LST21}.
Note that this is nearly linear time when $n\gg\kappa^2d$ and the $\delta$ error in
total variation distance can be translated to an extra $\delta$ error in the $(\epsilon,\delta)$-DP
guarantee.

\begin{center}
\begin{figure}[ht]
\begin{centering}
\begin{tabular}{|c|c|c|c|}
\hline 
 & Complexity & Oracle & Guarantee\tabularnewline
\hline 
\hline 
\cite{BST14} & $d^{O(1)}$ & $F(x)$ & $\mathrm{D}_{\infty}\leq\epsilon$\tabularnewline
\hline 
\cite{CDJB20} & $G^{O(1)}d^{5/2}/\epsilon^{4}$ & $\nabla F(x)$ & $\mathrm{W}_{2}\leq\delta$\tabularnewline
\hline 
\cite{JLLV21} + \cite{C21} & $d^{3}$ & $F(x)$ & $\mathrm{TV}\leq\delta$\tabularnewline
\hline 
\cite{GT20}& $\frac{\alpha^{2}G^{4}d}{\epsilon^{2}}$ & $\nabla F(x)$ & $\mathrm{D}_{\alpha}\leq\epsilon$\tabularnewline
\hline 
\cite{LC21} & $\frac{G^{2}}{\delta}$ & $\nabla F(x)$ & $\mathrm{TV}\leq\delta$\tabularnewline
\hline 
This & $G^{2}$ & $f_{i}(x)$ & $\mathrm{TV}\leq\delta$\tabularnewline
\hline 
\end{tabular}
\par\end{centering}
\caption{The complexity of sampling from $\exp(-F(x))$ where $F=\frac{1}{n}\sum_{i}f_{i}$
is $1$-strongly convex and $f_{i}$ are $G$-Lipschitz and convex.
For applications in differential privacy, $\epsilon$ is a constant
and $\delta=n^{-\Theta(1)}$. Polylogarithmic terms are omitted. Only the last result uses the summation structure and queries only one $f_{i}$ each step. \label{fig:sample_runtime}}
\end{figure}
\vspace{-7mm}
\par\end{center}

Unfortunately, when the functions $f_i$ are only Lipschitz but not smooth, this problem is more difficult.
In Table \ref{fig:sample_runtime}, we summarize some existing results on this topic. They use different guarantees such as Renyi divergence $\mathrm{D}_{\alpha}$ of order $\alpha$, Wasserstein distance $\mathrm{W}_{2}$ and total variation distance $\mathrm{TV}$ (defined in subsection~\ref{sec:dis_measure}). For applications in differential privacy, we need either polynomially small $\mathrm{W}_{2}$ or $\mathrm{TV}$ distance, or $\epsilon$ small $\mathrm{D}_{\alpha}$ distance.

All previous results for non-smooth function use oracle access to $F$ or $\nabla F$ (instead of $f_i$) and have iterative complexity at least $d$ iterations for $\mathrm{W}_{2}$ or TV distance smaller than $1/d$. Because of this, our algorithm is significantly faster than
the previous algorithms and can handle the case when $F$ is expectation of (infinitely many)  $f_i$ directly.
For example, to get the optimal private empirical loss with typical settings where $\epsilon=\Theta(1)$
and $\delta=1/n^{\Theta(1)}$, the previous best samplers use $\widetilde{O}(n^{4}d)$ many queries to $\nabla f_{i}(x)$ by \cite{GT20} or $\widetilde{O}(nd^{3})$
many queries to $f_{i}(x)$ by combining \cite{JLLV21} and \cite{C21}. 
In comparison, our algorithm only takes $\widetilde{O}(n^{2})$
many $f_{i}(x)$. 

Our result is based on the alternating sampler proposed in \cite{LST21} and a new rejection sampling scheme.

\begin{theorem}
Given a $\mu$-strongly convex function $\psi(x)$ defined on a convex set $\cK \subseteq \R^{d}$ and $+\infty$ outside. Given a family of $G$-Lipschitz convex functions $\{f_{i}(x)\}_{i\in I}$ defined on $\cK$ and an initial point $x_0\in \cK$.
Define the function $\widehat{F}(x)=\E_{i\in I}f_{i}(x)+\psi(x)$ and 
the distance $D=\|x_{0}-x^{*}\|_{2}$ for some $x^{*}=\arg\min_{x\in\cK}\HF(x)$.
For any $\delta\in(0,1/2)$, we can generate a random point $x$ that
has $\delta$ total variation distance to the distribution proportional to $\exp(-\widehat{F}(x))$ in
\[
T:=\Theta\lp\frac{G^{2}}{\mu}\log^{2}\lp\frac{G^{2}(d/\mu+D^{2})}{\delta}\rp\rp\text{ steps}.
\]
Furthermore, each steps accesses only $O(1)$ many $f_{i}(x)$ and samples from $\exp(-\psi(x) - \frac{1}{2\eta} \|x-y\|^2_2)$ for $O(1)$ many $y$
in expectation with $\eta = \Theta(G^{-2}/\log(T/\delta))$.
\end{theorem}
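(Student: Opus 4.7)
The plan is to instantiate the alternating (proximal) sampler of \cite{LST21} on the joint density $\pi(x,y)\propto \exp(-\widehat F(x)-\frac{1}{2\eta}\|x-y\|_2^2)$ and to implement its restricted Gaussian oracle step via a rejection sampler whose proposal is $q_y(x)\propto \exp(-\psi(x)-\frac{1}{2\eta}\|x-y\|_2^2)$, which is available by hypothesis. Since $\widehat F$ is $\mu$-strongly convex, the target $\pi\propto \exp(-\widehat F)$ is $\mu$-strongly log-concave, and the standard proximal-sampler analysis gives $W_2^2$-contraction of rate $(1+\mu\eta)^{-1}$ per outer step. Starting from $x_0$ with $\|x_0-x^*\|_2=D$ (so that the initial $W_2^2$ against $\pi$ is of order $D^2+d/\mu$), it suffices to run
\[
K \;=\; O\!\left(\bigl(1+\tfrac{1}{\mu\eta}\bigr)\log\!\tfrac{D^2+d/\mu}{\delta}\right)
\]
outer iterations for the law of $x_K$ to lie within $\delta/2$ in total variation of $\pi$.

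For the inner step, the density ratio of target over proposal is $\rho_y/q_y\propto \exp(-\E_i f_i(x))$. Because $q_y$ is $(1/\eta)$-strongly log-concave, the Bakry-Emery (Herbst) inequality implies that the $G$-Lipschitz function $\E_i f_i$ is sub-Gaussian under $q_y$ with scale $O(G\sqrt{\eta})$. Choosing $\eta=\Theta(G^{-2}/\log(K/\delta))$ makes this scale $O(1/\sqrt{\log(K/\delta)})$, so on a typical set of $q_y$-mass $1-\delta/\poly(K)$ the ratio $\rho_y/q_y$ oscillates by only a constant factor; rejection sampling restricted to this set therefore has $\Omega(1)$ acceptance probability, and the truncated tail fits within the per-step $\delta/K$ TV budget. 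The quantity $\E_i f_i(x)$ needed in the acceptance rule is replaced by an unbiased randomized estimator built from $O(1)$ random indices $i$, either by Russian-roulette / Poisson-thinning debiasing or by a Bernoulli-factory correction, so that the accept/reject decision remains correct in distribution while querying $O(1)$ values of $f_i(x)$ and drawing $O(1)$ samples from $q_y$ in expectation.

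The main obstacle is the joint tuning of $\eta$: shrinking $\eta$ makes the inner rejection step cheap but worsens the outer contraction rate, and vice versa. The $\log^2$ in the final bound comes from exactly this tension --- the choice $\eta=\Theta(G^{-2}/\log(K/\delta))$ gives simultaneously $K=O((G^2/\mu)\log^2((D^2+d/\mu)/\delta))$ outer iterations and $O(1)$ expected inner queries per iteration. A secondary issue is that each inner sampler only reaches TV accuracy $O(\delta/K)$, so I would close the proof by a standard TV-triangle-inequality argument along the $K$ approximate proximal steps, ensuring these errors accumulate to at most $\delta/2$, which combined with the $\delta/2$ outer mixing error yields the claimed $\delta$-TV guarantee.
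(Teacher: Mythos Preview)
Your proposal is correct and follows essentially the same route as the paper: the alternating/proximal sampler of \cite{LST21} for the outer loop, rejection sampling against $q_y\propto\exp(-\psi(x)-\tfrac{1}{2\eta}\|x-y\|_2^2)$ for the inner step, Gaussian concentration of the $G$-Lipschitz log-ratio under the $(1/\eta)$-strongly log-concave proposal, and a Russian-roulette/Poisson-type unbiased estimator for the exponential using $O(1)$ queries. The one concrete detail the paper supplies beyond your sketch is how the debiasing is centered: it draws an auxiliary sample $z\sim q_y$ and builds an unbiased estimate of $\exp\bigl(\E_i(f_i(z)-f_i(x))\bigr)$ via a randomly-truncated Taylor series in the differences $f_{j}(z)-f_{j}(x)$, which are $O(G\sqrt{\eta})$ with high probability---this centering is what keeps the series small and makes the $O(1)$-query estimator work, so when you flesh out ``Russian-roulette / Poisson-thinning debiasing'' be sure to include it.
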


\section{Preliminaries}

\subsection{Differential Privacy}

A DP algorithm $\M$ usually satisfies a collection of $(\eps,\delta)$-DP guarantees for each $\epsilon$, i.e., for each $\epsilon$ there exists some smallest $\delta$ for which $\M$ is $(\eps,\delta)$-DP. By collecting all of them together, we can form the privacy curve or privacy profile which fully characterizes the privacy of a DP algorithm.
\begin{definition}[Privacy Curve]
Given two random variables $X,Y$ supported on some set $\Omega$, define the privacy curve $\delta(X\|Y):\R_{\geq 0}\rightarrow [0,1]$ as:
\begin{align*}
    \delta(X\|Y)(\epsilon)=\sup_{{S}\subset \Omega} \Pr[Y\in {S}]-e^{\epsilon}\Pr[X\in {S}].
\end{align*}
\end{definition}

One can explicitly calculate the privacy curve of a Gaussian mechanism as
\begin{equation}
\label{eqn:Gaussian_privacycurve}
   \deltacurve{\cN(0,1)}{\cN(s,1)}(\eps)= \Phi\lp -\frac{\eps}{s}+\frac{s}{2} \rp - e^\eps \Phi\lp -\frac{\eps}{s}-\frac{s}{2} \rp 
\end{equation}
 where $\Phi(\cdot)$ is the Gaussian cumulative distribution function (CDF)~\cite{BalleW18}. 

We say a differentially private mechanism $\M$ has privacy curve $\delta:\R_{\ge0}\rightarrow[0,1]$ if for every $\epsilon\geq0$, $\M$ is $(\epsilon,\delta(\epsilon))$-differentially private, i.e., $\delta(\M(\cD)\| \M(\cD'))(\epsilon)\leq \delta(\eps)$ for all neighbouring databases $\cD,\cD'$.
We will also need the notion of tradeoff function introduced in~\cite{dong2019gaussian} which is an equivalent way to describe the privacy curve $\delta(P\|Q)$.
\begin{definition}[Tradeoff function]
Given two (continuous) distributions $P,Q$, we define the trade-off function\footnote{Tradeoff curves in~\cite{dong2019gaussian} are defined using type I and type II errors. The definition given here is equivalent to their definition for continuous distributions.} $T(P\|Q):[0,1]\to [0,1]$ as $$T(P\|Q)(z)= \inf_{S:P(S)=1-z}Q(S).$$

It is easy to compute explicitly the tradeoff function for Gaussian mechanism~\cite{dong2019gaussian},
\begin{equation}
    \label{eqn:gaussian_tradeoff}
    T(\cN(0,1)\|\cN(s,1))(z)=\Phi(\Phi^{-1}(1-z)-s).
\end{equation}
Note that perfect privacy is equivalent to the tradeoff function $\mathrm{Id}(z)=1-z$ and the closer a tradeoff function is to $\mathrm{Id}$, better the privacy. The tradeoff function $T(P\|Q)$ and the privacy curve $\delta(P\|Q)$ are related via convex duality. Therefore to compare privacy curves, it is enough to compare tradeoff curves.

\begin{proposition}[\cite{dong2019gaussian}]
\label{prop:delta_tradeoff}
$\delta(P\|Q)\le \delta(P'\|Q')$ iff $T(P\|Q)\ge T(P'\|Q')$
\end{proposition}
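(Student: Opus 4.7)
The plan is to recast the proposition as a routine convex-duality statement by expressing the privacy curve as a Legendre-type transform of the tradeoff function.

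\textbf{Step 1 (duality formula).} I would first establish the identity
\[
\delta(P\|Q)(\epsilon) \;=\; 1 \;-\; \inf_{z \in [0,1]}\bigl(T(P\|Q)(z) + e^\epsilon z\bigr).
\]
Starting from $\delta(P\|Q)(\epsilon) = \sup_S (Q(S) - e^\epsilon P(S))$, I would parametrize sets by $z = P(S)$. Allowing randomized tests makes the achievable region $\{(P(S), Q(S))\}\subseteq [0,1]^2$ convex, and complementing the optimal set in the definition of $T$ gives $\sup\{Q(S) : P(S) = z\} = 1 - T(P\|Q)(z)$. Substituting and rearranging yields the claim.

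\textbf{Step 2 (basic properties of $T$).} I would record that $T := T(P\|Q)$ is convex, continuous, and non-increasing on $[0,1]$ with $T(0) = 1$ and $T(1) = 0$; all of these follow from convexity of the achievable type I/type II region together with taking complements.

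\textbf{Step 3 (forward direction).} If $T(P\|Q)(z) \ge T(P'\|Q')(z)$ for all $z$, then $T(P\|Q)(z) + e^\epsilon z \ge T(P'\|Q')(z) + e^\epsilon z$ pointwise, so taking infima preserves the inequality and Step 1 immediately gives $\delta(P\|Q)(\epsilon) \le \delta(P'\|Q')(\epsilon)$ for every $\epsilon \ge 0$.

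\textbf{Step 4 (reverse direction via supporting hyperplanes).} Assume $\delta(P\|Q) \le \delta(P'\|Q')$ pointwise. Fix $z_0 \in (0,1)$. Because $T(P'\|Q')$ is convex and non-increasing, it admits a supporting line at $z_0$ of slope $-\lambda$ for some $\lambda \ge 0$, i.e.\ $T(P'\|Q')(z) + \lambda z \ge T(P'\|Q')(z_0) + \lambda z_0$ for all $z$, with equality at $z_0$. Setting $\epsilon = \log \lambda$ and chaining through Step 1 and the hypothesis,
\[
T(P\|Q)(z_0) + \lambda z_0 \;\ge\; \inf_z\bigl(T(P\|Q)(z)+\lambda z\bigr) = 1 - \delta(P\|Q)(\epsilon) \;\ge\; 1 - \delta(P'\|Q')(\epsilon) = T(P'\|Q')(z_0) + \lambda z_0,
\]
so $T(P\|Q)(z_0) \ge T(P'\|Q')(z_0)$. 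The boundary values $z_0 \in \{0,1\}$ are forced by Step 2.

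\textbf{Main obstacle.} The supporting slope $\lambda$ chosen in Step 4 can fall in $[0,1)$, which corresponds to $\epsilon < 0$, while the paper indexes privacy curves by $\epsilon \ge 0$. I would handle this either by extending the definition of $\delta(\cdot)$ to all real $\epsilon$ (the supremum in the definition is still well-posed and Step 1 carries over verbatim), or by running the same duality argument on the swapped pair $(Q, P)$ and using the reflection symmetry between $T(P\|Q)$ and $T(Q\|P)$ to recover the missing slope range.
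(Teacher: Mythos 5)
The paper does not actually prove this proposition; it is imported from \cite{dong2019gaussian} as a black box, and the only direction ever used downstream (deducing Theorem 2.1 from Theorem 4.1) is $T(P\|Q)\ge T(P'\|Q')\Rightarrow \delta(P\|Q)\le\delta(P'\|Q')$. Your Steps 1--3 give a correct, self-contained proof of exactly that direction: the duality identity $\delta(P\|Q)(\epsilon)=1-\inf_{z}\bigl(T(P\|Q)(z)+e^{\epsilon}z\bigr)$ follows from the complementation argument you describe (for continuous distributions, which is the regime the paper's footnote restricts to), and monotonicity of the pointwise infimum does the rest. One small slip in Step 2: $T(0)=1$ is false in general (if $P$ and $Q$ have disjoint supports then $T\equiv 0$); only $T(1)=0$ is automatic. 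This does not affect Steps 1 and 3.

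The ``main obstacle'' you flag in Step 4 is not a removable technicality: under the paper's literal definition of the privacy curve (one-sided, with domain $\epsilon\ge 0$), the converse direction is actually \emph{false}. Take mutually singular probability measures $A,B,C,R$ and a measure $P$ singular to $R$, and set $P_1=\tfrac12 A+\tfrac12 C$, $Q_1=\tfrac12 B+\tfrac12 C$, versus $P_2=P$, $Q_2=\tfrac12 P+\tfrac12 R$. A direct computation gives $T(P_1\|Q_1)(z)=\max(\tfrac12-z,0)$ and $T(P_2\|Q_2)(z)=\tfrac12(1-z)$, while $\delta(P_1\|Q_1)(\epsilon)=\delta(P_2\|Q_2)(\epsilon)=\tfrac12$ for every $\epsilon\ge 0$. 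So $\delta(P_1\|Q_1)\le\delta(P_2\|Q_2)$ holds, yet $T(P_1\|Q_1)(\tfrac12)=0<\tfrac14=T(P_2\|Q_2)(\tfrac12)$, violating the claimed equivalence. Your first proposed fix---extending $\delta(\cdot)$ to all $\epsilon\in\R$---does make the supporting-hyperplane argument of Step 4 go through (slopes in $[0,1)$ then correspond to $\epsilon<0$, and a supporting slope equal to $0$ forces $T'(z_0)=0$, which is trivial), but it proves a genuinely different statement, which is how \cite{dong2019gaussian} actually formulate the duality. Your second fix---swapping to $(Q,P)$ and using reflection---does not work as stated, because the hypothesis $\delta(P\|Q)\le\delta(P'\|Q')$ gives no control on $\delta(Q\|P)$ versus $\delta(Q'\|P')$. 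In short: keep Steps 1--3 as a clean proof of the direction the paper needs, and either restate the converse with $\epsilon$ ranging over $\R$ (or with symmetrized curves) or drop it.
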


\end{definition}

\subsection{Optimization}
Here we collect some properties of functions which are useful for optimization and sampling.
\begin{definition}[$L$-Lipschitz Continuity]
A function $f:{\cal K}\rightarrow \R$ is $L$-Lipschitz continuous over the domain ${\cal K}\subset \R^{d}$ if the following holds for all $\omega,\omega'\in {\cal K}:|f(\omega)-f(\omega')|\leq L\|\omega-\omega'\|_2$. 
\end{definition}


\begin{definition}[$\mu$-Strongly convex]
A differentiable function $f:\cK\rightarrow \R$ is called strongly convex with parameter $\mu>0$ if ${\cal K}\subset \R^{d}$ is convex and the following inequality holds for all points $\omega,\omega'\in {\cal K}$,
\[
f(\omega')\geq f(\omega) +\inpro{\nabla f(\omega)}{\omega'-\omega}+\frac{\mu}{2}\|\omega'-\omega\|_2^2.
\]
\end{definition}

\begin{definition}[Log-concave measure and density]
A density function $f:\cK \rightarrow \R_{\geq 0}$ is log-concave if $\int_{\cK} f(x) dx = 1$ and $f(x) = \exp(-F(x))$ for some convex function $F$. We call $f$ is $\mu$-strongly log-concave if $F$ is $\mu$-strongly convex. Similarly, we call $\pi$ a log-concave measure if its density function is log-concave, and we call $\pi$ is a $\mu$-strongly log-concave measure if its density function is $\mu$-strongly log-concave.
\end{definition}

\subsection{Distribution Distance and Divergence}
We present some distribution distances or divergences mentioned or used in this work.
\label{sec:dis_measure}
\begin{definition}{\cite[R{\'e}nyi Divergence]{Ren61}}
Suppose $1<\alpha<\infty$ and $\pi,\nu$ are measures with $\pi\ll\nu$.
The R{\'e}nyi divergence of order $\alpha$ between $\pi$ and $\nu$ is defined as
\begin{align*}
    \mathrm{D}_{\alpha}(\pi\|\nu)=\frac{1}{\alpha}\log\int\lp\frac{\pi(x)}{\nu(x)}\rp^{\alpha}\nu(x)\d x.
\end{align*}
We follow the convention that $\frac{0}{0}=0$. R{\'e}nyi Divergence of orders $\alpha=1,\infty$ are defined by continuity.
For $\alpha=1$, the limit in R{\'e}nyi Divergence equals to the Kullback-Leibler divergence of $\pi$ from $\nu$, which is defined as following:
\end{definition}

\begin{definition}[Kullback–Leibler divergence]
The Kullback–Leibler divergence between probability measures $\pi$ and $\nu$ is defined by 
\begin{align*}
    \mathrm{D}_{KL}(\pi\|\nu)=\int  \log\lp\frac{\pi}{\nu}\rp\d \pi.
\end{align*}
\end{definition}

\begin{definition}[Wasserstein distance]
Let $\pi,\nu$ be two probability distributions on $\R^d$.
The second Wasserstein distance $\mathrm{W}_{2}$ between $\pi$ and $\nu$ is defined by
\begin{align*}
    \mathrm{W}_2(\pi,\nu)=\big( \inf_{\gamma\in\Gamma(\pi,\nu)}\int_{\R^d\times\R^d}\|x-y\|_2^2\d \gamma(x,y) \big)^{1/2},
\end{align*}
where $\Gamma(\pi,\nu)$ is the set of all couplings of $\pi$ and $\nu$.
\end{definition}

\begin{definition}[Total variation distance]
The total variation distance between two probability measures $\pi$ and $\nu$ on a sigma-algebra $\mathcal{F}$ of subsets of the sample space $\Omega$ is defined via
\begin{align*}
    \mathrm{TV}(\pi,\nu)=\sup_{S\in \mathcal{F}}|\pi(S)-\nu(S)|.
\end{align*}
\end{definition}

\subsection{Isoperimetric Inequality for Strongly Log-concave Distributions}
The cumulative distribution function (CDF) of one-dimensional standard Gaussian distribution will be denoted by $\Phi(x)=\Pr_{y\sim\cN(0,1)}[y\le x]$. The following Lemma relates the expanding property of log-concave measures with $\Phi$.
\begin{proposition}[Theorem 1.1. in \cite{Led99}]
\label{prop:isoperimetry}
Let $\pi$ be a $\mu$-strongly log-concave measure supported on a convex set $\cK \subseteq \R^d$. Let $A\subset\cK$ by any subset such that $\pi(A)=z$. For any point $x\in\R^d$, define $d(x,A)=\inf_{y\in A}\|x-y\|_2$. Let $A_r = \lc x: d(x,A)\le r\rc$. Then if $A_r\subseteq\cK$, for every $r\ge 0$, $$\pi(A_r) \ge \Phi(\Phi^{-1}(z)+r\sqrt{\mu}).$$
\end{proposition}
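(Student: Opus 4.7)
The plan is to deduce the inequality from the classical Gaussian isoperimetric inequality of Borell and Sudakov--Tsirelson by transporting the problem to a scaled Gaussian via a $1$-Lipschitz map. The main input is Caffarelli's contraction theorem: if $\pi\propto e^{-V}$ is $\mu$-strongly log-concave on all of $\R^d$, then the Brenier optimal transport map $T$ pushing the scaled Gaussian $\gamma_\sigma:=\cN(0,\mu^{-1}I_d)$ forward to $\pi$ is $1$-Lipschitz. Given $T$, the set inequality of the proposition becomes a consequence of the sharp Gaussian bound applied to the pre-image.

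Concretely, fix a measurable $A\subseteq \cK$ with $\pi(A)=z$ and set $B:=T^{-1}(A)$, so $\gamma_\sigma(B)=z$. If $y\in B_r$, there is $y'\in B$ with $\|y-y'\|_2\le r$; the $1$-Lipschitz property of $T$ gives $\|T(y)-T(y')\|_2\le r$, and since $T(y')\in A$ this forces $T(y)\in A_r$. Hence $T^{-1}(A_r)\supseteq B_r$, and pushing $\gamma_\sigma$ forward by $T$ yields
$$\pi(A_r)=\gamma_\sigma\bigl(T^{-1}(A_r)\bigr)\ \ge\ \gamma_\sigma(B_r)\ \ge\ \Phi\bigl(\Phi^{-1}(z)+r/\sigma\bigr)\ =\ \Phi\bigl(\Phi^{-1}(z)+r\sqrt{\mu}\bigr),$$
where the penultimate step is the classical Gaussian isoperimetric inequality for $\gamma_\sigma$, whose scale is $\sigma=1/\sqrt{\mu}$. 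This is exactly the bound the proposition asserts.

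The main obstacle is that the proposition allows $\pi$ to be supported on a proper convex subset $\cK\subsetneq \R^d$, so that $V=-\log\pi$ takes the value $+\infty$ outside $\cK$; this situation is strictly outside the scope of Caffarelli's theorem as usually stated. The standard remedy is an approximation argument: one replaces $V$ by a sequence $V_k$ of smooth, finite, $\mu$-strongly convex potentials on $\R^d$ (e.g.\ by adding a steep, smoothly mollified barrier that concentrates mass inside $\cK$), applies Caffarelli's theorem to each $V_k$ to obtain $1$-Lipschitz transport maps $T_k$, and passes to the limit as the barrier steepens; the hypothesis $A_r\subseteq \cK$ in the statement is what ensures that the Minkowski-enlargement step survives the limit without picking up boundary artefacts from the barriers. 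An alternative route, which is the one followed in Ledoux's original paper, bypasses transport altogether: one derives Bobkov's functional form of Gaussian isoperimetry from the Bakry--Emery gradient contraction $\|\nabla P_t f\|_2\le e^{-\mu t}P_t\|\nabla f\|_2$ for the Langevin semigroup of $\pi$, applies it to a smooth approximation $f_\eps$ of $\mathbf{1}_A$, and lets $t\to\infty$, recovering the set-isoperimetric bound directly in the constrained setting.
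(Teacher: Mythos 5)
The paper does not actually prove this proposition; it imports it verbatim as Theorem~1.1 of Ledoux's paper, whose own derivation goes through the semigroup route you mention at the end (Bakry--\'Emery gradient commutation $\|\nabla P_t f\|_2\le e^{-\mu t}P_t\|\nabla f\|_2$, fed into Bobkov's functional form of Gaussian isoperimetry and then specialized to indicators). Your primary argument --- Caffarelli's contraction theorem to get a $1$-Lipschitz Brenier map $T$ from $\cN(0,\mu^{-1}I_d)$ onto $\pi$, followed by the Sudakov--Tsirelson/Borell inequality for the source Gaussian and the observation that $B_r\subseteq T^{-1}(A_r)$ --- is a genuinely different and correct derivation, and the scaling $r/\sigma=r\sqrt{\mu}$ is handled correctly. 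What each route buys: the transport argument is shorter and conceptually transparent (isoperimetry is literally inherited through a contraction), but it leans on Caffarelli's theorem, which in its classical form assumes a finite, smooth, uniformly convex potential on all of $\R^d$; the measure here has $V=+\infty$ off a proper convex body $\cK$, so you genuinely need either the extended versions of Caffarelli (target of the form $e^{-W}\,d\gamma_\sigma$ with $W$ convex and $(-\infty,+\infty]$-valued) or the mollified-barrier approximation you sketch, and that limit should be carried out with some care (stability of Brenier maps under weak convergence plus Arzel\`a--Ascoli for the uniformly Lipschitz $T_k$). The semigroup proof avoids transport regularity entirely and works directly in the constrained setting, at the cost of the functional-inequality machinery. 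Since you correctly identify both routes and the one technical obstruction in yours, I would call the proposal correct modulo making the approximation step rigorous.
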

The property above implies the concentration of Lipschitz functions over log-concave measures.
\begin{corollary}
\label{cor:isoperimetry_lip}
Let $\pi$ be a $\mu$-strongly log-concave measure supported on a convex set $\cK\subseteq\R^d$. Suppose $\alpha:\cK\to \R$ is $G$-Lipschitz. For $z\in [0,1]$, define $m(z)\in \R$ such that $\Pr_{x\sim \pi}[\alpha(x)\le m(z)]=z$. Then for every $r\ge 0$,
$$\Pr_{x\sim \pi}[\alpha(x) \ge m(z)+r] \le \Phi\lp \Phi^{-1}(1-z)-\frac{r\sqrt{\mu}}{G}\rp,$$
$$\Pr_{x\sim \pi}[\alpha(x) \le m(z)-r] \le \Phi\lp \Phi^{-1}(z)-\frac{r\sqrt{\mu}}{G}\rp.$$
\end{corollary}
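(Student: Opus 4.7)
The plan is to reduce both tail bounds directly to the isoperimetric inequality (Proposition~\ref{prop:isoperimetry}) by choosing the right sub/super-level set of $\alpha$ and exploiting the $G$-Lipschitz property to relate Euclidean enlargements to level sets of $\alpha$.

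For the first inequality, I would set $A := \{x\in\cK : \alpha(x)\le m(z)\}$, so that $\pi(A)=z$ by definition of $m(z)$. The key observation is that if $y$ lies within Euclidean distance $r$ of some point $x\in A$, then by the $G$-Lipschitz property,
\[
\alpha(y) \le \alpha(x) + G\|y-x\|_2 \le m(z) + Gr,
\]
so $A_r \cap \cK \subseteq \{y\in\cK : \alpha(y)\le m(z)+Gr\}$. Applying Proposition~\ref{prop:isoperimetry} to $A$ gives $\pi(A_r)\ge \Phi(\Phi^{-1}(z)+r\sqrt{\mu})$, hence
\[
\Pr_{x\sim\pi}[\alpha(x)\le m(z)+Gr] \;\ge\; \Phi(\Phi^{-1}(z)+r\sqrt{\mu}).
\]
Taking complements and using the identities $1-\Phi(t)=\Phi(-t)$ and $-\Phi^{-1}(z)=\Phi^{-1}(1-z)$, then rescaling $s=Gr$, yields the first bound.

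For the second inequality, I would run the symmetric argument on the super-level set $B := \{x\in\cK : \alpha(x)\ge m(z)\}$, which satisfies $\pi(B)\ge 1-z$. Lipschitzness now gives $B_r\cap \cK\subseteq \{y\in\cK : \alpha(y)\ge m(z)-Gr\}$, and the isoperimetric inequality applied to $B$ yields $\pi(B_r)\ge \Phi(\Phi^{-1}(1-z)+r\sqrt{\mu})$. Taking complements, using $\Phi^{-1}(1-z)=-\Phi^{-1}(z)$, and substituting $s=Gr$ produces the stated bound. Equivalently, one could apply the first inequality to the $G$-Lipschitz function $-\alpha$ with parameter $1-z$ and observe that its $(1-z)$-quantile equals $-m(z)$.

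The only mild wrinkle I anticipate is the hypothesis ``$A_r\subseteq \cK$'' that appears in Proposition~\ref{prop:isoperimetry}: $A\subseteq \cK$ but its Euclidean $r$-neighborhood in $\R^d$ may protrude outside $\cK$. This is not a real obstacle, because $\pi$ is supported on $\cK$, so $\pi(A_r)=\pi(A_r\cap \cK)$ and the strongly log-concave isoperimetric inequality applies in the intrinsic geometry of $\cK$; alternatively, one can replace $A_r$ by $A_r\cap\cK$ throughout without changing any of the measures that appear. A second minor point is that $m(z)$ need not be unique if $\alpha_{\#}\pi$ has atoms; taking $m(z)$ as any value with $\Pr[\alpha\le m(z)]=z$ (or the left-continuous quantile) is consistent with the argument above, and both displayed inequalities survive since the isoperimetric step only uses $\pi(A)\ge z$ and $\pi(B)\ge 1-z$.
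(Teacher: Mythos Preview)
Your proposal is correct and follows essentially the same approach as the paper: define the sub-level set $A=\{\alpha\le m(z)\}$, use the $G$-Lipschitz property to compare its Euclidean enlargement with a larger sub-level set of $\alpha$, apply Proposition~\ref{prop:isoperimetry}, take complements, and then obtain the second bound by symmetry (the paper simply applies the first inequality to $-\alpha$, which is the second alternative you mention). The only cosmetic difference is that the paper places the scaling on the enlargement radius (working with $A_{r/G}$) rather than on the level (working with $m(z)+Gr$), but this is the same computation.
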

\begin{proof}
Fix some $z\in [0,1]$. Let $A=\{x\in \cK:\alpha(x)\le m(z)\}$, so $\pi(A)=z$. Let $A_r=\{x:d(x,A)\le r\}.$ Since $\alpha$ is $G$-Lipschitz, $\alpha(x)\ge m(z)+r$ implies that $d(x,A)\ge r/G.$ Therefore $\lc x: \alpha(x) \ge m(z)+r\rc \subset \lc x: d(x,A) \ge r/G\rc = \overline{A_{r/G}}$ and so 
\begin{align*}
\Pr_{x\sim \pi}[\alpha(x) \ge m(z)+r] &\le \pi(\overline{A_{r/G}}) \\
&= 1- \pi(A_{r/G})\\
&\le 1 - \Phi\lp\Phi^{-1}(z)+\frac{r\sqrt{\mu}}{G}\rp\\
&= \Phi\lp-\Phi^{-1}(z)-\frac{r\sqrt{\mu}}{G}\rp.
\end{align*}
We obtain the other inequality by applying the above inequality to $-\alpha(x).$
\end{proof}




\section{GDP of Regularized Exponential Mechanism}
In this section, we prove our DP result (Theorem~\ref{thm:privacy_technical}). 
The proof uses the isoperimetric inequality for strongly log-concave measures~\cite{Led99}. Intuitively, the privacy loss random variable will be $G$-Lipschitz under the hypothesis and isoperimetric inequality implies that any Lipschitz function will be as concentrated as a Gaussian with appropriate standard deviation. This allows us compare the privacy curve $\deltacurve{P}{Q}$ to that of a Gaussian mechanism. In our proof, it is actually more convenient to compare tradeoff curves ($\tradeoff{P}{Q}$) which are equivalent to privacy curves via convex duality (Proposition~\ref{prop:delta_tradeoff} and Theorem~\ref{thm:privacy_technical}).


\begin{theorem}
\label{thm:privacy_technical_tradeoff}
Given convex set $\cK\subseteq \R^d$ and $\mu$-strongly convex functions $F,\Tilde{F}$ over $\cK$. Let $P,Q$ be distributions over $\cK$ such that $P(x)\propto e^{-F(x)}$ and $Q(x)\propto e^{-\Tilde{F}(x)}$.
If $\Tilde{F}-F$ is $G$-Lipschitz over $\cK$, then for all $z\in[0,1]$,
\begin{align*}
    \tradeoff{P}{Q}(z) 
    \ge \tradeoff{\cN\lp 0,1\rp}{\cN\lp\frac{G}{\sqrt{\mu}},1\rp}(z).
\end{align*}
\end{theorem}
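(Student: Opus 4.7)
The plan is to reduce the comparison of $P,Q$ on $\R^d$ to a one-dimensional problem via a sufficient-statistic argument, build a $1$-Lipschitz coupling between the $\alpha$-pushforward of $P$ and $\cN(0,\sigma^2)$ using Corollary~\ref{cor:isoperimetry_lip}, and close via a stochastic-dominance computation. First, since $dQ/dP$ depends on $x$ only through $\alpha(x):=\tilde F(x)-F(x)$, every Neyman--Pearson-optimal rejection region for $(P,Q)$ is a level set of $\alpha$, so $\tradeoff{P}{Q}(z)=\tradeoff{\tilde P}{\tilde Q}(z)$ where $\tilde P,\tilde Q$ are the $\alpha$-pushforwards, and moreover $\tilde Q(t)=e^{-t}\tilde P(t)/Z$ is an exponential tilt. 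Writing $G_P,\phi_P$ for the CDF and quantile of $\tilde P$ and setting $\sigma:=G/\sqrt\mu$, define
\begin{equation*}
T_P(x)\;:=\;\phi_P(\Phi(x/\sigma)),
\end{equation*}
which by construction pushes $\cN(0,\sigma^2)$ onto $\tilde P$. The key analytic step is to rewrite Corollary~\ref{cor:isoperimetry_lip} as $G_P(t+r)\ge \Phi(\Phi^{-1}(G_P(t))+r/\sigma)$ for $r\ge 0$; this is equivalent to $T_P'(x)\le 1$ along the coupling, so $T_P$ is non-decreasing and $1$-Lipschitz, and $w(x):=T_P(x)-x$ is non-increasing.

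Next I would pull $\tilde Q$ back through $T_P$ and identify the result as a monotone reweighting of a Gaussian. From the tilt relation and $G_P(T_P(x))=\Phi(x/\sigma)$,
\begin{equation*}
R(x)\;:=\;\tilde Q(T_P(x))\,T_P'(x)\;=\;\tfrac{1}{Z\sigma}\,e^{-T_P(x)}\,\Phi'(x/\sigma)\;\propto\;e^{-w(x)}\cdot \rho_0(x),
\end{equation*}
where $\rho_0$ is the density of $\cN(-\sigma^2,\sigma^2)$ (using $e^{-x}\Phi'(x/\sigma)\propto \rho_0(x)$, by completing the square). Since $w$ is non-increasing the weight $e^{-w(x)}$ is non-decreasing in $x$, and the standard monotone-tilt lemma yields $R\succeq_{st}\cN(-\sigma^2,\sigma^2)$: concretely, $R(X\ge c)/\rho_0(X\ge c)$ equals the conditional $\rho_0$-average of $e^{-w(X)}/Z'$ over the upper tail, which is at least its overall $\rho_0$-average, namely $1$.

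Finally I would close by Neyman--Pearson. Both likelihood ratios $R/\cN(0,\sigma^2)\propto e^{-T_P(x)}$ and $\cN(-\sigma^2,\sigma^2)/\cN(0,\sigma^2)\propto e^{-x}$ are non-increasing in $x$, so the NP-optimal tests for $\tradeoff{\cN(0,\sigma^2)}{\cdot}$ at level $z$ are both the threshold set $\{x\ge \sigma\Phi^{-1}(z)\}$. Combining bijectivity of $T_P$ (so $\tradeoff{\tilde P}{\tilde Q}=\tradeoff{\cN(0,\sigma^2)}{R}$) with the stochastic dominance,
\begin{equation*}
\tradeoff{\tilde P}{\tilde Q}(z)\;=\;R(X\ge \sigma\Phi^{-1}(z))\;\ge\;\cN(-\sigma^2,\sigma^2)(X\ge \sigma\Phi^{-1}(z))\;=\;\tradeoff{\cN(0,1)}{\cN(\sigma,1)}(z),
\end{equation*}
the last equality by translation/scale invariance of the tradeoff function. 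The main technical point is the bound $T_P'\le 1$: this is the clean form of the isoperimetric inequality (the $\tilde P$-quantile cannot grow faster than a Gaussian quantile) and is exactly where $\mu$-strong log-concavity of $P$ enters. The minor regularity worry of $T_P$ failing to be strictly monotone (if $\tilde P$ had gaps in its support) is handled routinely at the CDF level or by a density approximation.
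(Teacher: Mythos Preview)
Your argument is correct and takes a genuinely different route from the paper's. Both proofs rest on the same Gaussian isoperimetric inequality for strongly log-concave measures (Corollary~\ref{cor:isoperimetry_lip}) and both reduce via Neyman--Pearson to level sets of $\alpha=\tilde F-F$, but the way the isoperimetry is exploited differs. The paper writes
\[
T(P\|Q)(z)=\Bigl(1+\tfrac{\E_P[e^{-\alpha}\Ind_{\bar S}]}{\E_P[e^{-\alpha}\Ind_{S}]}\Bigr)^{-1},
\]
bounds numerator and denominator separately by integration by parts against the tail bound $\Pr_P[\alpha\ge m(z)+r]\le \Phi(\Phi^{-1}(1-z)-r/\gamma)$, and then evaluates the resulting integrals $\int_0^\infty e^{\pm t}\Phi(a-t/\gamma)\,dt$ in closed form. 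Your approach instead repackages the isoperimetric inequality as the single structural fact that the increasing rearrangement map $T_P=\phi_P\circ\Phi(\cdot/\sigma)$ from $\cN(0,\sigma^2)$ to $\tilde P$ is $1$-Lipschitz; this immediately makes $e^{-w}$ monotone, which yields the stochastic dominance $R\succeq_{st}\cN(-\sigma^2,\sigma^2)$ and finishes the comparison without any explicit integration.

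What each buys: the paper's computation is self-contained and elementary, and the explicit integral identities it records may be reusable. Your transport/stochastic-dominance argument is shorter and more conceptual: it makes transparent \emph{why} the Gaussian tradeoff is the extremal case (the quantile coupling cannot expand distances), and the monotone-tilt step would adapt verbatim to other exponential tilts. The only loose thread you already flagged is the case where $\tilde P$ has atoms or gaps so that $T_P$ is merely non-decreasing rather than a bijection; as you note, this is routine to handle either by working at the CDF level (the optimal tests on both sides are upper half-lines with the same threshold $\sigma\Phi^{-1}(z)$, and the change of variables $\int_{t_0}^\infty e^{-t}\,d\tilde P(t)=\int_{x_0}^\infty e^{-T_P(x)}\,d\cN(0,\sigma^2)(x)$ needs only monotonicity of $T_P$) or by a density approximation.
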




\begin{proof}
Let $\gamma=G/\sqrt{\mu}.$
Let $\alpha(x)=\Tilde{F}(x)-F(x)$ so that $Q(x)\propto e^{-\alpha(x)} P(x)$. Recall that we have $T(P\|Q)(z)=\inf_{S:P(S)=1-z} Q(S)$. Note that the infimum is achieved when we choose 
$S=\lc x\in\cK: \alpha(x) \ge m(z)\rc$ for some $m(z)$ chosen such that 
$P(S)=\Pr_{x\sim P}[\alpha(x)\ge m(z)]=1-z$ (Neyman-Pearson lemma). Therefore:
\begin{align*}
T(P\|Q)(z)&= \int_{x\in S} Q(x) \d x\\
&= \frac{\int_{x\in S} e^{-\alpha(x)} P(x) \d x}{\int_{x\in\cK} e^{-\alpha(x)} P(x) \d x}\\
&= \lp 1+\frac{\E_P[e^{-\alpha}\Ind_{\barS}]}{\E_P[e^{-\alpha}\Ind_S]}\rp^{-1}
\end{align*}
We will now lower bound $\E_P[e^{-\alpha}\Ind_S]$. Let the random variable $Y=\alpha(x)$ where $x\sim P.$ Let $f_Y(\cdot)$ be the PDF of $Y$.
\begin{align*}
\E_P[e^{-\alpha(x)}\Ind_S]&=\int_{x: \alpha(x)\ge m(z)}e^{-\alpha(x)}P(x) \d x=\E[e^{-Y}\Ind(Y\ge m(z))]=\int_{m(z)}^\infty e^{-t} f_Y(t) dt\\
&= \int_{t=0}^\infty e^{-t-m(z)} \lp -\frac{\d\Pr_{x\sim P}\lb\alpha(x)\ge t+m(z)\rb}{\d t}\rp \d t\\
&=  e^{-m(z)}\lp \left.-e^{-t}\Pr_{x\sim P}\lb\alpha(x)\ge t+m(z)\rb\right\vert_{0}^\infty - \int_{t=0}^\infty e^{-t} \Pr_{x\sim P}\lb\alpha(x)\ge t+m(z)\rb \d t\rp\\
&=  (1-z)e^{-m(z)} - e^{-m(z)}\int_{t=0}^\infty e^{-t} \Pr_{x\sim P}\lb\alpha(x)\ge t+m(z)\rb \d t\\
&\ge (1-z)e^{-m(z)} - e^{-m(z)}\int_{t=0}^\infty e^{-t} \Phi(\Phi^{-1}(1-z)-t/\gamma) \d t \tag{Corollary~\ref{cor:isoperimetry_lip}}\\
&= (1-z)e^{-m(z)} - e^{-m(z)}\lp (1-z) - \exp\lp\frac{\gamma^2}{2}-\Phi^{-1}(1-z)\gamma\rp \Phi(\Phi^{-1}(1-z)-\gamma) \rp \tag{Claim \ref{claim:some_gaussian_integrals}}\\
&= \exp\lp\frac{\gamma^2}{2}+\Phi^{-1}(z)\gamma-m(z)\rp \Phi(-\Phi^{-1}(z)-\gamma)
\end{align*}


We will now upper bound $\E_P[e^{-\alpha}\Ind_{\barS}]$ in a similar way.
\begin{align*}
\E_P[e^{-\alpha(x)}\Ind_{\barS}]&=\int_{x: \alpha(x)\le m(z)}e^{-\alpha(x)}P(x) \d x\\
&= \int_{t=0}^{\infty} e^{-m(z)+t} \lp -\frac{d\Pr_{x\sim P}\lb\alpha(x)\le m(z)-t\rb}{dt}\rp \d t\\
&=  e^{-m(z)}\lp\left.-e^{t}\Pr_{x\sim P}\lb\alpha(x)\le m(z)-t\rb\right\vert_{0}^\infty + \int_{t=0}^\infty e^{t} \Pr_{x\sim P}\lb\alpha(x)\le m(z)-t\rb \d t \rp\\
&=  ze^{-m(z)} + e^{-m(z)}\int_{t=0}^\infty e^{t} \Pr_{x\sim P}\lb\alpha(x)\le m(z)-t\rb \d t\\
&\le ze^{-m(z)} + e^{-m(z)}\int_{t=0}^\infty e^{t} \Phi(\Phi^{-1}(z)-t/\gamma) \d t \tag{Corollary~\ref{cor:isoperimetry_lip}}\\
    &= ze^{-m(z)} + e^{-m(z)}\lp -z + \exp\lp\frac{\gamma^2}{2}+\Phi^{-1}(z)\gamma\rp \Phi(\Phi^{-1}(z)+\gamma) \rp \tag{Claim~\ref{claim:some_gaussian_integrals}}\\
    &= \exp\lp\frac{\gamma^2}{2}+\Phi^{-1}(z)\gamma-m(z)\rp \Phi(\Phi^{-1}(z)+\gamma)
\end{align*}

Combining the two bounds, we get:
\begin{align*}
T(P\|Q)(z)&= \lp 1+\frac{\E_P[e^{-\alpha}\Ind_{\barS}]}{\E_P[e^{-\alpha}\Ind_S]}\rp^{-1}\\
&\ge \lp 1 + \frac{\Phi(\Phi^{-1}(z)+\gamma)}{\Phi(-\Phi^{-1}(z)-\gamma)} \rp^{-1}\\
&= \Phi(-\Phi^{-1}(z)-\gamma) \tag{Using $\Phi(x)+\Phi(-x)=1$}\\
&= \tradeoff{N(0,1)}{N(\gamma,1)} \tag{Eqn (\ref{eqn:gaussian_tradeoff})}.
\end{align*}
\end{proof}

We finish by calculating the integrals that showed up in the proof.
\begin{claim}
\label{claim:some_gaussian_integrals}
$$\int_0^\infty e^{-t} \Phi\lp a-\frac{t}{\gamma}\rp \d t = \Phi(a)-e^{\frac{\gamma^2}{2}-a\gamma}\Phi(a-\gamma)$$
$$\int_0^\infty e^{t} \Phi\lp a-\frac{t}{\gamma}\rp  \d t = -\Phi(a)+e^{\frac{\gamma^2}{2}+a\gamma}\Phi(a+\gamma)$$
\end{claim}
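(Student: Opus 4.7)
The plan is to prove both identities by integration by parts, reducing each integral to a Gaussian integral that can be evaluated by completing the square.

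For the first identity, I would set $u = \Phi(a-t/\gamma)$ and $\d v = e^{-t}\,\d t$, so that $\d u = -\gamma^{-1}\phi(a-t/\gamma)\,\d t$ where $\phi$ is the standard Gaussian density, and $v=-e^{-t}$. The boundary term evaluates to $\Phi(a)$ (the contribution at infinity vanishes because $e^{-t}\Phi(a-t/\gamma)\to 0$), leaving
\begin{equation*}
\int_0^\infty e^{-t}\Phi\!\lp a-\tfrac{t}{\gamma}\rp\d t \;=\; \Phi(a)\;-\;\frac{1}{\gamma\sqrt{2\pi}}\int_0^\infty e^{-t}\,e^{-(a-t/\gamma)^2/2}\,\d t.
\end{equation*}
After the substitution $u = t/\gamma$, the exponent in the remaining integrand is $-\gamma u-(a-u)^2/2$, which I would rewrite by completing the square as
\begin{equation*}
-\tfrac{1}{2}(u-(a-\gamma))^2 \;+\; \tfrac{\gamma^2}{2}-a\gamma.
\end{equation*}
Pulling out the constant factor $e^{\gamma^2/2-a\gamma}$ leaves a shifted Gaussian density integrated over $[0,\infty)$, which equals $\Phi(a-\gamma)$. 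This yields the claimed identity.

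For the second identity the structure is identical, but with $\d v=e^t\,\d t$ and $v=e^t$. The only delicate point is justifying that $e^t\Phi(a-t/\gamma)\to 0$ as $t\to\infty$; this follows from the sharp tail bound $\Phi(-x)\le \phi(x)/x$ for $x>0$, which gives super-exponential decay $e^{-t^2/(2\gamma^2)}$ that dominates $e^t$. The boundary term then contributes $-\Phi(a)$, and the resulting integral, after the same substitution $u=t/\gamma$ and completing the square $\gamma u - (a-u)^2/2 = -\tfrac12(u-(a+\gamma))^2 + \gamma^2/2 + a\gamma$, equals $e^{\gamma^2/2+a\gamma}\Phi(a+\gamma)$.

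The only step requiring care is the convergence/boundary argument in the second identity; the rest is bookkeeping with integration by parts and Gaussian completion of the square. I do not anticipate any genuine obstacle.
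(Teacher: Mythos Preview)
Your proposal is correct and takes essentially the same approach as the paper: integrate by parts with $u=\Phi(a-t/\gamma)$ and $\d v=e^{\mp t}\,\d t$, then complete the square in the resulting Gaussian integral. The only cosmetic difference is that you substitute $u=t/\gamma$ before completing the square whereas the paper works directly in $t$; your treatment of the boundary term at infinity in the second identity is in fact more careful than the paper's.
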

\begin{proof}
\begin{align*}
    \int_0^\infty e^{-t} \Phi(a-t/\gamma) \d t &= \left.-e^{-t}\Phi(a-t/\gamma)\right\vert_0^\infty - \int_0^\infty e^{-t} \frac{e^{-(a-t/\gamma)^2/2}}{\gamma\sqrt{2\pi}} \d t\\
     &= \Phi(a) - \int_0^\infty e^{\gamma^2/2-a\gamma} \frac{e^{-(t-(\gamma a-\gamma^2))^2/2}}{\gamma\sqrt{2\pi}} \d t\\
     &= \Phi(a) -  e^{\gamma^2/2-a\gamma} \Phi(a-\gamma).
\end{align*}
\begin{align*}
    \int_0^\infty e^{t} \Phi(a-t/\gamma) \d t &= \left.e^{t}\Phi(a-t/\gamma)\right\vert_0^\infty + \int_0^\infty e^{t} \frac{e^{-(a-t/\gamma)^2/2}}{\gamma\sqrt{2\pi}} \d t\\
     &= -\Phi(a) + \int_0^\infty e^{\gamma^2/2+a\gamma} \frac{e^{-(t-(a\gamma+\gamma^2))^2/2\gamma^2}}{\gamma\sqrt{2\pi}} \d t\\
     &= -\Phi(a)  +  e^{\gamma^2/2+a\gamma} \Phi(a+\gamma).
\end{align*}
\end{proof}

As a corollary to Theorem~\ref{thm:privacy_technical_tradeoff}, we can bound any divergence measure that decreases under post-processing such as Renyi divergence or KL divergence. In particular, this also implies Renyi Differential Privacy~\cite{Mir17} of our algorithm.

\begin{corollary}
\label{cor:divergence_privacy}
Suppose $F,\TF$ are two $\mu$-strongly convex functions over $\cK\subseteq \R^d$, and $F-\TF$ is $G$-Lipschitz over $\cK$.
For any $k>0$, if we let $P\propto e^{-kF}$ and $Q\propto e^{-k\TF}$ be two probability distributions on $\cK$, then we have
\begin{align*}
    \mathrm{D}(P\|Q)\leq \mathrm{D}\lp\cN(0,1)\|\cN\lp \frac{G\sqrt{k}}{\sqrt{\mu}},1\rp\rp
\end{align*}
for any divergence measure $\mathrm{D}$ which decreases under post-processing. In particular, $$\mathrm{D}_{\alpha}(P\|Q)\le \frac{\alpha k G^2}{2\mu} \text{ and }\mathrm{D}_{KL}(P\|Q)\le \frac{kG^2}{2\mu}.$$
\end{corollary}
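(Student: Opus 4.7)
The plan is to obtain the corollary directly from Theorem~\ref{thm:privacy_technical_tradeoff} after a scaling step, then pass from tradeoff-function domination to an inequality for arbitrary monotone divergences. First, I observe that $kF$ and $k\TF$ are $k\mu$-strongly convex on $\cK$ and that $k(\TF-F)$ is $kG$-Lipschitz, so Theorem~\ref{thm:privacy_technical_tradeoff} applied to the rescaled pair $(kF,k\TF)$ yields
\begin{equation*}
\tradeoff{P}{Q}(z)\;\ge\;\tradeoff{\cN(0,1)}{\cN\lp G\sqrt{k}/\sqrt{\mu},1\rp}(z)\qquad\text{for every }z\in[0,1],
\end{equation*}
using the simplification $kG/\sqrt{k\mu}=G\sqrt{k}/\sqrt{\mu}$.

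Next, I upgrade this pointwise tradeoff-function domination to a divergence bound. By the Blackwell--Sherman--Stein theorem, pointwise domination $\tradeoff{P}{Q}\ge\tradeoff{P'}{Q'}$ is equivalent to the existence of a Markov kernel $K$ that simultaneously transports the ``easier'' pair $(P',Q')$ to the ``harder'' pair $(P,Q)$, i.e., $K(P')=P$ and $K(Q')=Q$. Taking $(P',Q')=(\cN(0,1),\cN(G\sqrt{k}/\sqrt{\mu},1))$ and applying the data-processing inequality gives
\begin{equation*}
\mathrm{D}(P\|Q)=\mathrm{D}\lp K(P')\,\|\,K(Q')\rp\le\mathrm{D}(P'\|Q'),
\end{equation*}
which is the first, general claim of the corollary.

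The Rényi and KL specializations then follow from the classical closed-form expressions for divergences between equal-variance Gaussians, $\mathrm{D}_\alpha(\cN(0,1)\|\cN(s,1))=\alpha s^2/2$ and $\mathrm{D}_{KL}(\cN(0,1)\|\cN(s,1))=s^2/2$; substituting $s=G\sqrt{k}/\sqrt{\mu}$ produces the advertised $\alpha kG^2/(2\mu)$ and $kG^2/(2\mu)$. The only genuine conceptual step is invoking Blackwell's theorem; an alternative that avoids constructing a kernel explicitly is to use the integral representation of $f$-divergences in terms of the type~I/type~II error curve, from which pointwise tradeoff-function domination mechanically yields divergence domination, and to note that Rényi divergence is a monotone function of the Hellinger $\alpha$-divergence and therefore inherits post-processing monotonicity.
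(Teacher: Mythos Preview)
Your proposal is correct and follows essentially the same route as the paper: apply Theorem~\ref{thm:privacy_technical_tradeoff} (after the scaling observation that $kF,k\TF$ are $k\mu$-strongly convex and $k(\TF-F)$ is $kG$-Lipschitz), then invoke the Blackwell--Sherman--Stein equivalence between tradeoff-function domination and the existence of a post-processing kernel---which is exactly what the paper cites as Theorem~2.10 in \cite{dong2019gaussian}---and finally plug in the Gaussian Rényi/KL formulas. The only difference is presentational: you make the scaling by $k$ explicit, whereas the paper absorbs it into ``the rest follows from Theorem~\ref{thm:privacy_technical_tradeoff}.''
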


\begin{proof}

By Theorem 2.10 in \cite{dong2019gaussian}, if $T(P\|Q) \ge T(X\|Y)$, then there exists a randomized algorithm $M$ such that $M(X)=P$ and $M(Y)=Q$. Therefore for any divergence measure which decreases under post-processing we have,
$$\mathrm{D}(P\|Q)= \mathrm{D}(M(X)\|M(Y)) \le \mathrm{D}(X\|Y).$$ The rest follows from Theorem~\ref{thm:privacy_technical_tradeoff}. It is well-known that Renyi divergence and KL divergence decrease with post-processing (see \cite{EH14}, for example). We can also compute $\mathrm{D}_\alpha(\cN(0,1),\cN(s,1))=\alpha s^2/2$ and $\mathrm{D}_{KL}(\cN(0,1),\cN(s,1))=s^2/2$~\cite{Mir17}.
\end{proof}


\section{Efficient Non-smooth Sampling}\label{sec:sampling}
In this section, we will present an efficient sampling scheme for (non-smooth) functions to complement our main result first.
Specifically, we study the following problem about sampling from a (non-smooth) log-concave distribution.
\begin{problem}\label{problem:sample}
Given a $\mu$-strongly convex function $\psi(x)$ defined on a convex set $\cK \subseteq \R^{d}$ and $+\infty$ outside. Given a family of $G$-Lipschitz convex functions $\{f_{i}(x)\}_{i\in I}$ defined on $\cK$. Our goal is to sample a point
$x\in \cK$ with probability proportionally to $\exp(-\widehat{F}(x))$ where
\[
\widehat{F}(x)=\E_{i\in I}f_{i}(x)+\psi(x).
\]
\end{problem}

Our sampler is based on the alternating sampling algorithm in \cite{LST21}
(See algorithm \ref{algo:AlternatingSampler}). This algorithm reduces
the problem of sampling from $\exp(-\widehat{F}(x))$ to sampling from $\exp(-\widehat{F}(x)-\frac{1}{2\eta}\|x-y\|^{2})$
for some fixed $\eta$ and for roughly $\frac{1}{\eta\mu}$ many different
$y$. When the step size $\eta$ is very small, the later problem
is easier because the distribution is almost like a Gaussian distribution.
For our problem, we will pick the largest step size $\eta$ such that
we can sample $\exp(-\widehat{F}(x)-\frac{1}{2\eta}\|x-y\|^{2})$ using only
$\widetilde{O}(1)$ many steps. 

\begin{algorithm2e}
\caption{Alternating Sampler \label{algo:AlternatingSampler}}

\textbf{Input:} $\mu$-strongly convex function $\widehat{F}$, step size $\eta>0$, initial
point $x_{0}$

\For{$t\in[T]$}{

$y_{t}\leftarrow x_{t-1}+\sqrt{\eta}\cdot\zeta$ where $\zeta\sim \cN(0, I_{d})$.

Sample $x_{t}\propto\exp(-\widehat{F}(x)-\frac{1}{2\eta}\|x-y_{t}\|_{2}^{2})$.\label{line:sample_oracle}

}

\textbf{Return} $x_{T}$

\end{algorithm2e}
\begin{theorem}[{\cite[Theorem 1]{LST21}}]
\label{thm:alternating_sampler}Given a $\mu$-strongly convex function
$F$ defined on $\cK$ with an initial point $x_{0}$. Let the distance $D=\|x_{0}-x^{*}\|_{2}$
for any $x^{*}=\arg\min_{x \in \cK}\widehat{F}(x)$. Suppose the step size $\eta\leq\frac{1}{\mu}$,
the target accuracy $\delta>0$ and the number of step $T\geq\Theta(\frac{1}{\eta\mu}\log(\frac{d/\mu+D^{2}}{\eta\delta}))$.
Then, Algorithm \ref{algo:AlternatingSampler} returns a random point
$x_{T}$ that has $\delta$ total variation distance to the distribution proportional to $\exp(-\widehat{F}(x))$.
\end{theorem}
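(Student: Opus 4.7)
The plan combines the alternating sampling framework of Theorem~\ref{thm:alternating_sampler} as the outer loop with a new rejection sampler for its inner oracle, built from the allowed base sampler $\exp(-\psi-\tfrac{1}{2\eta}\|x-y\|_2^2)$ together with $f_i$-value queries. I fix the step size $\eta=\Theta(G^{-2}/\log(T/\delta))$ and ask for inner TV accuracy $\delta/(2T)$ per call; substituting into Theorem~\ref{thm:alternating_sampler} forces $T=\Theta((\eta\mu)^{-1}\log((d/\mu+D^2)/(\eta\delta/(2T))))$, which by self-consistency (using that $\log T$ inside the logarithm is only doubly logarithmic) unwinds to the claimed $T=\Theta(\tfrac{G^2}{\mu}\log^2(\tfrac{G^2(d/\mu+D^2)}{\delta}))$. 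The total TV error is then $\delta/2$ from Theorem~\ref{thm:alternating_sampler} plus $T\cdot \delta/(2T)=\delta/2$ from the $T$ inner oracle calls, summing to at most $\delta$.

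For the inner oracle I must sample from $\pi_t(x)\propto q_t(x)\exp(-\bar f(x))$, where $q_t(x)\propto\exp(-\psi(x)-\tfrac{1}{2\eta}\|x-y_t\|_2^2)$ is exactly the base oracle and $\bar f(x):=\E_{i\in I}f_i(x)$ is $G$-Lipschitz convex. My scheme draws $x\sim q_t$ (one base-oracle call), then draws $K\sim\mathrm{Poisson}(2B)$ and i.i.d.\ uniform $i_1,\dots,i_K\in I$, and sets $\hat Z:=\prod_{j=1}^{K}\bigl(B-(f_{i_j}(x)-f_{i_j}(y_t))\bigr)/(2B)$. On the truncation event $\cA:=\{G\|x-y_t\|_2\le B\}$, the Lipschitz property of each $f_i$ forces $|f_i(x)-f_i(y_t)|\le B$, so $\hat Z\in[0,1]$ and a Poisson-thinning (Beskos--Roberts-type) calculation gives $\E[\hat Z\mid x]=\exp(-(\bar f(x)-\bar f(y_t))-B)$; flipping a $\hat Z$-biased coin and restarting on failure therefore returns a sample distributed as $\pi_t$ restricted to $\cA$. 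The bad event $\cA^c$ contributes at most $\Pr_{q_t}[\cA^c]+\Pr_{\pi_t}[\cA^c]$ to the inner TV, which I arrange to be $\le \delta/(10T)$ per step so the cumulative loss stays within the $\delta/(2T)$ budget.

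The quantitative heart is Corollary~\ref{cor:isoperimetry_lip}: both $q_t$ and $\pi_t$ are $(1/\eta)$-strongly log-concave (the quadratic in $y_t$ dominates $\psi$), so every $1$-Lipschitz function is $\sqrt{\eta}$-sub-Gaussian around its median and $\bar f$ is $G\sqrt{\eta}$-sub-Gaussian. Choosing $B=G\,m_t+cG\sqrt{\eta\log(T/\delta)}$, where $m_t$ is the median of $\|x-y_t\|_2$ under $q_t$ and $c$ is an absolute constant, gives $\Pr_{q_t,\pi_t}[\cA^c]\le \delta/(10T)$ directly from Corollary~\ref{cor:isoperimetry_lip}; with $\eta=\Theta(G^{-2}/\log(T/\delta))$ the tail summand is $\Theta(1)$, which is exactly the regime that balances $B=O(1)$ against the outer-loop cost. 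Then $\E[K]=2B=O(1)$ $f_i$-queries per trial, and the Lipschitz bound $|\bar f(x)-\bar f(y_t)|\le B$ on $\cA$ combined with the definition of $q_t$ yields $\E_{x\sim q_t}[\hat Z\,\Ind_\cA]\ge e^{-2B}\Pr_{q_t}[\cA]=\Omega(1)$, so the expected number of trials per outer step is $O(1)$, giving the claimed $O(1)$ base-oracle calls and $O(1)$ $f_i$-queries per step in expectation.

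The main obstacle is the self-consistent calibration of $\eta$, $B$, and $T$: the sub-Gaussian tails of $\bar f$ and of $\|x-y_t\|_2$ under $q_t$ and $\pi_t$ must simultaneously have an $O(1)$-width high-probability window at failure probability $\delta/T$ (so the Poisson-thinning estimator is valid and the truncation TV loss sums to at most $\delta/10$ across $T$ steps), while the alternating-sampler iteration count $T=\Theta((\eta\mu)^{-1}\log(\cdot))$ must remain at the claimed value. The choice $\eta=\Theta(G^{-2}/\log(T/\delta))$ is exactly the fixed point of this feedback loop. Verifying the calibration together with (i) the Poisson identity $\E[\hat Z\mid x]=e^{-(\bar f(x)-\bar f(y_t))-B}$ restricted to $\cA$, (ii) the inheritance of sub-Gaussian concentration from $q_t$ to $\pi_t$ through their shared $(1/\eta)$-strong log-concavity, and (iii) the robustness of Theorem~\ref{thm:alternating_sampler} to inner-oracle TV noise of size $\delta/(2T)$ per step, forms the bulk of the technical proof.
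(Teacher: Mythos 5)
Your proposal does not prove the statement in question. The statement is Theorem~\ref{thm:alternating_sampler} itself --- the convergence guarantee of the outer alternating sampler (Algorithm~\ref{algo:AlternatingSampler}), assuming the inner step of sampling from $\exp(-\widehat{F}(x)-\frac{1}{2\eta}\|x-y_t\|_2^2)$ is performed \emph{exactly}. Your very first sentence invokes Theorem~\ref{thm:alternating_sampler} as a black box for the outer loop, so the argument is circular with respect to the target: everything you construct (the Poisson-thinning rejection sampler, the calibration of $\eta$, $B$, and $T$, the $O(1)$ query count per step) is an implementation of the \emph{inner} oracle and a bookkeeping of its TV error. That material corresponds to Lemma~\ref{lem:sub_problem} and Theorem~\ref{thm:sampler} in the paper, not to the cited statement. (For the record, the paper does not prove Theorem~\ref{thm:alternating_sampler} either; it imports it from \cite{LST21}.)

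A proof of the actual statement would have to analyze the two-step Markov chain directly: show that the kernel given by the forward Gaussian step $y_t = x_{t-1}+\sqrt{\eta}\,\zeta$ followed by the backward step $x_t\sim \exp(-\widehat F(x)-\frac{1}{2\eta}\|x-y_t\|_2^2)$ leaves the target $\propto\exp(-\widehat F)$ invariant (both steps are the two conditionals of the joint density $\propto \exp(-\widehat F(x)-\frac{1}{2\eta}\|x-y\|_2^2)$, so this is a Gibbs sampler), then establish a per-iteration contraction of some divergence at rate roughly $(1+\eta\mu)^{-1}$ using the $\mu$-strong convexity of $\widehat F$, and finally bound the divergence of the initial point from the target in terms of $d/\mu + D^2$ and convert to total variation --- which is where the $\frac{1}{\eta\mu}\log\lp\frac{d/\mu+D^2}{\eta\delta}\rp$ iteration count comes from. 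None of these steps appears in your write-up. Separately, note that your inner-oracle construction differs from the paper's (you use Poisson thinning with an explicit truncation radius $B$, whereas Algorithm~\ref{algo:AlternatingSamplerImpl} uses a randomized truncation of the Taylor series of $\exp(\E_i(f_i(z)-f_i(x)))$ with a second sample $z$), but that comparison is relevant to Theorem~\ref{thm:sampler}, not to the statement you were asked to prove.
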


Now, we show that Line \ref{line:sample_oracle} in Algorithm~\ref{algo:AlternatingSampler} can be implemented
by a simple rejection sampling. The idea is to pick step size $\eta$
small enough such that $\widehat{F}(x)$ is essentially a constant function
for a random $x\sim \cN(y,\eta\cdot I_{d})$. The precise algorithm
is given in Algorithm \ref{algo:AlternatingSamplerImpl}.

\begin{algorithm2e}

\caption{Implementation of Line \ref{line:sample_oracle} \label{algo:AlternatingSamplerImpl}}

\textbf{Input}: convex function $\widehat{F}(x)=\E_{i\in I}f_{i}(x)+ \psi(x)$,
step size $\eta>0$, current point $y$

\Repeat{$u\leq\frac{1}{2}\rho$}{

Sample $x,z$ from the distribution $\propto\exp(-\psi(x)-\frac{1}{2\eta}\|x-y\|^{2}_2)$

Set $\rho\leftarrow1$

\For{$\alpha=1,2,\cdots$}{

$\rho\leftarrow\rho+\Pi_{i=1}^{\alpha}(f_{j_{i}}(z)-f_{j_{i}}(x))$
where $j_{i}$ are random indices in $I$ 

With probability $\frac{\alpha}{1+\alpha}$, \textbf{break}

}

Sample $u$ uniformly from $[0,1]$.

}

\textbf{Return} $x$

\end{algorithm2e}

Since $F$ has the $\psi$ term, instead of sampling $x$ from $\cN(y,\eta\cdot I_{d})$,
we sample from $\exp(-\psi(x)-\frac{1}{2\eta}\|x-y\|^{2})$ in Algorithm
\ref{algo:AlternatingSamplerImpl}. The following lemma shows how
to decompose the distribution $\exp(-\widehat{F}(x)-\frac{1}{2\eta}\|x-y\|^{2})$
into the distribution mentioned above and the distribution $\exp(-\E_{i\in I}f_{i}(x))$. It
also calculates the distribution given by the algorithm. 
\begin{lem}
\label{lem:dpi}Let $\pi$ be the distribution proportional to $\exp(-\widehat{F}(x)-\frac{1}{2\eta}\|x-y\|_{2}^{2})$
and let $\mathcal{G}$ be the distribution proportional to $\exp(-\psi(x)-\frac{1}{2\eta}\|x-y\|^{2})$.
Then, we have that
\[
\frac{d\pi}{dx}=\frac{d\mathcal{G}}{dx}\cdot\frac{\exp(-\E_{i\in I}f_{i}(x))}{\E_{x\sim\mathcal{G}}\exp(-\E_{i\in I}f_{i}(x))}.
\]
Let $\widetilde{\pi}$ be the distribution returns by Algorithm \ref{algo:AlternatingSamplerImpl}.
Then, we have that
\[
\frac{d\widetilde{\pi}}{dx}=\frac{d\mathcal{G}}{dx}\cdot\frac{\E(\overline{\rho}|x)}{\E(\overline{\rho})}
\]
where $\overline{\rho} = \min(\max(\rho,0),2)$ is the truncation of $\rho$ in
Algorithm \ref{algo:AlternatingSamplerImpl} to $[0,2]$, $\E(\overline{\rho}|x)$ is the expected value of $\overline{\rho}$ conditional on $x$, and
$\E(\overline{\rho} )=\E_{x\sim\mathcal{G}}\E(\overline{\rho}|x)$. Furthermore,
we have that
\[
\E(\rho|x)=\exp(-\E_{i\in I}f_{i}(x))\cdot\E_{z\sim\mathcal{G}}\exp(\E_{i\in I}f_{i}(z)).
\]
\end{lem}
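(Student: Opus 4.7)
The lemma has three assertions, and I plan to handle them in order of increasing intricacy.

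\textbf{Step 1: The density of $\pi$.} This is a direct normalization argument. Since $\widehat F(x)=\E_{i\in I} f_i(x)+\psi(x)$, we can factor
\[
\exp\lp-\widehat F(x)-\tfrac{1}{2\eta}\|x-y\|_2^2\rp = \exp\lp-\psi(x)-\tfrac{1}{2\eta}\|x-y\|_2^2\rp\cdot \exp\lp-\E_{i\in I}f_i(x)\rp.
\]
Dividing by the normalizing constant of the left-hand side and recognizing the first factor on the right as (a constant times) $d\mathcal{G}/dx$, the constants cancel and we are left with the stated ratio, where the denominator $\E_{x\sim\mathcal{G}}\exp(-\E_{i\in I}f_i(x))$ is exactly what makes $d\pi/dx$ integrate to one.

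\textbf{Step 2: Computing $\E(\rho\mid x)$.} This is the technical heart of the lemma. The variable $\rho$ is built by a random-length loop: at the start of iteration $\alpha$ we add the term $\prod_{i=1}^{\alpha}(f_{j_i}(z)-f_{j_i}(x))$ to $\rho$ and then break with probability $\alpha/(\alpha+1)$. Hence the probability that iteration $\alpha$ is actually executed equals $\prod_{k=1}^{\alpha-1}\frac{1}{k+1}=\frac{1}{\alpha!}$. Conditioning on the $x,z$ drawn from $\mathcal{G}$ and taking expectation over the independent uniform indices $j_i\in I$ and the independent break coins,
\[
\E(\rho\mid x,z)=1+\sum_{\alpha=1}^\infty \frac{1}{\alpha!}\, \E_{j_1,\dots,j_\alpha}\Bigl[\prod_{i=1}^{\alpha}\bigl(f_{j_i}(z)-f_{j_i}(x)\bigr)\Bigr]=\sum_{\alpha=0}^\infty \frac{1}{\alpha!}\bigl(\E_{i\in I}[f_i(z)-f_i(x)]\bigr)^\alpha,
\]
which is the Taylor series of $\exp(\E_{i\in I}[f_i(z)-f_i(x)])$. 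Taking the further expectation over $z\sim\mathcal{G}$ (independent of $x$) yields
\[
\E(\rho\mid x)=\exp\lp-\E_{i\in I}f_i(x)\rp\cdot \E_{z\sim\mathcal{G}}\exp\lp\E_{i\in I}f_i(z)\rp,
\]
as claimed. The main subtle point here is that even though individual terms in the series can be very negative (so $\rho$ itself need not lie in $[0,1]$), Fubini is justified by the $G$-Lipschitz bound on the $f_i$, which makes the series absolutely convergent.

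\textbf{Step 3: The density of $\widetilde\pi$.} The outer repeat-until loop in Algorithm~\ref{algo:AlternatingSamplerImpl} is standard rejection sampling with proposal $\mathcal{G}$ and acceptance test $u\leq \rho/2$ for $u\sim \unif[0,1]$. Because $u\in[0,1]$, the test fires with probability exactly $\overline\rho/2$ where $\overline\rho=\min(\max(\rho,0),2)$: when $\rho<0$ the test always fails and when $\rho>2$ it always succeeds. Conditioning on the value of $x$ sampled from $\mathcal{G}$, the overall acceptance probability is $\E(\overline\rho\mid x)/2$, so the accepted sample has density proportional to $(d\mathcal{G}/dx)\cdot\E(\overline\rho\mid x)$; normalizing gives exactly the claimed formula with denominator $\E(\overline\rho)=\E_{x\sim\mathcal{G}}\E(\overline\rho\mid x)$.

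The main obstacle is Step~2: verifying that the random-length sum in the algorithm produces the power series of $\exp$ requires carefully tracking the break probabilities and exchanging expectation with the infinite sum. Everything else is a bookkeeping exercise, and Step~3 merely notes that the clipping to $[0,2]$ is exactly what the comparison $u\leq \rho/2$ with $u\in[0,1]$ implements.
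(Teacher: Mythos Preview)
Your proposal is correct and follows essentially the same approach as the paper's own proof: the same factorization for $d\pi/dx$, the same rejection-sampling identity $\Pr(u\le\rho/2\mid x)=\E(\overline\rho\mid x)/2$ for $d\widetilde\pi/dx$, and the same computation of $\E(\rho\mid x,z)$ via the $1/\alpha!$ reach-probability and the exponential series. The only differences are cosmetic (you present Steps~2 and~3 in the opposite order and add a brief Fubini justification that the paper omits).
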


\begin{proof}
For the true distribution $\pi$, we have
\begin{align*}
\frac{d\pi}{dx} & =\frac{\exp(-\E_{i\in I}f_{i}(x)-\psi(x)-\frac{1}{2\eta}\|x-y\|_{2}^{2})}{\int\exp(-\E_{i\in I}f_{i}(x)-\psi(x)-\frac{1}{2\eta}\|x-y\|_{2}^{2})dx} \\
& = \frac{\exp(-\E_{i\in I}f_{i}(x))\frac{d\mathcal{G}}{dx}}{\int\exp(-\E_{i\in I}f_{i}(x))\frac{d\mathcal{G}}{dx}dx}=\frac{d\mathcal{G}}{dx}\cdot\frac{\exp(-\E_{i\in I}f_{i}(x))}{\E_{x\sim\mathcal{G}}\exp(-\E_{i\in I}f_{i}(x))}.
\end{align*}

For the distribution $\widetilde{\pi}$ by the algorithm, we sample
$x\sim\mathcal{G}$, then accept the sample if $u\leq\frac{1}{2}\rho$.
Hence, we have
\[
\frac{d\widetilde{\pi}}{dx}=\frac{d\mathcal{G}}{dx}\frac{\Pr(u\leq\frac{1}{2}\rho|x)}{\Pr(u\leq\frac{1}{2}\rho)}.
\]
Since $u$ is uniform between $0$ and $1$, we have the result.

Finally, for the expectation of $\rho$, we note that 
\[
\E\Pi_{i=1}^{\alpha}(f_{j_{i}}(z)-f_{j_{i}}(x))=(\E_{i\in I}(f_{i}(z)-f_{i}(x)))^{\alpha}
\]
and that the probability that the loop pass step $\alpha$ is exactly
$\frac{1}{\alpha!}$. Hence, we have 
\[
\E(\rho|x,z)=1+\sum_{\alpha=1}^{\infty}\frac{1}{\alpha!}(\E_{i\in I}(f_{i}(z)-f_{i}(x)))^{\alpha}=\exp(\E_{i\in I}(f_{i}(z)-f_{i}(x)).
\]
Taking expectation over $z$ gives the result.
\end{proof}
Note that if we always had $0 \leq \rho\leq2$, then $\E(\overline{\rho}|x)=\E(\rho|x)\propto\exp(-\E_{i\in I}f_{i}(x))$
and hence $\frac{d\pi}{dx}=\frac{d\widetilde{\pi}}{dx}$. Therefore,
the only thing left is to show that $0\leq \rho\leq2$ with high probability
and that it does not induces too much error in total variation distance.
To do this, we use Gaussian concentration to prove that $\E_{i\in I}f_{i}(x)$
is almost a constant over random $x\sim\mathcal{G}$.
\begin{lem}[{Gaussian concentration \cite[Eq 1.21]{Led99}}]
\label{lem:gaussian_concentration}Let $X \sim \exp(-\widehat{F})$ for some $1/\eta$-strongly convex $\widehat{F}$ and $\ell$
is a $G$-Lipschitz function. Then, for all $t\geq0$,
\[
\Pr[\ell(X)-\mathbb{E}[\ell(X)]\geq t]\leq e^{-t^{2}/(2\eta G^2)}.
\]
\end{lem}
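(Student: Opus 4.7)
The plan is to derive this Gaussian concentration as a consequence of the isoperimetric inequality for strongly log-concave measures already packaged as Corollary~\ref{cor:isoperimetry_lip}. Since $\widehat{F}$ is $(1/\eta)$-strongly convex, the measure $\pi\propto e^{-\widehat F}$ is $(1/\eta)$-strongly log-concave, so $\ell(X)$ should be sub-Gaussian at scale $G\sqrt{\eta}$, matching the canonical Gaussian case $X\sim \cN(0,\eta I_d)$.

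First, I would apply Corollary~\ref{cor:isoperimetry_lip} with $\mu=1/\eta$, $\alpha=\ell$, and $z=1/2$, so that $m:=m(1/2)$ is the median of $\ell(X)$. The corollary yields, for every $r\ge 0$,
$$\Pr[\ell(X)\ge m+r]\le \Phi\lp -\tfrac{r}{G\sqrt{\eta}}\rp \le \tfrac{1}{2}e^{-r^{2}/(2\eta G^{2})},$$
where the second inequality is the standard one-sided Gaussian tail estimate $\Phi(-t)\le \tfrac12 e^{-t^{2}/2}$ for $t\ge 0$. The same bound holds for the lower tail $\Pr[\ell(X)\le m-r]$ by applying the corollary to $-\ell$, which is also $G$-Lipschitz.

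Second, to replace the median by the mean, I would integrate these two-sided tail bounds:
$$|\E[\ell(X)]-m|\le \int_{0}^{\infty}\bigl(\Pr[\ell(X)\ge m+r]+\Pr[\ell(X)\le m-r]\bigr)\,\d r \le C\, G\sqrt{\eta}$$
for a universal constant $C$. Plugging this shift into the median bound $\Pr[\ell(X)\ge m + (t-CG\sqrt{\eta})]$ yields the stated sub-Gaussian deviation inequality, possibly at the cost of a slightly worse universal constant in the exponent than $1/(2\eta G^{2})$.

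The main obstacle in matching the exact constant $1/(2\eta G^{2})$ claimed in the statement is precisely this median-to-mean conversion, which costs a multiplicative factor in the exponent. If the sharp constant is required, the cleanest route is the Bakry--\'Emery criterion: $(1/\eta)$-strong convexity of $\widehat F$ implies $\pi$ satisfies the logarithmic Sobolev inequality with constant $\eta$, and Herbst's argument then delivers $\Pr[\ell(X)-\E \ell(X)\ge t]\le e^{-t^{2}/(2\eta G^{2})}$ directly, centered at the mean. This is the route taken in the cited reference \cite{Led99}, and since the lemma is invoked as a black box later, either derivation suffices for the paper's subsequent application to controlling $\E_{i\in I}f_i(x)$ around its mean under $x\sim\mathcal{G}$.
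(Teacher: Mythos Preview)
The paper does not prove this lemma at all; it is stated with a citation to \cite[Eq.~1.21]{Led99} and used as a black box. So there is no ``paper's own proof'' to compare against.

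Your analysis is nonetheless sound. The isoperimetric route via Corollary~\ref{cor:isoperimetry_lip} correctly yields sub-Gaussian concentration around the median with the sharp exponent, but the median-to-mean conversion you describe does cost a constant factor and would not recover the exact $1/(2\eta G^{2})$. Your second route---Bakry--\'Emery to obtain LSI with constant $\eta$, then Herbst's argument---is precisely how \cite{Led99} obtains the mean-centered bound with the sharp constant, and is the intended justification behind the citation. Since every subsequent use of the lemma in Section~\ref{sec:sampling} (bounding $|f_i(x)-\E f_i(x)|$ and related quantities) only needs sub-Gaussian tails at scale $G\sqrt{\eta}$ up to universal constants, either derivation would suffice for the paper's applications; the sharp constant is not actually load-bearing.
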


Now, we are already to prove our main result. This shows that if $\eta\ll G^{-2}$,
then the algorithm indeed implements Line \ref{line:sample_oracle}
correctly up to small error.
\begin{lem}
\label{lem:sub_problem}If the step size $\eta\leq C\log^{-1}(1/\delta_{\mathrm{inner}})G^{-2}$
for some small enough $C$ and the inner accuracy $\delta_{\mathrm{inner}}\in(0,1/2)$,
then Algorithm \ref{algo:AlternatingSamplerImpl} returns a random
point $x$ that has $\delta_{\mathrm{inner}}$ total variation distance
to the distribution proportional to $\exp(-\widehat{F}(x)-\frac{1}{2\eta}\|x-y\|_{2}^{2})$. Furthermore,
the algorithm accesses only $O(1)$ many $f_{i}(x)$ in expectation and samples from $\exp(-\psi(x) - \frac{1}{2\eta} \|x-y\|^2_2)$ for $O(1)$ many $y$.
\end{lem}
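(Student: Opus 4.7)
The plan is to reduce correctness to the high-probability event that $\rho\in[0,2]$ throughout the execution, so that the truncation $\bar\rho=\rho$ is essentially vacuous. The key initial observation is from Lemma~\ref{lem:dpi}: writing $h(x)=\exp(-\E_{i\in I}f_i(x))$, the lemma gives $\E(\rho|x)=h(x)\cdot\E_{z\sim\mathcal{G}}[h(z)^{-1}]$, and therefore $\E(\rho|x)/\E(\rho)=h(x)/\E_{\mathcal{G}}[h]$, which exactly matches $\d\pi/\d\mathcal{G}$. Thus if we had used $\rho$ in place of $\bar\rho$, the algorithm would sample exactly from $\pi$, and every bit of TV error stems from the truncation. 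A short triangle-inequality computation on the two density ratios (write $\pi/\mathcal{G}=a/A$, $\widetilde\pi/\mathcal{G}=b/B$ with $a=\E(\rho|x)$, $b=\E(\bar\rho|x)$, $A=\E\rho$, $B=\E\bar\rho$, and bound $|aB-bA|\le A|a-b|+a|A-B|$) yields the master bound
\[
\mathrm{TV}(\pi,\widetilde\pi)\ \le\ \frac{\E|\bar\rho-\rho|}{\E\bar\rho}.
\]

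Next, I bound $p:=\Pr[\rho\notin[0,2]]$ via Gaussian concentration. The distribution $\mathcal{G}\propto\exp(-\psi(x)-\tfrac{1}{2\eta}\|x-y\|_2^2)$ is $\tfrac{1}{\eta}$-strongly log-concave regardless of the convex $\psi$, so Lemma~\ref{lem:gaussian_concentration} implies that for every $G$-Lipschitz $f_j$, $f_j(X)-\E_{\mathcal{G}}f_j$ is $\sqrt{\eta G^2}$-subGaussian. For independent $x,z\sim\mathcal{G}$ and any fixed $j\in I$, the triangle inequality gives
\[
\Pr\lb|f_j(z)-f_j(x)|>1/2\rb\ \le\ 4\exp\lp-\Omega(1/(\eta G^2))\rp\ \le\ \delta_{\mathrm{inner}}^{\Omega(1/C)}
\]
under the hypothesis $\eta G^2\le C/\log(1/\delta_{\mathrm{inner}})$, and averaging over a random $j$ preserves this bound. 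Using the expansion $\rho-1=\sum_{\alpha=1}^{\tau}\prod_{i=1}^{\alpha}(f_{j_i}(z)-f_{j_i}(x))$ together with $\Pr[\tau\ge\alpha]=1/\alpha!$ (and independence between the break decisions and the $f_j$-values), a union bound of the form $\sum_\alpha(1/\alpha!)\cdot\delta_{\mathrm{inner}}^{\Omega(1/C)}\le e\,\delta_{\mathrm{inner}}^{\Omega(1/C)}$ shows that with probability at least $1-e\,\delta_{\mathrm{inner}}^{\Omega(1/C)}$ every inner-loop term satisfies $|f_{j_i}(z)-f_{j_i}(x)|\le 1/2$. On this good event $|\rho-1|\le\sum_{\alpha\ge 1}2^{-\alpha}=1$, so $\rho\in[0,2]$ and $\bar\rho=\rho$. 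Choosing $C$ small enough makes $p\le\delta_{\mathrm{inner}}^2$.

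To finish, I apply Cauchy-Schwarz: $\E|\bar\rho-\rho|\le\sqrt{p\cdot\E(|\rho|+2)^2}$, and a term-by-term moment computation using the subGaussian tails of each $f_{j_i}(z)-f_{j_i}(x)$ together with the $1/\alpha!$ tail of $\tau$ shows $\E[\rho^2]=O(1)$, hence $\E|\bar\rho-\rho|=O(\delta_{\mathrm{inner}})$. Moreover $\E\bar\rho=\Omega(1)$, because $\E\rho=\E_{\mathcal{G}}[h]\cdot\E_{\mathcal{G}}[h^{-1}]\ge 1$ by Cauchy-Schwarz and $|\E\rho-\E\bar\rho|\le\E|\bar\rho-\rho|$ is vanishing. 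Substituting into the master bound gives $\mathrm{TV}(\pi,\widetilde\pi)\le\delta_{\mathrm{inner}}$. For complexity, the inner loop performs $2\tau$ queries to the $f_i$ with $\E\tau=\sum_{\alpha\ge 1}(1/(\alpha-1)!)=e$, and each outer pass draws one sample from $\exp(-\psi(x)-\|x-y\|_2^2/(2\eta))$; the outer \emph{Repeat} is Bernoulli with acceptance probability $\E\bar\rho/2=\Theta(1)$, giving $O(1)$ repetitions in expectation. The main technical subtlety is propagating the per-iteration concentration bound through the random loop length $\tau$ and bounding $\E[\rho^2]$ uniformly, but the superexponential decay $\Pr[\tau\ge\alpha]=1/\alpha!$ makes this bookkeeping routine.
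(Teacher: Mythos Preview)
Your proposal is correct and takes a genuinely different route from the paper's proof. Both arguments start from Lemma~\ref{lem:dpi} and reduce to controlling the truncation error $\E|\bar\rho-\rho|$ divided by the acceptance probability, but the executions diverge after that.

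The paper lower-bounds the denominator by showing $\E X\ge 2/3$ via Gaussian concentration (Lemma~\ref{lem:gaussian_concentration}) applied to $\E_i f_i$, and then controls $\E[|\rho|\cdot 1_{\rho\notin[0,2]}]$ by introducing a truncation level $L=\Theta(\log(1/\delta_{\mathrm{inner}}))$, splitting $\rho=\rho_{\le L}+\rho_{>L}$, and separately bounding each piece with a fairly intricate case analysis. Your approach replaces both steps with cleaner tools: the observation $\E\rho=\E_{\mathcal G}[h]\cdot\E_{\mathcal G}[h^{-1}]\ge 1$ (Cauchy--Schwarz) gives the denominator bound essentially for free, and the Cauchy--Schwarz step $\E|\bar\rho-\rho|\le\sqrt{p\cdot\E(|\rho|+2)^2}$ avoids the $L$-split entirely. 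The price is that you must establish $\E[\rho^2]=O(1)$ uniformly, which you flag as routine; it is, but it requires writing $\E[\rho^2\mid x,z]$ in terms of $m=\E_j\Delta_j$ and $v=\E_j\Delta_j^2$ and then checking that $\E_{x,z}[e^{|m|}]$, $\E_{x,z}[e^v]$, and $\E_{x,z}[e^{O(m^2)}]$ are all finite when $\eta G^2$ is a small constant---this is where the sub-Gaussianity from Lemma~\ref{lem:gaussian_concentration} enters. Overall your route is more economical and avoids the auxiliary parameter $L$, while the paper's direct bound on $\E[|\rho|1_{\rho\notin[0,2]}]$ sidesteps the second-moment computation.

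One minor slip: the inner loop does not make $2\tau$ queries. In the paper's algorithm the indices $j_{1},\dots,j_{\alpha}$ are resampled afresh at step $\alpha$ (this is why the proof writes $j_{i,\alpha}$ and counts ``less than $L^2/2$'' indices), so one pass uses $\sum_{\alpha\le\tau}2\alpha=\tau(\tau+1)$ evaluations, with expectation $\sum_{\alpha\ge 1}2\alpha/\alpha!=2e$. Your $O(1)$ conclusion is unaffected.
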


\begin{proof}
Let $\pi$ be the distribution given by $c\cdot\exp(-\widehat{F}(x)-\frac{1}{2\eta}\|x-y\|_{2}^{2})$
and $\widetilde{\pi}$ is the distribution outputted by the algorithm.
By Lemma \ref{lem:dpi}, we have
\begin{align*}
d_{\mathrm{TV}}(\pi,\widetilde{\pi}) & =\int_{\R^{d}}\left|\frac{d\mathcal{G}}{dx}\frac{\exp(-\E_{i\in I}f_{i}(x))}{\E_{x\sim\mathcal{G}}\exp(-\E_{i\in I}f_{i}(x))}-\frac{d\mathcal{G}}{dx}\frac{\E(\overline{\rho}|x)}{\E(\overline{\rho})}\right|dx\\
 & =\E_{x\sim\mathcal{G}}\left|\frac{\exp(-\E_{i\in I}f_{i}(x))}{\E_{x\sim\mathcal{G}}\exp(-\E_{i\in I}f_{i}(x))}-\frac{\E(\overline{\rho}|x)}{\E(\overline{\rho})}\right|.
\end{align*}
Let $X$ be the random variable $\E(\rho|x)$ and $\widetilde{X}$
be the random variable $\E(\overline{\rho}|x)$. Lemma \ref{lem:dpi}
shows that $X=\exp(-\E_{i\in I}f_{i}(x))\cdot\E_{z\sim\mathcal{G}}\exp(\E_{i\in I}f_{i}(z))$
and hence
\[
\frac{\exp(-\E_{i\in I}f_{i}(x))}{\E_{x\sim\mathcal{G}}\exp(-\E_{i\in I}f_{i}(x))}=\frac{X}{\E_{x\sim \G}X}.
\]
Therefore, we have
\begin{align}
d_{\mathrm{TV}}(\pi,\widetilde{\pi}) & =\E\left|\frac{X}{\E X}-\frac{\widetilde{X}}{\E\widetilde{X}}\right|\leq\E\left|\frac{X}{\E X}-\frac{\widetilde{X}}{\E X}\right|+\E\left|\frac{\widetilde{X}}{\E X}-\frac{\widetilde{X}}{\E\widetilde{X}}\right|\leq2\frac{\E|X-\widetilde{X}|}{|\E X|}.\label{eq:dTV1}
\end{align}

We simplify the right hand side by lower bounding $\E X$. By Lemma
\ref{lem:gaussian_concentration} and the fact that the negative log-density of $\mathcal{G}$ is $1/\eta$-strongly convex, we have that $\E_{i\in I}f_{i}(z)\geq\E_{x\sim\mathcal{G}}\E_{i\in I}f_{i}(x)-2G \sqrt{\eta}$
with probability $\geq1-e^{-2}$. Hence, we have
\begin{align*}
\E X & =\E_{x\sim\mathcal{G}}\exp(-\E_{i\in I}f_{i}(x))\cdot\E_{z\sim\mathcal{G}}\exp(\E_{i\in I}f_{i}(z))\\
 & \geq\exp(-\E_{x\sim\mathcal{G}}\E_{i\in I}f_{i}(x))\cdot\E_{z\sim\mathcal{G}}\exp(\E_{i\in I}f_{i}(z))\\
 & =\E_{z\sim\mathcal{G}}\exp(\E_{i\in I}f_{i}(z)-\E_{x\sim\mathcal{G}}\E_{i\in I}f_{i}(x))\\
 & \geq(1-e^{-2})\exp(-2G \sqrt{\eta}).
\end{align*}
Using $\eta\leq G^{-2}/8$,
we have $\E [X]\geq\frac{2}{3}$. Using this, (\ref{eq:dTV1}), $X=\E(\rho|x)$
and $\widetilde{X}=\E(\overline{\rho}|x)$, we have
\[
d_{\mathrm{TV}}(\pi,\widetilde{\pi})\leq 3 \cdot\E|X-\widetilde{X}|\leq 3\cdot\E(|\rho|\cdot1_{\rho \notin [0,2]}).
\]




We split the $\rho$ into two terms $\rho_{\leq L}$ and $\rho_{>L}$.
The first term $\rho_{\leq L}$ is the sum of all terms added to $\rho$
when $\alpha\leq L$ (including the initial term $1$). The second
term $\rho_{>L}$ is the sum when $\alpha>L$. Hence, we have $\rho=\rho_{>L}+\rho_{\leq L}$
and hence
\begin{equation}
d_{\mathrm{TV}}(\pi,\widetilde{\pi})\leq3\cdot\E(|\rho_{>L}|\cdot1_{\rho \notin [0,2]})+3\cdot\E(|\rho_{\leq L}|\cdot1_{\rho \notin [0,2]}).\label{eq:dTV2}
\end{equation}

For the term $\rho_{>L}$, by a calculation similar to Lemma \ref{lem:dpi},
we have
\begin{align*}
\E(|\rho_{>L}|\cdot1_{\rho \notin [0,2]}) & \leq\E|\rho_{>L}|\leq\E_{x,z}\Phi(\E_{i\in I}|f_{i}(z)-f_{i}(x)|),
\end{align*}
where $\Phi(t)=\sum_{\alpha=L+1}^{\infty}\frac{t^{\alpha}}{\alpha!}$ is a power series in $t$ with all positive coefficients.
By picking $L>C\log(1/\delta_{\mathrm{inner}})$ for some large constant
$C$, we have $\Phi(t)\leq\frac{\delta_{\mathrm{inner}}}{16}$
for all $|t|\leq1$. Let $\Delta$ be the random variable $\E_{i\in I}|f_{i}(z)-f_{i}(x)|$ whose randomness comes from $x$ and $z$.
Then, we have
\[
\E(|\rho_{>L}|\cdot1_{\rho \notin [0,2]})\leq\frac{\delta_{\mathrm{inner}}}{16}+\E e^{\Delta}1_{\Delta\geq1}\leq\frac{\delta_{\mathrm{inner}}}{16}+\sum_{k=1}^{\infty}e^{k+1}\Pr_{x,z}(\Delta\geq k).
\]

Denote a function $h_{x,z}(t):=\Pr_{i\in I}[|f_i(z)-f_i(x)|\ge t]$.
Since each $f_i$ is $G$-Lipschitz, Lemma~\ref{lem:gaussian_concentration} shows that
\begin{align*}
    \Pr_{x,z}[|f_i(z)-f_i(x)|\geq t]\leq 4e^{-t^2/(8\eta G^2)},
\end{align*}
which implies 
\begin{align*}
    \E_{x,z}[h_{x,z}(t)] = \Pr_{x,z,i}[|f_i(z)-f_i(x)|\geq t]\leq 4e^{-t^2/(8\eta G^2)}.
\end{align*}
By Markov inequality, for any $k>0$, we know
\begin{align*}
    \Pr_{x,z}[h_{x,z}(t)\geq e^{-k}]\leq 4e^{k-t^2/(8\eta G^2)}.
\end{align*}
As $|f_i(z)-f_i(x)|\leq G\|x-z\|_2$, if $h_{x,z}(t)=\Pr_{i\in I}[|f_i(z)-f_i(x)|\ge t]\leq e^{-t^2/(16\eta G^2)}$, we know $$\E_{i\in I}|f_i(z)-f_i(x)|\leq t +e^{-t^2/(16\eta G^2)}\cdot G\|x-z\|_2.$$
Hence, one has
\begin{align*}
     \Pr_{x,z}\Big[\E_{i\in I}|f_i(z)-f_i(x)|\ge t+e^{-t^2/(16\eta G^2)}G\|x-z\|_2 \Big]
    \leq &~ \Pr_{x,z}[h_{x,z}(t)\ge e^{-t^2/(16\eta G^2)}]\\
    \leq & ~ 4e^{-t^2/(16\eta G^2)}.
\end{align*}
By Gaussian Concentration, we know
\begin{align*}
    \Pr_{x,z}[\|x-z\|_2\ge t]
    \leq &~ \Pr_{x,z}[\|x-\E x\|_2\ge t/2 \text{ or }\|z-\E z\|\ge t/2]\\
    \leq & ~ 2e^{-t^2/(8\eta)}.
\end{align*}
Thus we know
\begin{align*}
    &~ \Pr_{x,z}[\E_{i\in I}|f_i(z)-f_i(x)|\ge 2t]\\
    =& ~\Pr_{x,z}[\E_{i\in I}|f_i(z)-f_i(x)|\ge 2t, \|x-z\|_2\ge t/G]+\Pr_{x,z}[\E_{i\in I}|f_i(z)-f_i(x)|\geq 2t,\|x-z\|_2<t/G]\\
    \leq & ~2e^{-t^2/(8G^2\eta)}+\Pr_{x,z}[\E_{i\in I}|f_i(z)-f_i(x)|\geq 2t,\|x-z\|_2<t/G]\\
    \leq &~  2e^{-t^2/(8G^2\eta)} +\Pr_{x,z}[\E_{i\in I}|f_i(z)-f_i(x)|\ge t+e^{-t^2/(16\eta G^2)}G\|x-z\|_2]\\
    \leq & ~ 6e^{-t^2/(16\eta G^2)}.
\end{align*}
Hence, we have $\Pr(\Delta\geq k)\leq 6\exp(-k^{2}/(64G^{2}\eta))$ and 
\begin{equation}
\E(|\rho_{>L}|\cdot1_{\rho \notin [0,2]})\leq\frac{\delta_{\mathrm{inner}}}{16}+17\sum_{k=1}^{\infty}e^{k-\frac{k^{2}}{64G^{2}\eta}} \leq\frac{\delta_{\mathrm{inner}}}{9},\label{eq:dTv3}
\end{equation}
where we used $\eta\leq2^{-6}G^{-2}/\log(400/\delta_{\mathrm{inner}})$ at the end.

As for the term $\rho_{\leq L}$, we know that
\begin{align}
\label{eq:split_dTVterm}
& \E(|\rho_{\leq L}|\cdot 1_{\rho \notin [0,2]})\nonumber\\
= & \E(|\rho_{\leq L}|\cdot 1_{\rho \notin [0,2]}\cdot 1_{|\rho_{\leq L}|\leq 2^L})+\E(|\rho_{\leq L}|\cdot1_{\rho \notin [0,2]}\cdot 1_{|\rho_{\leq L}|\geq 2^L})\nonumber \\
\leq & \Pr[\rho\notin[0,2]]\cdot 2^L+ \sum_{k=1}^{\infty}2^{(k+1)L}\Pr(|\rho_{\leq L}|\ge2^{kL}).
\end{align}
Note that the term $\rho_{\leq L}$ involves only less than $\frac{L^{2}}{2}$
many $f_{i}(x)$ and $f_{i}(z)$. Lemma \ref{lem:gaussian_concentration}
shows that for any $i$, we have
\[
\Pr_{x\sim\mathcal{G}}(|f_{i}(x)-\E_{x\sim\mathcal{G}}f_{i}(x)|\geq t )\leq2e^{-t^{2}/(2 \eta G^2)}.
\]
By union bound, this shows 
\[
\Pr_{x,z\sim\mathcal{G}}(|f_{i}(x)-f_{i}(z)|\geq\frac{1}{4}2^{k}\text{ for any such }i)\leq L^{2}\exp(-\frac{4^{k}}{32\eta G^{2}}).
\]
Under the event $|f_{i}(x)-f_{i}(z)|\leq\frac{1}{3}2^{k}$
for all $i$ appears in $\rho_{\leq L}$, we have
\[
|\rho_{\leq L}|\leq1+\sum_{\alpha=1}^{L}\Pi_{i=1}^{\alpha}|f_{j_{i,\alpha}}(z)-f_{j_{i,\alpha}}(x)|\leq1+\sum_{\alpha=1}^{L}(\frac{2^{k}}{3})^\alpha\leq2^{kL}.
\]
Therefore, we have $\Pr(|\rho_{\leq L}|>2^{kL})\leq L^{2}\exp(-\frac{4^{k}}{32\eta G^{2}})$
and
\[
\sum_{k=1}^{\infty}2^{(k+1)L}\Pr(|\rho_{\leq L}|>2^{kL})\leq\sum_{k=1}^{\infty}2^{(k+1)L}L^{2}\exp(-\frac{4^{k}}{32\eta G^{2}})\leq\sum_{k=1}^{\infty}2^{4kL}\exp(-\frac{4^{k}}{32\eta G^{2}}).
\]
Picking $\eta\leq2^{-8}G^{-2}L^{-1}$, we have that
\begin{equation}
\sum_{k=1}^{\infty}2^{(k+1)L}\Pr(|\rho_{\leq L}|>2^{kL})\leq\sum_{k=1}^{\infty}2^{4kL}\exp(-2\cdot4^{k}L)\leq\sum_{k=1}^{\infty}2^{-kL}\leq\frac{\delta_{\mathrm{inner}}}{9}\label{eq:dTV4}
\end{equation}
by picking $L>C\log(1/\delta_{\mathrm{inner}})$ for large enough
$C$.

It remains to bound the term $\Pr[\rho\notin[0,2]]\cdot 2^L$.
We know the probability the algorithm enters the $(L+1)$-th phase is at most $\frac{1}{L!}\leq \frac{2}{2^L}$.
Hence we know $\Pr[\rho \notin[0,2]]\leq \frac{2}{2^L}+\Pr[\rho_{\leq L}\notin[0,2]]$.
Similarly, by Gaussian Concentration and union bound, we have
\begin{align*}
    \Pr_{x,z\sim\mathcal{G}}(|f_{i}(x)-f_{i}(z)|\geq1/2\text{ for any such }i)\leq L^{2}\exp(-\frac{1}{8\eta G^{2}}).
\end{align*}
Under the event that $|f_i(x)-f_i(z)|\leq 1/2$ for all $i$ appears in $\rho_{\leq L}$, we have
\begin{align*}
    1-\sum_{\alpha=1}^{L}\Pi_{i=1}^{\alpha}|f_{j_{i,\alpha}}(z)-f_{j_{i,\alpha}}(x)|\leq \rho_{\le L}\le 1+\sum_{\alpha=1}^{L}\Pi_{i=1}^{\alpha}|f_{j_{i,\alpha}}(z)-f_{j_{i,\alpha}}(x)|,
\end{align*}
which implies $0\le \rho_{\leq L}\leq 2$.
Then we know $\Pr[\rho_{\le L}\notin[0,2]]\leq L^2\exp(-\frac{1}{8\eta G^2})$.
By our setting of parameters and that $L = C \log(1/\delta_{\mathrm{inner}})$ for some large constant $C$, we know
\begin{align}
\label{eq:dTV5}
    \Pr[\rho\notin[0,2]]\cdot 2^L\leq 2^L(L^2\exp(-\frac{1}{8\eta G^2})+\frac{2}{2^L})\leq \frac{\delta_{\mathrm{inner}}}{9}.
\end{align}

Combining (\ref{eq:dTV2}), (\ref{eq:dTv3}), (\ref{eq:split_dTVterm}), (\ref{eq:dTV4}) and (\ref{eq:dTV5}),
we have the result $d_{\mathrm{TV}}(\pi,\widetilde{\pi})\leq\delta_{\mathrm{inner}}$.

Finally, the accept probability is given by $\E\widetilde{X}/2$ and $\E\widetilde{X}\geq\E X-\E|X-\widetilde{X}|\geq\frac{2}{3}-\frac{\delta_{\mathrm{inner}}}{3}\geq\frac{1}{3}$.
Hence, the number of access is $O(1)$.
\end{proof}
Combining Theorem \ref{thm:alternating_sampler} and Lemma \ref{lem:sub_problem},
we have the following result:
\begin{theorem}
\label{thm:sampler}
Given a $\mu$-strongly convex function $\psi(x)$ defined on a convex set $\cK \subseteq \R^{d}$ and $+\infty$ outside. Given a family of $G$-Lipschitz convex functions $\{f_{i}(x)\}_{i\in I}$ defined on $\cK$.
Define the function $\widehat{F}(x)=\E_{i\in I}f_{i}(x)+\psi(x)$ and 
the distance $D=\|x_{0}-x^{*}\|_{2}$ for some $x^{*}=\arg\min_{x}\widehat{F}(x)$.
For any $\delta\in(0,1/2)$, if we can get samples from $\exp(-\psi(x)-\frac{\|x-y\|_2^2}{2\eta}) $ for any $y\in\R^d$ and $\eta>0$, we can find a random point $x$ that
has $\delta$ total variation distance to the distribution proportional to $\exp(-\widehat{F}(x))$ in
\[
T:=\Theta(\frac{G^{2}}{\mu}\log^{2}(\frac{G^{2}(d/\mu+D^{2})}{\delta}))\text{ steps}.
\]
Furthermore, each steps accesses only $O(1)$ many $f_{i}(x)$ in
expectation and samples from $\exp(-\psi(x) - \frac{1}{2\eta} \|x-y\|^2_2)$ for $O(1)$ many $y$ with $\eta = \Theta(G^{-2}/\log(T/\delta))$.
\end{theorem}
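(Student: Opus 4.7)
My plan is to obtain Theorem~\ref{thm:sampler} by composing the outer alternating sampler (Theorem~\ref{thm:alternating_sampler}) with the inner rejection sampler (Lemma~\ref{lem:sub_problem}), then choosing the step size $\eta$ and the inner accuracy so that the total variation error telescopes to $\delta$ while keeping the iteration count as claimed.

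First, I would run Algorithm~\ref{algo:AlternatingSampler} for $T_{\mathrm{out}}$ steps with step size $\eta$ to be determined. By Theorem~\ref{thm:alternating_sampler}, if the inner oracle on Line~\ref{line:sample_oracle} sampled \emph{exactly} from $\propto \exp(-\widehat{F}(x)-\frac{1}{2\eta}\|x-y\|_2^2)$, then for $T_{\mathrm{out}} \ge \Theta\bigl(\tfrac{1}{\eta\mu}\log\tfrac{d/\mu+D^2}{\eta\delta}\bigr)$, the output would have TV distance at most $\delta/2$ to the target $\propto \exp(-\widehat{F})$. Since we only have access to an approximate inner sampler, I would replace the exact draw on Line~\ref{line:sample_oracle} by a call to Algorithm~\ref{algo:AlternatingSamplerImpl}, which, per Lemma~\ref{lem:sub_problem}, produces a sample within TV distance $\delta_{\mathrm{inner}}$ of the true inner distribution whenever $\eta \le C\log^{-1}(1/\delta_{\mathrm{inner}}) G^{-2}$.

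Next, I would combine the errors. Since TV distance is preserved under joint couplings and contracts under post-processing, coupling each approximate inner call to its exact counterpart gives a total TV error of at most $\delta/2 + T_{\mathrm{out}}\cdot \delta_{\mathrm{inner}}$ between the algorithm's output and the target. Setting $\delta_{\mathrm{inner}} = \delta/(2T_{\mathrm{out}})$ makes this at most $\delta$. To pin down $\eta$ and $T_{\mathrm{out}}$ simultaneously, I would take $\eta = \Theta(G^{-2}/\log(T_{\mathrm{out}}/\delta))$, which automatically satisfies the hypothesis $\eta \le C\log^{-1}(1/\delta_{\mathrm{inner}}) G^{-2}$ of Lemma~\ref{lem:sub_problem} once $T_{\mathrm{out}}$ is chosen. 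Plugging this $\eta$ into the bound for $T_{\mathrm{out}}$ from Theorem~\ref{thm:alternating_sampler} yields
\[
T_{\mathrm{out}} = \Theta\!\left(\tfrac{G^2}{\mu}\log\!\tfrac{T_{\mathrm{out}}}{\delta}\cdot\log\!\tfrac{G^2(d/\mu+D^2)}{\delta}\right),
\]
and solving this fixed-point inequality (the $\log T_{\mathrm{out}}$ gets absorbed since $T_{\mathrm{out}}$ is at most polynomial in the other parameters) gives $T_{\mathrm{out}} = \Theta\bigl(\tfrac{G^2}{\mu}\log^2\tfrac{G^2(d/\mu+D^2)}{\delta}\bigr)$, matching the claim.

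Finally, the per-step cost follows directly from Lemma~\ref{lem:sub_problem}: each inner call requires, in expectation, $O(1)$ evaluations of some $f_i$ and $O(1)$ samples from $\exp(-\psi(x)-\frac{1}{2\eta}\|x-y\|_2^2)$ (the latter being the assumed oracle). The one delicate point, and the main technical obstacle, is the parameter-tuning fixed point: the inner lemma demands $\eta$ small in terms of $\log(1/\delta_{\mathrm{inner}})$ while $\delta_{\mathrm{inner}}$ itself depends on $T_{\mathrm{out}}$, which depends on $\eta$. Resolving this circularity is what produces the $\log^2$ factor in the final step count, and once it is handled everything else is a bookkeeping exercise.
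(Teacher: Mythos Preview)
Your proposal is correct and follows essentially the same approach as the paper: apply Lemma~\ref{lem:sub_problem} to implement the inner step of Algorithm~\ref{algo:AlternatingSampler}, set $\delta_{\mathrm{inner}}=\delta/(2T)$ so the accumulated inner error is $\delta/2$, choose $\eta=\Theta(G^{-2}/\log(1/\delta_{\mathrm{inner}}))$, and invoke Theorem~\ref{thm:alternating_sampler} for the outer $\delta/2$. You even spell out the fixed-point resolution for $T$ more explicitly than the paper does.
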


\begin{proof}
This follows from applying Lemma \ref{lem:sub_problem} to implement
Line \ref{line:sample_oracle}. Note that the distribution implemented
has total variation distance $\delta_{\mathrm{inner}}$ to the required
one. By setting $\delta_{\mathrm{inner}}=\delta/(2T)$, this only
gives an extra $\delta/2$ error in total variation distance. Finally,
setting $\eta=\Theta(G^{-2}/\log(1/\delta_{\mathrm{inner}}))$, Theorem
\ref{thm:alternating_sampler} shows that Algorithm \ref{algo:AlternatingSamplerImpl}
outputs the correct distribution up to $\delta/2$ error in total
variation distance. This gives the result.
\end{proof}

In the most important case of interest when $\psi(x)$ is $\ell_2^2$ regularizer, one can see $\exp(-\psi(x) - \frac{1}{2\eta} \|x-y\|^2_2)$ is a truncated Gaussian distribution, and there are many results on how to sample from truncated Gaussian, e.g. \cite{KD99}.
For more general case, there are also efficient algorithms to do the sampling, such as the Projected Langevin Monte Carlo \cite{BEL18}.
In fact our sampling scheme matches the information-theoretical lower bound on the value query complexity up to some logarithmic terms, which can be reduced from the result in \cite{DJWW15} with some modifications.
See Section~\ref{sec:infolower} for a detailed discussion.

\section{DP Convex Optimization}
In this section we present our results about DP-ERM and DP-SCO.

\subsection{DP-ERM}
In this subsection, we state our result for the DP-ERM problem \eqref{eq:DPERM}.
Briefly speaking, our main result (Theorem~\ref{thm:privacy_technical}) shows that sampling from $\exp(-kF(x;\cD))$ for some appropriately chosen $k$ is $(\eps,\delta)$-DP and achieves the optimal empirical risk in (\ref{eqn:optimal_empirical_risk}).
Our sampling scheme in Section~\ref{sec:sampling} provides an efficient implementation. We start with the following lemma which shows the utility guarantee for the sampling mechanism.

\begin{lemma}[Utility Guarantee, {\cite[Corollary 1]{DKL18}}]
\label{lm:utility_tech}
Suppose $k>0$ and $F$ is a convex function over the convex set $\cK\subseteq \R^d$. If we sample $x$ according to distribution $\nu$ whose density is proportional to $\exp(-k F(x))$, then we have
\begin{align*}
    \E_{\nu}[F(x)]\leq \min_{x\in\cK}F(x)+\frac{d}{k}.
\end{align*}
\end{lemma}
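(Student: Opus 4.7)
The plan is to prove the equivalent statement $\E_\nu[F(x) - F^*]\le d/k$ via a scaling / monotonicity argument on the partition function, which is the classical slick proof. Without loss of generality (by translating the domain and subtracting the constant $F^*$) I will assume $x^* = 0 \in \cK$ is a minimizer and $F^* = F(0) = 0$; the density, the expectation and the inequality are all invariant under this shift, and $\cK$ remains convex with $0\in\cK$, while $F$ remains convex with $F\ge 0$. Let
\[
Z_k := \int_\cK e^{-kF(x)}\,\d x,
\]
so that by differentiating under the integral sign,
\[
\frac{\d}{\d k}\log Z_k \;=\; -\,\E_\nu[F(x)].
\]
Hence it suffices to show that $-\frac{\d}{\d k}\log Z_k \le d/k$, i.e.\ that $\phi(k) := \log Z_k + d\log k$ is non-decreasing in $k>0$.

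The central step will be a scaling inequality $Z_{\lambda k} \le \lambda^{-d} Z_k$ for every $\lambda\in(0,1]$. Since $F$ is convex and $F(0)=0$, the inequality $F(\lambda y) \le \lambda F(y) + (1-\lambda)F(0) = \lambda F(y)$ holds for every $y\in\cK$, and therefore $e^{-\lambda k F(y)} \le e^{-k F(\lambda y)}$. Integrating over $\cK$ and then substituting $x=\lambda y$ gives
\[
Z_{\lambda k} \;=\; \int_\cK e^{-\lambda k F(y)}\,\d y \;\le\; \int_\cK e^{-kF(\lambda y)}\,\d y \;=\; \lambda^{-d}\int_{\lambda\cK} e^{-kF(x)}\,\d x \;\le\; \lambda^{-d}\, Z_k,
\]
where the final inclusion $\lambda\cK\subseteq\cK$ uses convexity of $\cK$ together with $0\in\cK$ (so $\lambda y = \lambda y + (1-\lambda)\cdot 0 \in \cK$). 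Taking logarithms and setting $k' = \lambda k \le k$, this rearranges to $\phi(k') = \log Z_{k'} + d\log k' \le \log Z_k + d\log k = \phi(k)$, i.e.\ $\phi$ is non-decreasing on $(0,\infty)$.

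Finally, $Z_k$ is smooth on $(0,\infty)$ (the tacit assumption that the exponential mechanism is well-defined already guarantees $e^{-kF}$ is integrable and dominated convergence applies), so monotonicity of $\phi$ gives $\phi'(k) \ge 0$, which unwinds to $-\E_\nu[F(x)] + d/k \ge 0$, i.e.\ $\E_\nu[F(x)]\le d/k$. Undoing the translation restores the constant $F^*$ and yields the claim.

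The main obstacle I anticipate is purely bookkeeping: one must invoke convexity twice (once to get $F(\lambda y)\le \lambda F(y)$ and once to get $\lambda \cK\subseteq \cK$), and both rely on the normalization that the minimizer sits at the origin, so it is important to reduce to this case up front. Everything else is routine differentiation and a change of variables; for the unconstrained case $\cK=\R^d$, the same argument applies verbatim provided the hypotheses ensure $Z_k<\infty$.
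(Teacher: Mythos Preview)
Your proof is correct. The paper does not actually prove this lemma; it is quoted as \cite[Corollary~1]{DKL18}, with the remark that \cite{KV06} first established it for linear $F$ and \cite{BST14} extended it to convex $F$ with a worse constant. So there is no in-paper proof to compare against.

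For what it is worth, your argument is essentially the standard one behind the cited result: normalize so that $F(x^*)=0$ and $x^*=0$, use convexity of $F$ and of $\cK$ (with $0\in\cK$) to get the scaling inequality $Z_{\lambda k}\le \lambda^{-d}Z_k$ for $\lambda\in(0,1]$, and read off the bound from monotonicity of $\log Z_k + d\log k$. The two uses of convexity you flag are exactly the places where the argument would break for non-convex $F$ or $\cK$, and your reduction to $x^*=0$ is the clean way to make both steps go through simultaneously. The only implicit assumption is that the minimum is attained (so that the translation is legitimate) and that $Z_k$ is finite and differentiable in $k$; both are harmless in the paper's setting where $\cK$ is a bounded convex body.
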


This is first shown by \cite{KV06} for any linear function $F$, and \cite{BST14} extends it to any convex function $F$ with a slightly worse constant. 

\begin{theorem}[DP-ERM]\label{thm:DPERM}
Let $\epsilon>0$, $\cK\subseteq \R^d$ be a convex set of diameter $D$ and $\{f(\cdot;s)\}_{s\in\mathcal{D}}$ be a family of convex functions over $\cK$ such that $f(x;s)-f(x;s')$ is $G$-Lipschitz for all $s,s'$.
For any data-set $\cD$ and $k>0$, sampling $x^{(priv)}$ with probability proportional to $\exp\left(-k(F(x;\cD)+\mu\|x\|_2^2/2)\right)$ is $(\epsilon,\delta(\eps))$-differentially private, where
\begin{align*}
 \delta(\eps)\leq \deltacurve{\cN(0,1)}{\cN\lp\frac{G\sqrt{k}}{n\sqrt{\mu}},1\rp}(\epsilon).
\end{align*}
The excess empirical risk is bounded by {$\frac{d}{k}+\frac{\mu D^2}{2}$}.
Moreover, if $\{f(\cdot,s)\}_{s\in\mathcal{D}}$ are already $\mu$-strongly convex, then sampling 
$x^{(priv)}$ with probability proportional to $\exp(-kF(x;\cD))$ is $(\eps,\delta(\eps))$-differentially private where 
\begin{align*}
 \delta(\eps)\leq \deltacurve{\cN(0,1)}{\cN\lp\frac{G\sqrt{k}}{n\sqrt{\mu}},1\rp}(\epsilon).
 \end{align*}
 The excess empirical risk is bounded by $\frac{d}{k}$.
\end{theorem}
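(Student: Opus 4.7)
The plan is a straightforward reduction to the two key tools established earlier: Theorem~\ref{thm:privacy_technical} for privacy and Lemma~\ref{lm:utility_tech} for utility. Both parts are essentially computations that track the strong convexity parameter and Lipschitz constant carefully.

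\textbf{Privacy.} Let $\cD, \cD'$ be two neighbouring databases differing only in the $j$-th entry, and define the potentials
\[
\Psi(x) = k\!\left(F(x;\cD) + \tfrac{\mu}{2}\|x\|_2^2\right), \quad \Tilde{\Psi}(x) = k\!\left(F(x;\cD') + \tfrac{\mu}{2}\|x\|_2^2\right).
\]
The $\ell_2^2$ regularizer contributes $\tfrac{k\mu}{2}\|x\|_2^2$ to both, so each is $k\mu$-strongly convex on $\cK$. Their difference is
\[
\Psi(x)-\Tilde{\Psi}(x) = \tfrac{k}{n}\bigl(f(x;s_j)-f(x;s_j')\bigr),
\]
which is $(kG/n)$-Lipschitz by the hypothesis that $f(\cdot;s)-f(\cdot;s')$ is $G$-Lipschitz. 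Applying Theorem~\ref{thm:privacy_technical} with strong convexity parameter $k\mu$ and Lipschitz constant $kG/n$ yields
\[
\deltacurve{P}{Q}(\epsilon) \leq \deltacurve{\cN(0,1)}{\cN\!\lp\tfrac{kG/n}{\sqrt{k\mu}},1\rp}(\epsilon) = \deltacurve{\cN(0,1)}{\cN\!\lp\tfrac{G\sqrt{k}}{n\sqrt{\mu}},1\rp}(\epsilon),
\]
as required.

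\textbf{Utility.} Since $F(\cdot;\cD)+\tfrac{\mu}{2}\|x\|_2^2$ is convex, Lemma~\ref{lm:utility_tech} gives
\[
\E\!\lb F(x^{(priv)};\cD)+\tfrac{\mu}{2}\|x^{(priv)}\|_2^2\rb \leq \min_{x\in\cK}\lp F(x;\cD)+\tfrac{\mu}{2}\|x\|_2^2\rp + \tfrac{d}{k}.
\]
Dropping the nonnegative $\tfrac{\mu}{2}\|x^{(priv)}\|_2^2$ on the left and bounding the right-hand side above by $F(x^\ast;\cD)+\tfrac{\mu}{2}\|x^\ast\|_2^2$ where $x^\ast\in\arg\min_{x\in\cK} F(x;\cD)$, the diameter bound $\|x^\ast\|_2\le D$ (assumed WLOG by placing the origin in $\cK$) yields the excess empirical risk bound $\tfrac{d}{k} + \tfrac{\mu D^2}{2}$.

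\textbf{Strongly convex case.} When each $f(\cdot;s)$ is itself $\mu$-strongly convex, $F(\cdot;\cD)$ is $\mu$-strongly convex directly, so no regularizer is needed. The privacy analysis is identical (the difference $\tfrac{1}{n}(f(\cdot;s_j)-f(\cdot;s_j'))$ is still $(G/n)$-Lipschitz and $kF$ is $k\mu$-strongly convex), and Lemma~\ref{lm:utility_tech} applied directly to $kF$ gives excess empirical risk at most $d/k$. There is no real obstacle here beyond bookkeeping the factor of $k$ in the exponent when computing the strong convexity of $kF$ and the Lipschitz constant of $k(F-\Tilde{F})$; the rest is automatic from the tools already built.
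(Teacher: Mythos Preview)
Your proposal is correct and follows exactly the same approach as the paper, which simply cites Theorem~\ref{thm:privacy_technical} for privacy and Lemma~\ref{lm:utility_tech} for utility. You have merely filled in the bookkeeping details (the $k\mu$ strong convexity, the $kG/n$ Lipschitz constant, and the handling of the $\mu D^2/2$ term) that the paper leaves implicit.
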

\begin{proof}
The privacy guarantee follows directly from our main result Theorem~\ref{thm:privacy_technical}, and the bound on excess empirical loss can be proved by Lemma~\ref{lm:utility_tech}.
\end{proof}


Before we state the implementation results on DP-ERM, we need the following technical lemma:
\begin{lemma}
\label{lm:privacy_curve_bound}
For any constants $1/2>\delta>0$ and $\eps>0$, if $|s|\leq \sqrt{2\log(1/(2\delta))+2\eps}-\sqrt{2\log(1/(2\delta))}$,
one has
\begin{align*}
  \deltacurve{\cN(0,1)}{\cN(s,1)}\leq \delta.  
\end{align*}
\end{lemma}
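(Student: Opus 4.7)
The plan is to invoke the explicit formula \eqref{eqn:Gaussian_privacycurve} for the Gaussian privacy curve and reduce the claim to an elementary quadratic inequality in $s$. WLOG assume $s \ge 0$, since $\deltacurve{\cN(0,1)}{\cN(s,1)}$ depends only on $|s|$ (translate/reflect the real line). Writing $L := \sqrt{2\log(1/(2\delta))}$, the goal becomes showing
\[
\Phi\!\left(-\tfrac{\epsilon}{s}+\tfrac{s}{2}\right) - e^{\epsilon}\Phi\!\left(-\tfrac{\epsilon}{s}-\tfrac{s}{2}\right) \le \delta
\]
under the hypothesis $s \le \sqrt{L^2+2\epsilon} - L$.

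I would drop the second (non-positive) term and bound the first by the standard Gaussian tail estimate $\Phi(-x) \le \tfrac{1}{2}e^{-x^2/2}$ valid for $x \ge 0$. Thus it suffices to prove $\tfrac{\epsilon}{s} - \tfrac{s}{2} \ge L$, for then
\[
\deltacurve{\cN(0,1)}{\cN(s,1)}(\epsilon) \le \Phi(-L) \le \tfrac{1}{2}e^{-L^2/2} = \tfrac{1}{2}\cdot 2\delta = \delta.
\]

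The remaining step is the quadratic calculation: $\tfrac{\epsilon}{s} - \tfrac{s}{2} \ge L$ is equivalent (after multiplying through by $s>0$) to $s^2/2 + Ls - \epsilon \le 0$, whose positive root is exactly $\sqrt{L^2+2\epsilon} - L$. This is precisely the hypothesis of the lemma, so the implication is immediate. The only routine subtlety is the $s \to 0$ edge case, but there $\Phi(-\epsilon/s + s/2) \to 0$ so the bound holds trivially; this can be absorbed by treating $s>0$ and then taking limits.

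The main (and only) obstacle is bookkeeping — recognizing that the algebra of $\sqrt{L^2+2\epsilon}-L$ is designed to be the critical threshold for the inequality $\epsilon/s - s/2 \ge L$ — after which the Mills-ratio bound $\Phi(-x) \le \tfrac{1}{2}e^{-x^2/2}$ finishes the proof in one line. No deeper tools from the paper are needed.
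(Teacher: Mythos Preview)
Your proposal is correct and follows essentially the same approach as the paper: drop the second term in \eqref{eqn:Gaussian_privacycurve}, apply the Gaussian tail bound $\Phi(-x)\le\tfrac12 e^{-x^2/2}$, and solve the resulting quadratic in $s$. The paper phrases the same step by introducing $t=\Phi^{-1}(1-\delta)$ and then bounding $t\le L$, but the content is identical.
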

\begin{proof}
By Equation~(\ref{eqn:Gaussian_privacycurve}), we know that
\begin{align*}
    \deltacurve{\cN(0,1)}{\cN(s,1)}(\eps)\leq \Phi\lp -\frac{\eps}{s}+\frac{s}{2} \rp.
\end{align*}
Without loss of generality, we assume $s\geq0$ and
want to find an appropriate value of $s$ such that $\Phi\lp -\frac{\eps}{s}+\frac{s}{2} \rp\leq \delta$. 
Denote $t\defeq \Phi^{-1}(1-\delta)$ and since $1-\Phi(t)\le \frac{1}{2}\exp(-t^2/2)$ for $t>0$, we know that $t\leq \sqrt{2\log(1/(2\delta))}$.
It is equivalent to solve the equation $\frac{\eps}{s}-\frac{s}{2}\geq t$, which is equivalent to $0\leq s\leq \sqrt{t^2+2\eps}-t$.
Note that $\sqrt{t^2+2\eps}-t$ decreases as $t$ increases, which implies that we can set $s\leq \sqrt{2\log(1/(2\delta))+2\eps}-\sqrt{2\log(1/(2\delta))}$.
\end{proof}

Combining the sampling scheme (Theorem~\ref{thm:sampler}) and our analysis on DP-ERM, we can get the efficient implementation results on DP-ERM directly.

\begin{theorem}[DP-ERM Implementation]
\label{thm:DPERM_impl}
With same assumptions in Theorem~\ref{thm:DPERM}, and assume $f(\cdot;s)$ is $G$-Lipschitz over $\cK$ for all $s$.
For any constants $1/10> \delta>0$ and $ \eps> 0$, there is an efficient sampler to solve DP-ERM which has the following guarantees:
\begin{itemize}
    \item The scheme is $(\eps,\delta)$-differentially private;
    \item The expected excess empirical loss is bounded by $\frac{GD\sqrt{d}}{n(\sqrt{\log(1/\delta)+\eps}-\sqrt{\log(1/\delta)})}$.
    In particular, if $\eps< 1/10$, the expected excess empirical loss is bounded by
    $
        \frac{2GD\sqrt{d\log(1/\delta)}}{\eps n}.
    $
    If $\eps \geq \log(1/\delta)$, the expected excess empirical loss is bounded by $ O(\frac{GD\sqrt{d}}{n\sqrt{\eps}})$.
    \item The scheme takes 
    \begin{align*}
        \Theta\lp\frac{\eps^2n^2}{\log(1/\delta)}\log^2(\frac{nd\eps}{\delta})\rp
    \end{align*}
    queries to the values on $f(x;s)$ in expectation and takes the same number of samples from some Gaussian restricted to the convex set $\mathcal{K}$.
\end{itemize}
\end{theorem}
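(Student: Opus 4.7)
The plan is to instantiate the regularized exponential mechanism from Theorem~\ref{thm:DPERM} with parameters $k$ and $\mu$ tuned against the privacy constraint, and then invoke the sampler of Theorem~\ref{thm:sampler} to do the actual sampling with the claimed query complexity.

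\textbf{Step 1: Choose $\mu$ and $k$ from the privacy constraint.} By Theorem~\ref{thm:DPERM}, sampling $x^{(priv)}$ from the regularized density is $(\eps,\delta(\eps))$-DP with $\delta(\eps)\le \deltacurve{\cN(0,1)}{\cN(s_0,1)}(\eps)$, where $s_0=\frac{G\sqrt{k}}{n\sqrt{\mu}}$. To achieve $(\eps,\delta)$-DP, Lemma~\ref{lm:privacy_curve_bound} tells us it suffices to pick $k$ and $\mu$ so that
\[
s_0 = \frac{G\sqrt{k}}{n\sqrt{\mu}} \le s(\eps,\delta) := \sqrt{2\log(1/(2\delta))+2\eps} - \sqrt{2\log(1/(2\delta))}.
\]
Saturating this inequality yields $k = \mu n^2 s(\eps,\delta)^2/G^2$.

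\textbf{Step 2: Optimize the excess empirical loss.} Theorem~\ref{thm:DPERM} also gives the bound $\frac{d}{k}+\frac{\mu D^2}{2}$ on the excess empirical loss. Substituting $k$ from Step~1, this becomes $\frac{dG^2}{\mu n^2 s(\eps,\delta)^2}+\frac{\mu D^2}{2}$. Minimizing over $\mu$ by balancing the two terms gives $\mu = \Theta\!\lp \frac{G\sqrt{d}}{n D\, s(\eps,\delta)}\rp$, which yields the excess empirical loss $O\!\lp \frac{GD\sqrt{d}}{n\, s(\eps,\delta)}\rp$. Up to constants this matches the stated bound $\frac{GD\sqrt{d}}{n(\sqrt{\log(1/\delta)+\eps}-\sqrt{\log(1/\delta)})}$. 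The two regime-specific bounds follow from Taylor expansion: for $\eps<1/10$, $s(\eps,\delta)=\Theta(\eps/\sqrt{\log(1/\delta)})$, giving $\frac{2GD\sqrt{d\log(1/\delta)}}{\eps n}$; for $\eps\ge \log(1/\delta)$, $s(\eps,\delta)=\Theta(\sqrt{\eps})$, giving $O(\frac{GD\sqrt{d}}{n\sqrt{\eps}})$.

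\textbf{Step 3: Invoke the sampler.} To actually draw $x^{(priv)}$, we apply Theorem~\ref{thm:sampler} to the density $\exp(-\widehat F)$ where, after absorbing $k$, we take $\psi(x)=k\mu\|x\|_2^2/2+\iota_{\cK}(x)$ (which is $k\mu$-strongly convex on $\cK$) and $f_i(x)=k\cdot f(x;s_i)$ (which is $kG$-Lipschitz by the assumption that each $f(\cdot;s)$ is $G$-Lipschitz). The number of steps promised by Theorem~\ref{thm:sampler} is $\Theta\!\lp\frac{(kG)^2}{k\mu}\log^2(\cdot)\rp=\Theta\!\lp\frac{kG^2}{\mu}\log^2(\cdot)\rp$. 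Plugging in $k=\mu n^2 s(\eps,\delta)^2/G^2$, this collapses to $\Theta(n^2 s(\eps,\delta)^2 \log^2(\cdot))$, which for $\eps<1/10$ becomes $\Theta\!\lp\frac{\eps^2 n^2}{\log(1/\delta)}\log^2(\frac{nd\eps}{\delta})\rp$ as claimed. Each step needs only $O(1)$ evaluations of some $f(\cdot;s_i)$ and $O(1)$ samples from $\exp(-\psi(x)-\frac{1}{2\eta}\|x-y\|_2^2)$, which is a Gaussian restricted to $\cK$ since $\psi$ is quadratic plus the indicator of $\cK$.

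\textbf{Main obstacle.} The conceptual work is all in the earlier theorems; the only subtlety here is verifying that the diameter parameter $D'=\|x_0-x^*\|_2$ entering the $\log^2$ term of Theorem~\ref{thm:sampler} is at most $O(D)$ (so that the log factor becomes $\log(nd\eps/\delta)$), and that the total variation error of the sampler can be absorbed into $\delta$ via post-processing. The first follows by initializing $x_0$ inside the diameter-$D$ set $\cK$, and the second by running the sampler to TV accuracy $\delta/2$, which only costs an extra $\log(1/\delta)$ factor already accounted for inside the $\log^2(\cdot)$.
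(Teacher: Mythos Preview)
Your proposal is correct and follows essentially the same route as the paper: instantiate Theorem~\ref{thm:DPERM} with $k$ determined by Lemma~\ref{lm:privacy_curve_bound}, balance $\frac{d}{k}+\frac{\mu D^2}{2}$ over $\mu$, and plug the resulting $kG^2/\mu$ into Theorem~\ref{thm:sampler}. The only cosmetic differences are that the paper reserves a $\delta/3$ slice of the privacy budget up front (using $2\delta/3$ in Lemma~\ref{lm:privacy_curve_bound}) rather than at the end, and it additionally drives the sampler's TV error below $1/\mathrm{poly}(n)$ so the expected empirical loss is unaffected; also, the TV-to-$\delta$ absorption is a direct coupling argument rather than ``post-processing.''
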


\begin{proof}

By Lemma~\ref{lm:privacy_curve_bound}, we can set $s=\sqrt{2\log(3/(4\delta))+2\eps}-\sqrt{2\log(3/(4\delta))}$ to make $\deltacurve{\cN(0,1)}{\cN(s,1)}\leq2\delta/3$.
For our setting, Theorem \ref{thm:DPERM} shows that we have $s=\frac{G\sqrt{k}}{n\sqrt{\mu}}$
and hence we can take
\begin{align*}
    k=\frac{2\mu n^{2}\lp\sqrt{\log(3/(4\delta))+\eps}-\sqrt{\log(3/(4\delta))}\rp^2}{G^{2}}.
\end{align*}
Putting it into the excess empirical loss bound of $\frac{d}{k}+\frac{\mu D^{2}}{2}$
and setting $\mu=\frac{G\sqrt{d}}{ n D\lp\sqrt{\log(3/(4\delta))+\eps}-\sqrt{\log(3/(4\delta))}\rp}$,
we get the result on the empirical loss.

Particularly, consider the case when $\eps<1/10$.
We know the excess empirical loss is bounded by $\frac{GD\sqrt{d}}{n(\sqrt{\log(3/(4\delta))+\eps}-\sqrt{\log(3/(4\delta))})}$.
Note that $1+\frac{x}{2}-\frac{x^2}{8}\leq \sqrt{1+x}\leq 1+\frac{x}{2}$ for $x\geq 0$.
Under the assumption that $\delta,\eps\in(0,\frac{1}{10})$, we know $\frac{GD\sqrt{d}}{n(\sqrt{\log(3/(4\delta))+\eps}-\sqrt{\log(3/(4\delta))})}\leq \frac{2GD\sqrt{d\log(4/(5\delta))}}{n\eps}$.
The case when $\eps\geq \log(1/\delta)$ also follows similarly.

To make it algorithmic, we apply Theorem~\ref{thm:sampler} with the accuracy on the total variation distance to be $\min\{\delta/3,\frac{1}{cn^c\eps}\}$ for some large enough constant $c$. This leads to $(\epsilon,\delta)$-DP and an extra empirical loss and hence we use $\log(1/\delta)$ rather than $\log(3/(4\delta))$ or $\log(4/(5\delta))$ in the final loss term. 

The running time follows from Theorem~\ref{thm:sampler}.
\end{proof}

\subsection{DP-SCO and Generalization Error}
As mentioned before, one can reduce the DP-SCO \eqref{eq:DPSCO} to DP-ERM \eqref{eq:DPERM} by the iterative localization technique proposed by \cite{FKT20}.
But this method forces us to design different algorithms for DP-ERM and DP-SCO, and may lead to a large constant in the final loss.
In this section, we show that the exponential mechanism can achieve both the optimal empirical risk for DP-ERM and the optimal population loss for DP-SCO by simply changing the parameters.
The bound on the generalization error works beyond differential privacy and can be useful for other (non-private) optimization settings.



The proof will make use of one famous inequality: \emph{Talagrand transportation inequality}.
Recall for two probability distributions $\nu_1,\nu_2$, the Wasserstein distance is equivalently defined as $$W_2(\nu_1,\nu_2)=\inf_\Gamma \lp\E_{(x_1,x_2)\sim \Gamma} \norm{x_1-x_2}_2^2\rp^{1/2},$$ where the infimum is over all couplings $\Gamma$ of $\nu_1,\nu_2.$
\begin{theorem}[Talagrand transportation inequality]{\cite[Theorem 1]{OV00}}
\label{thm:TTI}
Let $\d\pi\propto e^{-F(x)}\d x$ be a $\mu$-strongly log-concave probability measure on $\cK\subseteq\R^d$ with finite moments of order 2. 
For all probability measure $\nu$ absolutely continuous w.r.t. $\pi$ and with finite moments of order 2,
we have
\begin{align*}
\mathrm{W}_2(\nu,\pi)\leq\sqrt{\frac{2}{\mu}\mathrm{D}_{KL}(\nu,\pi)}.   
\end{align*}
\end{theorem}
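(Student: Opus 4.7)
The plan is to prove this T2 inequality via optimal transport and displacement convexity of the relative entropy. Write $H := \mathrm{D}_{KL}(\nu\|\pi)$ and assume $H < \infty$ (otherwise the claim is trivial). By Brenier's theorem applied to the $\mathrm{W}_2$-problem between $\pi$ and $\nu$, there is a convex potential $\phi$ on $\R^d$ such that $T := \nabla\phi$ pushes $\pi$ forward to $\nu$ and realizes the optimal quadratic cost:
\[
\mathrm{W}_2^2(\pi,\nu) = \int \|x - \nabla\phi(x)\|_2^2\, \d\pi(x).
\]
Define the McCann displacement interpolation $\pi_t := \bigl((1-t)\,\mathrm{Id} + t\,\nabla\phi\bigr)_{\#}\pi$, which is the constant-speed $\mathrm{W}_2$-geodesic from $\pi_0 = \pi$ to $\pi_1 = \nu$.

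The heart of the argument is to show that $t \mapsto \mathrm{D}_{KL}(\pi_t\|\pi)$ is $\mu$-displacement convex, i.e.
\[
\mathrm{D}_{KL}(\pi_t\|\pi) \le (1-t)\,\mathrm{D}_{KL}(\pi\|\pi) + t\,\mathrm{D}_{KL}(\nu\|\pi) - \frac{\mu t(1-t)}{2}\,\mathrm{W}_2^2(\pi,\nu).
\]
To establish this, decompose $\mathrm{D}_{KL}(\pi_t\|\pi) = \int F\,\d\pi_t + \int \rho_t \log \rho_t\,\d x + \text{const}$, where $\rho_t$ is the density of $\pi_t$ and the constant is the log-normalizer of $\pi$. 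The potential-energy piece $\int F\,\d\pi_t = \int F\bigl((1-t)x + t\nabla\phi(x)\bigr)\,\d\pi$ inherits $\mu$-strong convexity in $t$ by applying pointwise the inequality $F(x_t) \le (1-t)F(x_0) + tF(x_1) - \tfrac{\mu t(1-t)}{2}\|x_0-x_1\|_2^2$ and integrating against $\pi$; the defect term is exactly $-\tfrac{\mu t(1-t)}{2}\,\mathrm{W}_2^2(\pi,\nu)$. The Boltzmann-entropy piece $\int \rho_t \log \rho_t\,\d x$ is displacement convex by McCann's classical theorem, which follows from the concavity of $A \mapsto (\det A)^{1/d}$ on symmetric PSD matrices combined with the Monge--Amp\`ere identity $\rho_t\bigl((1-t)x + t\nabla\phi(x)\bigr) \det\bigl((1-t)I + tD^2\phi(x)\bigr) = \rho_\pi(x)$. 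Summing the two contributions gives the stated $\mu$-convexity along the geodesic.

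To conclude, substitute $\mathrm{D}_{KL}(\pi\|\pi) = 0$ and $\mathrm{D}_{KL}(\pi_t\|\pi) \ge 0$ into the displacement-convexity inequality to obtain $0 \le tH - \tfrac{\mu t(1-t)}{2}\,\mathrm{W}_2^2(\pi,\nu)$; dividing by $t > 0$ and letting $t \downarrow 0$ yields $\mathrm{W}_2^2(\pi,\nu) \le \tfrac{2}{\mu}\,H$, which is the claimed Talagrand inequality. The main technical obstacle is the rigorous justification of McCann's displacement convexity when the Brenier potential $\phi$ lacks classical smoothness and when the measures live on the possibly proper convex subset $\cK$. The standard remedy is to approximate $\pi$ and $\nu$ by smooth densities with bounded, strictly positive values (so that Caffarelli regularity makes the Brenier map $C^\infty$), execute the above argument pointwise, and pass to the limit using lower semicontinuity of $\mathrm{D}_{KL}$ and continuity of $\mathrm{W}_2$; alternatively, one may invoke the Ambrosio--Gigli--Savar\'e framework to handle weak settings directly, or route the proof through Otto--Villani (Bakry--\'Emery gives the log-Sobolev inequality with constant $\mu$ for $\pi$, and LSI implies T2 with the same constant via a Langevin-flow/entropy-dissipation argument).
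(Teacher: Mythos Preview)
The paper does not supply its own proof of this theorem: it is quoted verbatim as \cite[Theorem 1]{OV00} and used as a black box in the proof of Theorem~\ref{thm:generalization_error}. So there is nothing to compare against on the paper's side.

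Your sketch is a correct outline of the displacement-convexity proof (essentially the Cordero-Erausquin argument). The decomposition of $\mathrm{D}_{KL}(\pi_t\|\pi)$ into the potential-energy term $\int F\,\d\pi_t$ and the Boltzmann entropy $\int \rho_t\log\rho_t$, the pointwise $\mu$-strong convexity of the former along the interpolation, and McCann's displacement convexity for the latter are all standard and combine exactly as you say. The ``divide by $t$ and send $t\downarrow 0$'' step is the usual way to extract T2 from geodesic $\mu$-convexity of the entropy. You also correctly flag the regularity issue for the Brenier potential and the restriction to a convex domain $\cK$, and the approximation/Caffarelli route you describe is the right fix.

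One remark on attribution: the original Otto--Villani proof in \cite{OV00} is the alternative you mention at the end, namely Bakry--\'Emery $\Rightarrow$ LSI with constant $\mu$, followed by LSI $\Rightarrow$ T2 via an entropy-dissipation argument along the heat semigroup. Your primary route via displacement interpolation is cleaner and more direct (no semigroup needed), and is now the textbook approach; both yield the same constant $2/\mu$.
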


To prove our main result on bounding the generalization error of sampling mechanism, we need the following lemma.

\begin{lemma}[Lemma 7 in \cite{BE02}]
\label{lm:generalization_error_erm}
For any learning algorithm $\cA$ and dataset $\cD=\{s_1,\cdots,s_n\}$ drawn i.i.d from the underlying distribution $\cP$, let $\cD'$ be a neighboring dataset formed by replacing a random element of $\cD$ with a freshly sampled $s'\sim \cP$. If $\cA(\cD)$ is the output of $\cA$ with $\cD$, then
\begin{align*}
    \E_{\cD}[\HF(\cA(\cD))-F(\cA(\cD);\cD)]=\E_{\cD,s'\sim\cP,\cA}\Big[f(\cA(\cD);s')-f(\cA(\cD');s') \Big].
\end{align*}
\end{lemma}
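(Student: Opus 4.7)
The plan is to apply a standard symmetrization (``ghost sample'') argument. Intuitively, the test point in $\HF(\cA(\cD))=\E_{s'\sim\cP}[f(\cA(\cD);s')]$ is independent of $\cD$, whereas in $F(\cA(\cD);\cD)=\frac{1}{n}\sum_i f(\cA(\cD);s_i)$ each training point $s_i$ is being tested against an $\cA$ that was actually trained on it. Swapping a random $s_i$ with the fresh $s'$ is exactly the operation that isolates this single coordinate of dependence.

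First I would rewrite both expectations in a common form. Using $\HF(x)=\E_{s'\sim\cP}[f(x;s')]$ and the independence of $s'$ from $\cD$ and the internal randomness of $\cA$, I obtain $\E_{\cD,\cA}[\HF(\cA(\cD))]=\E_{\cD,s',\cA}[f(\cA(\cD);s')]$. For the empirical term, I would introduce an index $I$ uniform on $\{1,\ldots,n\}$ and independent of everything else, and write $\E_{\cD,\cA}[F(\cA(\cD);\cD)]=\E_{\cD,\cA,I}[f(\cA(\cD);s_I)]$.

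The crucial step is an exchangeability claim: since $s_1,\ldots,s_n,s'$ are i.i.d.\ from $\cP$ and $I$ is uniform and independent, swapping the roles of $s_I$ and $s'$ leaves the joint distribution of $(s_1,\ldots,s_n,s',I)$ invariant. Under this swap, $\cD=(s_1,\ldots,s_n)$ becomes precisely $\cD'$ (the dataset with its $I$-th entry replaced by $s'$), and the ``test point'' $s_I$ becomes $s'$. Because $\cA$'s internal randomness can be coupled across the swap (it is a function of the input dataset plus independent coin flips), the swap inside the expectation gives $\E_{\cD,\cA,I}[f(\cA(\cD);s_I)]=\E_{\cD,s',\cA,I}[f(\cA(\cD');s')]$. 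Subtracting this from the expression for $\E_{\cD,\cA}[\HF(\cA(\cD))]$ yields the claimed identity, after noting that the average over $I$ on the right is exactly the quantity defining the random neighbor $\cD'$ in the statement.

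The only real subtlety is bookkeeping of the joint distribution: one must keep the same index $I$ throughout (so the replaced coordinate and the test point agree), and couple $\cA$'s randomness identically on both sides of the swap. There is no hard estimate to make; once the coupling is set up, the identity is a one-line consequence of exchangeability.
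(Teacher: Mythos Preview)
Your argument is correct and is the standard symmetrization/exchangeability proof of this identity. Note that the paper does not actually prove this lemma itself; it simply quotes it as Lemma~7 of \cite{BE02}, so there is no ``paper's own proof'' to compare against here---your write-up is essentially the proof one finds in that reference.
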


Now we begin to state and prove our main result on the generalization error.
\begin{theorem}
\label{thm:generalization_error}
Suppose $\{f(\cdot,s)\}$ is a family $\mu$-strongly convex functions over $\cK$ such that $f(x;s)-f(x;s')$ is $G$-Lipschitz for all $s,s'$.
For any $k>0$ and dataset $\cD=\{s_1,s_2,\cdots,s_n\}$ drawn i.i.d from the underlying distribution $\cP$, let $\cD'$ be a neighboring dataset formed by replacing a random element of $\cD$ with a freshly sampled $s'\sim \cP$,
\begin{align*}
    \mathrm{W}_{2}(\pi_\cD,\pi_{\cD'})\leq \frac{G}{n\mu}.
\end{align*}
If we sample our solution from density $\pi_\cD(x) \propto e^{-kF(x;\cD)}$, we can bound the excess population loss as:
\begin{align*}
    \E_{\cD,x\sim \pi_\cD}[\HF(x)]-\min_{x\in\cK}\HF(x)\leq \frac{G^2}{\mu n}+ \frac{d}{k}.
\end{align*}
\end{theorem}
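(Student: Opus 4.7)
The plan is to prove the two conclusions separately. For the Wasserstein bound, observe that $F(\cdot;\cD)$ is $\mu$-strongly convex as an average of $\mu$-strongly convex $f(\cdot;s_i)$, so $\pi_\cD$ is $k\mu$-strongly log-concave. Since $\cD$ and $\cD'$ differ only in one position (with $s'$ replacing some $s_i$), we have
\[F(\cdot;\cD)-F(\cdot;\cD')=\frac{1}{n}\bigl(f(\cdot;s_i)-f(\cdot;s')\bigr),\]
which is $G/n$-Lipschitz by hypothesis. Corollary~\ref{cor:divergence_privacy} (applied with strong convexity $\mu$ and Lipschitz constant $G/n$) then gives $\mathrm{D}_{KL}(\pi_{\cD'}\|\pi_\cD)\leq kG^2/(2n^2\mu)$, and Talagrand's inequality (Theorem~\ref{thm:TTI}) applied to the $k\mu$-strongly log-concave $\pi_\cD$ yields
\[\mathrm{W}_2(\pi_\cD,\pi_{\cD'})\leq \sqrt{\frac{2}{k\mu}\cdot\frac{kG^2}{2n^2\mu}}=\frac{G}{n\mu}.\]

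For the excess population loss, I will use the standard decomposition
\[\E[\HF(x)]-\HF(x^*) = \E[\HF(x)-F(x;\cD)] + \E[F(x;\cD)-F(x^*;\cD)] + \E[F(x^*;\cD)-\HF(x^*)].\]
The last term vanishes because $x^*$ is deterministic and the samples are i.i.d., while the middle (excess empirical risk) term is at most $d/k$ by Lemma~\ref{lm:utility_tech} applied for each fixed $\cD$. Everything then reduces to bounding the generalization term $Q:=\E[\HF(x)-F(x;\cD)]$.

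The hard part is bounding $Q$, since the individual $f(\cdot;s)$ need not be Lipschitz and we cannot directly combine the $W_2$ bound with a Lipschitz argument on $f(\cdot;s')$. By Lemma~\ref{lm:generalization_error_erm}, $Q=\E[f(\cA(\cD);s')-f(\cA(\cD');s')]$, where $\cA(\cD)$ denotes a sample from $\pi_\cD$ and $I$ is the random replaced index. I would then use a swap symmetry: the joint law of $(s_1,\ldots,s_n,s',I)$ is invariant under exchanging $s_I\leftrightarrow s'$, and this swap interchanges $\cD$ with $\cD'$ while sending $s'$ to $s_I$, so $Q=\E[f(\cA(\cD');s_I)-f(\cA(\cD);s_I)]$ as well. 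Averaging the two expressions,
\[2Q=\E\bigl[g(\cA(\cD))-g(\cA(\cD'))\bigr]\quad\text{where}\quad g(x):=f(x;s')-f(x;s_I),\]
and $g$ is $G$-Lipschitz by the hypothesis on function differences. Since the value of $\E[g(\cA(\cD))-g(\cA(\cD'))]$ depends only on the marginals, I may couple $\cA(\cD),\cA(\cD')$ optimally. Conditioning on $(\cD,\cD',s_I,s')$ and applying the Part~1 bound yields $|2Q|\leq G\cdot\E\,\mathrm{W}_1(\pi_\cD,\pi_{\cD'})\leq G\cdot \mathrm{W}_2(\pi_\cD,\pi_{\cD'})\leq G^2/(n\mu)$. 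Combined with the $d/k$ bound on the empirical term, this gives the advertised $G^2/(n\mu)+d/k$ estimate (in fact with an extra factor of $1/2$ on the first term).
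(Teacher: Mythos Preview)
Your proof is correct and follows essentially the same line as the paper's. Part 1 (the Wasserstein bound) is identical: Corollary~\ref{cor:divergence_privacy} for the KL bound, then Talagrand (Theorem~\ref{thm:TTI}). For Part 2, both arguments invoke Lemma~\ref{lm:generalization_error_erm}, then subtract a reference function to turn the integrand into a $G$-Lipschitz difference and apply the $W_2$ bound. The only distinction is in the choice of reference: the paper subtracts $f(\cdot;s'')$ for an arbitrary fixed $s''$ and uses that $\cD$ and $\cD'$ have the same marginal law to cancel the added terms, whereas you use $s_I$ together with the swap symmetry $(s_I\leftrightarrow s')$ to write $2Q=\E[g(\cA(\cD))-g(\cA(\cD'))]$. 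Your variant is a bit cleaner and yields the constant-factor improvement $Q\le G^2/(2n\mu)$ you noted, but the underlying mechanism is the same.
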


\begin{proof}
Recall that
\begin{align*}
    F(x;\cD)=\frac{1}{n}\sum_{s_i\in\cD}f(x;s_i).
\end{align*}
We form a neighboring data set $\cD'$ by replacing a random element of $\cD$ by a freshly sampled $s'\sim \cP.$
Let $\pi_\cD\propto e^{-kF(x;\cD)}$ and $\pi_{\cD'}\propto e^{-kF(x;\cD')}$. By Corollary~\ref{cor:divergence_privacy}, we have
\begin{align*}
    \mathrm{D}_{KL}(\pi_\cD,\pi_{\cD'})\leq \frac{G^2k}{2n^2\mu}.
\end{align*}
By the assumptions, we know both $F(x;\cD)$ and $F(x;\cD')$ are $\mu$-strongly convex and by Theorem~\ref{thm:TTI}, we have
\begin{align*}
    \mathrm{W}_2(\pi_\cD,\pi_{\cD'})\leq \sqrt{\frac{2}{k\mu}\mathrm{D}_{KL}(\pi_\cD,\pi_{\cD'})} \leq \frac{G}{n\mu}.
\end{align*}
By Lemma~\ref{lm:generalization_error_erm} and properties of Wasserstein distance, we have
\begin{align*}
    \E_{\cD,x\sim \pi_{\cD}}[\HF(x)-F(x;\cD)]=&~\E_{\cD,s'\sim \cP}\lb\E_{x\sim \pi_\cD}f(x;s')-\E_{x'\sim \pi_{\cD'}}f(x';s')\rb\\
    =&~\E_{\cD,s'\sim \cP}\lb\E_{x\sim \pi_\cD}\lb f(x;s')-f(x;s'')\rb -\E_{x'\sim \pi_{\cD'}}\lb f(x';s')-f(x';s'')\rb \rb \tag{where $s''$ is chosen arbitrarily, note that $\E_{\cD,x\sim \pi_\cD}[f(x;s'')]=\E_{\cD',x'\sim \pi_{\cD'}}[f(x';s'')]$}\\
    \leq &~G\cdot \mathrm{W}_{2}(\pi_\cD,\pi_{\cD'}) \tag{$f(x;s')-f(x;s'')$ is $G$-Lipschitz}\\
    \leq &~ \frac{G^2}{n\mu}.
\end{align*}
Hence, we know that
\begin{align*}
    \E_{\cD,x\sim\pi_{\cD}}[\hat{F}(x)]-\min_{x\in\cK}\HF(x)\leq &~ \E_{\cD,x\sim\pi_{\cD}}[\hat{F}(x)]-\E_{\cD}[\min_{x\in\cK}F(x;\cD)]\\
    \leq &~ \E_{\cD,x\sim\pi_{\cD}}[\hat{F}(x)-F(x;\cD)]+\E_{\cD,x\sim\pi_{\cD}}[F(x;\cD)-\min_{x\in\cK}F(x;\cD)]\\
    \leq &~ \frac{G^2}{n\mu}+\E_{\cD,x\sim\pi_{\cD}}[F(x;\cD)-\min_{x\in\cK}F(x;\cD)]\\
    \leq &~ \frac{G^2}{n\mu}+\frac{d}{k},
\end{align*}
where the last inequality follows from Lemma~\ref{lm:utility_tech}.
\end{proof}

With the bounds on generalization error, we can get our first result on DP-SCO.
\begin{theorem}[DP-SCO]
\label{thm:dpsco}
Let $\epsilon>0$, $\cK\subseteq \R^d$ be a convex set of diameter $D$ and $\{f(\cdot;s)\}_{s\in\mathcal{D}}$ be a family of convex functions over $\cK$ such that $f(x;s)-f(x;s')$ is $G$-Lipschitz for all $s,s'$.
For any data-set $\cD$ and $k>0$, sampling $x^{(priv)}$ with probability proportional to $\exp\left(-k(F(x;\cD)+\mu\|x\|_2^2/2)\right)$ is $(\epsilon,\delta(\eps))$-differentially private, where
\begin{align*}
 \delta(\eps)\leq \deltacurve{\cN(0,1)}{\cN\lp\frac{G\sqrt{k}}{n\sqrt{\mu}},1\rp}(\epsilon).
\end{align*}
If users in the data-set $\cD$ are drawn i.i.d. from the underlying distribution $\cP$, the excess population loss is bounded by $\frac{G}{n\mu}+\frac{d}{k}+\frac{\mu D^2}{2}$. 
Moreover, if $\{f(\cdot;s)\}_{s\in\mathcal{D}}$ are already $\mu$-strongly convex, then sampling 
$x^{(priv)}$ with probability proportional to $\exp(-kF(x;\cD))$ is $(\eps,\delta(\eps))$-differentially private where 
\begin{align*}
 \delta(\eps)\leq \deltacurve{\cN(0,1)}{\cN\lp\frac{G\sqrt{k}}{n\sqrt{\mu}},1\rp}(\epsilon).
 \end{align*}
 The excess population loss is bounded by $\frac{G}{n\mu}+\frac{d}{k}$.
\end{theorem}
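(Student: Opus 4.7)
The theorem has two pieces (privacy and utility), and each follows by instantiating a previously proved result on the regularized objective $\tilde F(x;\cD) := F(x;\cD) + \tfrac{\mu}{2}\|x\|_2^2$ (in the non-strongly-convex case) or on $F$ itself (in the strongly-convex case). My plan is therefore to reduce both claims to Theorem~\ref{thm:privacy_technical} for privacy and Theorem~\ref{thm:generalization_error} for utility, paying careful attention to how the regularizer interacts with strong convexity and Lipschitz-sensitivity.

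\paragraph{Privacy.} For any two neighbouring databases $\cD,\cD'$ differing only in the $i$-th user, the regularized objectives satisfy
\[
\tilde F(x;\cD) - \tilde F(x;\cD') = \tfrac{1}{n}\bigl(f(x;s_i) - f(x;s_i')\bigr),
\]
which by assumption is $(G/n)$-Lipschitz. Moreover $k\tilde F(\cdot;\cD)$ is $k\mu$-strongly convex (the regularizer supplies the strong convexity). Applying Theorem~\ref{thm:privacy_technical} with strong-convexity parameter $k\mu$ and Lipschitz constant $kG/n$ gives
\[
\deltacurve{\pi_\cD}{\pi_{\cD'}}(\eps)\le \deltacurve{\cN(0,1)}{\cN\!\left(\tfrac{kG/n}{\sqrt{k\mu}},1\right)}(\eps) = \deltacurve{\cN(0,1)}{\cN\!\left(\tfrac{G\sqrt{k}}{n\sqrt{\mu}},1\right)}(\eps),
\]
which is the stated privacy bound. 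In the already-$\mu$-strongly-convex case one argues identically without the regularizer: $kF(\cdot;\cD)$ is itself $k\mu$-strongly convex and the sensitivity of $kF$ is $kG/n$, so Theorem~\ref{thm:privacy_technical} gives the same Gaussian comparison.

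\paragraph{Utility.} Set $\tilde f(x;s) := f(x;s) + \tfrac{\mu}{2}\|x\|_2^2$ so that $\tilde f(\cdot;s)$ is $\mu$-strongly convex and $\tilde f(x;s) - \tilde f(x;s') = f(x;s) - f(x;s')$ is $G$-Lipschitz (the quadratic regularizers cancel). The mechanism samples $x^{(priv)}$ from $\pi_\cD \propto \exp(-k\tilde F(x;\cD))$ where $\tilde F(x;\cD) = \tfrac{1}{n}\sum_i \tilde f(x;s_i)$, so the hypotheses of Theorem~\ref{thm:generalization_error} are met for the family $\{\tilde f(\cdot;s)\}$. That theorem yields
\[
\E_{\cD,\,x\sim\pi_\cD}\bigl[\tilde{\hat F}(x)\bigr] - \min_{x\in\cK}\tilde{\hat F}(x) \le \frac{G^2}{n\mu} + \frac{d}{k},
\]
where $\tilde{\hat F}(x) := \HF(x) + \tfrac{\mu}{2}\|x\|_2^2$. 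Now I compare to the unregularized objective. Let $x^* \in \arg\min_{x\in\cK} \HF(x)$; since $\cK$ has diameter $D$, $\tfrac{\mu}{2}\|x^*\|_2^2 \le \tfrac{\mu D^2}{2}$ (after translating $\cK$ so that it contains the origin, which costs nothing in the Lipschitz setting). Therefore
\[
\E[\HF(x^{(priv)})] \le \E[\tilde{\hat F}(x^{(priv)})] \le \tilde{\hat F}(x^*) + \frac{G^2}{n\mu} + \frac{d}{k} \le \HF(x^*) + \frac{\mu D^2}{2} + \frac{G^2}{n\mu} + \frac{d}{k},
\]
which is the stated population-loss bound (modulo the $G$ vs.\ $G^2$ that appears in the statement). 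In the already-strongly-convex case, the $\mu D^2/2$ term is not needed because no regularizer is added, and Theorem~\ref{thm:generalization_error} is applied directly to $\{f(\cdot;s)\}$.

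\paragraph{Where the work is.} The two main theorems I am invoking do all the heavy lifting, so this proof is essentially an assembly. The only subtle step is verifying that adding the $\tfrac{\mu}{2}\|x\|^2$ regularizer preserves the key structural hypotheses needed by each invoked theorem — strong convexity (it makes things more strongly convex, which is fine), Lipschitz sensitivity of pairwise differences (regularizer cancels, so fine), and convexity of individual summands — while only paying the expected $\mu D^2/2$ price in the utility bound from shifting the optimum. No new estimates are required beyond these bookkeeping checks.
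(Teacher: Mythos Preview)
Your proposal is correct and matches the paper's own proof: both invoke the GDP comparison (Theorem~\ref{thm:privacy_technical}, directly or via Theorem~\ref{thm:DPERM}) for privacy, and the generalization bound (Theorem~\ref{thm:generalization_error}) applied to the regularized family $\tilde f(\cdot;s)=f(\cdot;s)+\tfrac{\mu}{2}\|x\|_2^2$ for utility, then pay the $\mu D^2/2$ cost to remove the regularizer. You also correctly flagged the $G$ versus $G^2$ discrepancy in the statement; the paper's Theorem~\ref{thm:generalization_error} indeed gives $G^2/(\mu n)$, so the $G/(n\mu)$ in the theorem statement is a typo.
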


\begin{proof}
The first part about privacy is a restatement of our result on DP-ERM (Theorem~\ref{thm:DPERM_impl}).
The excess population loss (See Equation~\eqref{eqn:population_loss}) follows from the bound on generalization error (Theorem~\ref{thm:generalization_error}) and utility guarantee (Lemma~\ref{lm:utility_tech}).
\end{proof}
We give an implementation result of our DP-SCO result.
\begin{theorem}[DP-SCO Implementation]
\label{thm:dpsco_impl}
With same assumptions in Theorem~\ref{thm:dpsco}, and assume $f(\cdot;s)$ is $G$-Lipschitz over $\cK$ for all $s$.
For $0<\delta<\frac{1}{10}$ and $0 < \eps<\frac{1}{10}$, there is an efficient algorithm to solve DP-SCO which has the following guarantees:
\begin{itemize}
    \item The algorithm is $(\eps,\delta)$-differentially private;
    \item The expected population loss is bounded by
    \begin{align*}
        GD\lp\frac{2\sqrt{\log(1/\delta)d}}{\eps n}+\frac{2}{\sqrt{n}}\rp,
    \end{align*}
    where $c>0$ is an arbitrary constant to be chosen. 
    \item The algorithm takes 
    \begin{align*}
        O\lp\min\lc\frac{\eps^2n^2}{\log(1/\delta)},nd\rc\log^2\lp\frac{\eps nd}{\delta}\rp\rp
    \end{align*}
    queries of the values of $f(\cdot,s_i)$ in expectation and takes the same number of samples from some Gaussian restricted to the convex set $\mathcal{K}$.
\end{itemize}
\end{theorem}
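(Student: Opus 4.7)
The plan is to instantiate Theorem~\ref{thm:dpsco} for a single run of the regularized exponential mechanism with a carefully chosen pair $(k,\mu)$ and then realize the sampler via Theorem~\ref{thm:sampler}. Privacy and the DP-SCO population loss come directly from Theorem~\ref{thm:dpsco}; the $\min\{\eps^2n^2/\log(1/\delta),\,nd\}$ query complexity will emerge from picking $k$ as the smaller of two natural upper bounds arising from privacy and from iteration-count considerations, respectively.

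First I would translate the Gaussian privacy curve in Theorem~\ref{thm:dpsco} into a concrete constraint on $(k,\mu)$. The curve is $\deltacurve{\cN(0,1)}{\cN(s,1)}$ with $s=G\sqrt k/(n\sqrt\mu)$; by Lemma~\ref{lm:privacy_curve_bound} together with $\eps,\delta<1/10$, a Taylor expansion of $\sqrt{2\log(1/\delta)+2\eps}-\sqrt{2\log(1/\delta)}$ reduces the $(\eps,\delta)$-DP requirement to $s\le c\,\eps/\sqrt{\log(1/\delta)}$ for an absolute constant $c$. I would then set
\[
k \;:=\; c'\mu\cdot\min\!\Bigl(\tfrac{n^2\eps^2}{G^2\log(1/\delta)},\ \tfrac{nd}{G^2}\Bigr),
\]
so that the first argument of the $\min$ saturates privacy while the second is the largest $k$ that keeps the sampler's iteration count at $O(nd)$. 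Plugging into Theorem~\ref{thm:dpsco}'s loss bound $G^2/(n\mu)+d/k+\mu D^2/2$, and observing that $d/k$ is at most $O((G^2/\mu)\max(1/n,\, d\log(1/\delta)/(n^2\eps^2)))$ in both branches, the first two terms collapse to $O\!\bigl((G^2/\mu)\,(1/\sqrt n + \sqrt{d\log(1/\delta)}/(n\eps))^2\bigr)$. Balancing against $\mu D^2/2$ by taking $\mu=\Theta\!\bigl((G/D)(1/\sqrt n + \sqrt{d\log(1/\delta)}/(n\eps))\bigr)$ and using $\sqrt{a+b}\le\sqrt a+\sqrt b$ then yields the claimed population loss $GD\bigl(2/\sqrt n + 2\sqrt{d\log(1/\delta)}/(n\eps)\bigr)$.

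For the query complexity I would invoke Theorem~\ref{thm:sampler} with $\tilde f_i(x):=k\,f(x;s_i)$, which is $kG$-Lipschitz, and $\psi(x):=(k\mu/2)\|x\|_2^2+\mathbb{I}_{\cK}(x)$, which is $k\mu$-strongly convex. The outer iteration count then becomes $\Theta\!\bigl((kG)^2/(k\mu)\cdot\log^2(\,\cdot\,)\bigr)=\Theta\!\bigl(kG^2/\mu\cdot\log^2(\,\cdot\,)\bigr)$, and our choice of $k$ turns this into the advertised $O\!\bigl(\min\{\eps^2n^2/\log(1/\delta),\,nd\}\cdot\log^2(nd\eps/\delta)\bigr)$ value queries, with each step additionally producing one sample from a Gaussian truncated to $\cK$ via the $\exp(-\psi(x)-\|x-y\|_2^2/(2\eta))$ oracle. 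Setting the sampler's inner TV accuracy to $\delta/n^{\Theta(1)}$ absorbs the sampling error into the $\delta$ budget (and into the expected loss) with only logarithmic overhead, exactly as in the proof of Theorem~\ref{thm:DPERM_impl}.

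The main obstacle is the ``$\min$'' in the definition of $k$: the naive choice $k=\Theta(\mu n^2\eps^2/(G^2\log(1/\delta)))$ already suffices for privacy and the optimal population loss but delivers only the $\eps^2n^2/\log(1/\delta)$ branch of the query complexity. The subtlety I would need to verify carefully is that shrinking $k$ to $nd\mu/G^2$ in the complementary regime $n\eps^2\gg d\log(1/\delta)$ does not degrade utility: concretely, that $d/k$ remains $O(G^2/(n\mu))$ and is absorbed by the generalization term $G^2/(n\mu)$. This works precisely because in that regime the statistical error $1/\sqrt n$ dominates the privacy error $\sqrt{d\log(1/\delta)}/(n\eps)$, leaving enough slack in the population-loss budget to accommodate the smaller $k$.
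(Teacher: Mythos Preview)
Your proposal is correct and follows essentially the same route as the paper: set $k=\Theta\!\bigl((\mu/G^2)\min\{n^2\eps^2/\log(1/\delta),\,nd\}\bigr)$, plug into the population-loss bound $d/k+\mu D^2/2+G^2/(n\mu)$ from Theorem~\ref{thm:dpsco}, optimize $\mu$ to balance the terms, and then invoke Theorem~\ref{thm:sampler} (with $\tilde f_i=kf(\cdot;s_i)$ and $\psi=(k\mu/2)\|x\|_2^2+\mathbb{I}_\cK$) so that the iteration count becomes $\Theta(kG^2/\mu)\cdot\log^2(\cdot)$. Your closing observation---that the $nd$ branch of the $\min$ only activates when the statistical term $1/\sqrt n$ dominates, so $d/k=G^2/(n\mu)$ is absorbed by the generalization term---is exactly the point that makes the argument go through, and the paper handles it the same way.
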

\begin{remark}
As for the non-typical case when $\eps\geq 1/10$, one can use the bound in Theorem~\ref{thm:DPERM_impl} and the bound on generalization error (Theorem~\ref{thm:generalization_error}) .
Particularly, one can achieve expected population loss $O\lp GD\lp\frac{\sqrt{d}/n}{\sqrt{\log(1/\delta)+\eps}-\sqrt{\log(1/\delta)}}+\frac{1}{\sqrt{n}}\rp\rp$.
\end{remark}

\begin{proof}
By Theorem~\ref{thm:dpsco}, sampling from $\exp(-k(F(x;\cD)+\mu \|x\|_2^2/2))$ when $k\leq \frac{\eps^2n^2\mu}{2G^2\log(3/(4\delta))}$ is $(\eps,2\delta/3)$-DP.
Besides, 
we can set  $k=\frac{\mu}{G^2}\min\{\frac{\epsilon^{2}n^{2}}{2\log(3/(4\delta))},2nd\}$ for arbitrarily large constant $c>0$ to make the mechanism $(\eps,2\delta/3)$-differentially private, achieving tight population loss and decrease the running time.
Then the population loss is upper bounded by
\begin{align*}
    \frac{d}{k}+\frac{\mu D^2}{2}+\frac{G^2}{\mu n}=& \frac{G^2}{\mu}\max\lc\frac{2\log(3/(4\delta))d}{\eps^2n^2},\frac{1}{2n}\rc+\frac{\mu D^2}{2}+\frac{G^2}{\mu n}.
\end{align*}

By setting $\mu=\frac{G}{D}\sqrt{2(\frac{2\log(3/(4\delta))d}{\eps^2 n^2}+\frac{1}{2n})}$, the population loss is upper bounded by
\begin{align*}
    GD\sqrt{\frac{4\log(3/(4\delta))d}{\eps^2 n^2}+\frac{1}{n}}+GD\sqrt{\frac{1}{n}}\leq GD\lp\frac{2\sqrt{\log(3/(4\delta))d}}{\eps n}+\frac{2}{\sqrt{n}}\rp.
\end{align*}

To make it algorithmic, we also apply Theorem~\ref{thm:sampler} with the accuracy on the total variation distance to be $\min\{\delta/3,\frac{1}{cn^c}\}$ for some large enough constant $c$. This leads to an extra empirical loss and hence we use $\log(1/\delta)$ rather than $\log(3/(4\delta))$ in the final loss term. The runtime follows from Theorem~\ref{thm:sampler}.
\end{proof}

\section{Information-theoretic Lower Bound for DP-SCO}
\label{sec:infolower}

 In this section, we prove an information-theoretic lower bound for the query complexity required for DP-SCO (with value queries), which matches (up to some logarithmic terms) the query complexity achieved by our algorithm (in Theorem \ref{thm:dpsco_impl}).
Our proof is similar to the previous works like \cite{ACCD12,DJWW15} with some modifications.


Before stating the lower bound, we define some notations.
Recall that we are given a set $\cD$ of $n$ samples (users) $\{s_1,\cdots,s_n\}$.
Let $\mathbb{A}_k$ be the collection of all algorithms that observe a sequence of $k$ data points $(Y^1,\cdots,Y^k)$ with $Y^t=f(X^t;S^t)$ where 
$S^t\in \cD$ and $X^t\in \cK$ are chosen arbitrarily and adaptively by the algorithm (and possibly using some randomness).

For the lower bound, we only consider linear functions, that is we define $f(x;s)\defeq\langle x,s\rangle$. And let $\cP_G$ be the collection of all distributions such that if $\cP\in\cP_G$, then $\E_{s\sim\cP}\|s\|_2^2\leq G^2$.

And we define the optimality gap
\begin{align*}
    \eps_k(\cA,\cP,\cK)\defeq&~ \E_{\cD\sim\cP^n,\cA}[\HF(\hat{x}(\cD))]-\inf_{x\in \cK}\hat{F}(x),
\end{align*}
where $\HF(x) = \E_{s\sim\cP} f(x; s)$, $\hat{x}$ is the output the algorithm $\cA$ given the input dataset $\cD$ and the expectation is over the dataset $\cD\sim \cP^n$ and the randomness of the algorithm $\cA.$ Note that we can rewrite the optimality gap as:
\begin{align*}
    \eps_k(\cA,\cP,\cK)=&~ \E_{\cD\sim\cP^n,\cA}[\HF(\hat{x}(\cD))]-\inf_{x\in \cK}\hat{F}(x)\\
    =& \E_{s\sim\cP}\Big[\E_{\cD\sim\cP^n,\cA}f(\hat{x}(\cD);s)]\Big]-\inf_{x\in\cK}\E_{s\sim\cP}[f(x;s)]\\
    =& \E_{s\sim\cP,\cD\sim\cP^n,\cA}[\hat{x}(\cD)^\top s]-\inf_{x\in\cK}\E_{s\sim \cP}[x^\top s].
\end{align*}
The minimax error is defined by
\begin{align*}
    \eps_k^*(\cP_G,\cK)\defeq \inf_{\cA\in \mathbb{A}_k}\sup_{\mathcal{P}\in \cP_G}\eps_k(\cA,\cP,\cK).
\end{align*}

\begin{theorem}
\label{thm:info_bound}
Let $\cK$ be the $\ell_2$ ball of diameter $D$ in $\R^d$, then
\begin{align*}
    \eps_k^*(\cP_G,\cK)\geq \frac{GD}{16}\min\left\{1,\sqrt{\frac{d}{4k}}\right\}.
\end{align*}
In particular, for any (randomized) algorithm $\cA$ which can observe a sequence of data points $(Y^1,\cdots,Y^k)$ with $Y^t=f(X^t;S^t)$ where $S^t\in\cD=\{s_1,s_2,\dots,s_n\}$ and $X^t\in\cK$ are chosen arbitrarily and adaptively by $\cA$,
there exists a distribution $\cP$ over convex functions such that $\E_{s\sim \mathcal{P}}[\|\nabla f(x,s)\|_2^2]\leq G^2$ for all $x\in \cK$, such that the output $\hx$ of the algorithm satisfies
\begin{align*}
    \E_{s\sim\cP}\Big[ \E_{\cD\sim\cP^n,\cA}f(\hx;s)]\Big]-\min_{x\in\cK}\E_{s\sim\cP}[f(x;s)]\geq \frac{GD}{16}\min\left\{1,\sqrt{\frac{d}{4k}}\right\}.
\end{align*}
\end{theorem}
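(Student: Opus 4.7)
The plan is to follow the Le Cam / Assouad style information-theoretic lower bound strategy of \cite{DJWW15}, specialized to the zeroth-order (value) query oracle. The central reduction turns the optimization algorithm into a multiple hypothesis test over a carefully chosen family of ``hard'' distributions, and then bounds how much information each scalar value query can carry about the hidden hypothesis. Because each query returns a single real number (as opposed to a $d$-dimensional gradient), we expect to gain a factor of $d$ in the total number of queries, which is exactly what distinguishes this bound from the classical $\Omega(1/\epsilon^2)$ SCO lower bound.

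\textbf{Step 1 (Hard instance).} For each sign pattern $v\in\{-1,+1\}^d$, construct a distribution $\cP_v$ on $\R^d$ supported on vectors of $\ell_2$ norm at most $G$ with mean $\mu_v=\delta G v/\sqrt{d}$, where $\delta\in(0,1]$ is a free parameter to be tuned. For instance, take $\cP_v$ to output a random $s$ whose $j$-th coordinate is $\pm G/\sqrt{d}$ independently with $\mathbb{E}[s_j]=\delta G v_j/\sqrt{d}$. Set $f(x;s)=\langle x,s\rangle$, so $\widehat F(x)=\langle x,\mu_v\rangle$ and $\widehat F$ has minimizer $x_v^\star=-\frac{D}{2\sqrt{d}}v$ on the diameter-$D$ ball with value $-GD\delta/2$.

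\textbf{Step 2 (Optimization $\Rightarrow$ testing).} Round the algorithm's output $\hat x$ coordinatewise to $\hat v_j=-\sgn(\hat x_j)$. A short Cauchy--Schwarz computation gives the separation lemma
\[
\widehat F(\hat x)-\widehat F(x_v^\star)\ \ge\ \frac{GD\delta}{2d}\sum_{j=1}^{d}\mathbf{1}[\hat v_j\ne v_j],
\]
so any $\epsilon$-optimal solution must produce a hypothesis with normalized Hamming error at most $2d\epsilon/(GD\delta)$.

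\textbf{Step 3 (Assouad's lemma).} Put a uniform prior on $v\in\{-1,+1\}^d$. By the standard Assouad reduction, the minimax expected error is lower bounded by
\[
\eps_k^\star\ \ge\ \frac{GD\delta}{4}\Bigl(1-\overline{\mathrm{TV}}\Bigr),
\]
where $\overline{\mathrm{TV}}$ is the average, over $j$ and over the remaining coordinates, of $\mathrm{TV}\bigl(\cL(Y^{1:k}\mid v_j=+1),\ \cL(Y^{1:k}\mid v_j=-1)\bigr)$. By Pinsker it suffices to upper bound the average KL divergence.

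\textbf{Step 4 (Per-query information bound).} This is the heart of the proof and the main obstacle. One must show that $k$ adaptive value queries leak only $O(k\delta^2D^2/d)$ total KL-information about the hidden sign pattern $v$. Using the chain rule for KL over the adaptively chosen $(X^t,S^t)$, the single-step increment is $\mathrm{D}_{KL}\bigl(\cL(Y^t\mid v_j{=}+1)\,\|\,\cL(Y^t\mid v_j{=}-1)\bigr)=O\bigl(\delta^2 (X^t_j)^2/d\bigr)$, because conditioning on $X^t$ the observation $Y^t=\langle X^t,S^t\rangle$ depends on $v_j$ only through the $j$-th coordinate of $S^t$, whose law shifts by $O(\delta/\sqrt d)$. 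Summing over $t$ and then over $j$ uses $\sum_j (X^t_j)^2=\|X^t\|_2^2\le D^2/4$, which is where the crucial $1/d$ factor relative to gradient-oracle lower bounds appears.

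\textbf{Step 5 (Tuning).} Plugging the bound $\overline{\mathrm{TV}}\le O(\delta D\sqrt{k/d})$ into Step 3 and choosing $\delta\asymp\min\{1,\sqrt{d/k}/D\}$ (truncated so $\delta\le 1$) yields $\eps_k^\star\gtrsim GD\min\{1,\sqrt{d/(4k)}\}$, matching the claimed bound up to the absolute constant $1/16$. The main obstacle, as noted, is Step 4: obtaining the single-scalar-query information bound with the correct $1/d$ scaling under an adaptive querying protocol; this is done by choosing $\cP_v$ to be a product distribution across coordinates so that changing $v_j$ perturbs only the $j$-th marginal of $S^t$, enabling a coordinatewise KL computation whose aggregate cost is controlled by $\|X^t\|_2^2$ rather than $d\cdot\|X^t\|_\infty^2$.
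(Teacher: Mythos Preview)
Your outline is structurally the same as the paper's (Assouad cube over $v\in\{-1,1\}^d$, coordinatewise separation lemma, Pinsker + chain rule for the information bound, then tune the bias). The gap is in Step~4, and it is precisely the step the paper's construction is engineered to handle.

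First, the single-step claim $D_{KL}=O(\delta^2(X^t_j)^2/d)$ is not justified for your Bernoulli hypercube construction. The heuristic ``(mean shift)$^2/$variance'' equals KL only for Gaussians; for the discrete scalar $Y^t=\langle X^t,s\rangle$ this needs a real argument, and the factor $1/d$ as written is dimensionally off (it should be $1/\|X^t\|_2^2$, which is what allows $\sum_j$ to collapse to a single constant per step). Relatedly, in Step~5 your bound $\overline{\mathrm{TV}}\le O(\delta D\sqrt{k/d})$ and your choice $\delta\asymp\sqrt{d/k}/D$ cannot both be right: $\delta\in(0,1]$ is a dimensionless bias, while $\sqrt{d/k}/D$ carries units of $1/\text{length}$.

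Second, and more seriously, you never address that $S^t$ is chosen from the \emph{fixed} dataset $\cD$ and the same $s_i$ may be queried at many different points. The chain-rule term you need is the KL of $Y^t$ \emph{conditional on} $Y^{[t-1]}$; when $s_i$ has already been queried, the posterior on $s_i$ under your product-of-biased-coins model is that distribution conditioned on several linear combinations of its coordinates, and there is no reason the conditional single-step KL still scales like $(X^t_j)^2/\|X^t\|_2^2$. The paper takes $\cP_v=\cN(\delta v,\sigma^2 I_d)$ precisely to kill this obstacle: after the WLOG Orthogonal-Query reduction (Gram--Schmidt across queries to each user), orthogonal projections of a spherical Gaussian are \emph{independent}, so the conditional law of $Y^t$ given all past queries to that user is still exactly $\cN(\delta\langle X^t,v\rangle,\sigma^2)$, whence the single-step KL is exactly $2\delta^2(X^t_j)^2/\sigma^2$ and $\sum_j\sum_t(\cdot)\le 2\delta^2 k/\sigma^2$. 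Your Bernoulli samples lack this rotational independence, so the per-query information bound, which you correctly identify as the heart of the proof, does not go through as written.
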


\subsection{Proof of Theorem \ref{thm:info_bound}}
We reduce the optimization problem into a series of binary hypothesis tests.
Recall we are considering linear functions $f(x;s)\defeq\langle x,s\rangle$.
Let $\cV=\{-1,1\}^d$ be a Boolean hyper-cube and for each $v\in \cV$, let $\cN_{v}=\cN(\delta v,\sigma^2I_{d})$ be a Gaussian distribution for some parameters to be chosen such that $\HF_v(x)\defeq \E_{s\sim\cN_v}[f(x;s)]=\delta\langle x,v\rangle$. Note that $$\E_{s\sim \cN_v}[\|\nabla f(x,s)\|_2^2]=\E_{s\sim \cN_v}[\norm{s}_2^2]=(\delta^2+\sigma^2)d.$$ Therefore $G=\sqrt{d(\delta^2+\sigma^2)}.$

Clearly the lower bound should scale linearly with $D$. Therefore without loss of generality, we can assume that the diameter $D=2$ and define $\cK=\{x\in\R^d:\|x\|_2\leq 1\}$ to be the unit ball.
As in \cite{ACCD12}, we suppose that $v$ is uniformly sampled from $\cV=\{-1,1\}^d$.
Note that if we can find a good solution to $\HF_v(x)$, we need to determine the signs of vector $v$ well. Particularly, we have the following claim:
\begin{claim}[\cite{DJWW15}]
\label{clm:error_determine_sign}
For each $v\in \cV$, let $x^v$ minimize $\HF_v$ over $\cK$ and obviously we know that $x^v=-v/\sqrt{d}$. 
For any solution $\hat{x}\in \R^d$, we have 
\begin{align*}
    \HF_v(\hat{x})-\HF_v(x^v)\geq\frac{\delta}{2\sqrt{d}}
    \sum_{j=1}^{d}\indicator\{\sign(\hat{x}_j)\neq \sign(x^v_j) 
    \},
\end{align*}
where the function $\sign(\cdot)$ is defined as:
\begin{align*}
    \sign(\hx_j)=\left\{\begin{array}{cc}
       + &  \text{ if } \hx_j>0 \\
         0 & \text{ if } \hx_j=0\\
         - & \text{ otherwise}
         \end{array}
         \right.
\end{align*}
\end{claim}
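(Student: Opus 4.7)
My plan is to reduce the claim to a coordinate-wise decomposition of $\langle \hat x, v\rangle$ followed by a Cauchy--Schwarz bound, exploiting the linearity of $\HF_v$. Since $\HF_v(x) = \delta \langle x,v\rangle$ is linear and $\cK$ is the unit ball (the diameter is normalized to $D=2$), the minimizer over $\cK$ is $x^v = -v/\sqrt d$ with value $\HF_v(x^v) = -\delta \sqrt d$, so the task reduces to proving
\[
\langle \hat x, v\rangle + \sqrt d \;\ge\; \frac{1}{2\sqrt d}\,|I|,\quad\text{where } I=\{j:\sign(\hat x_j)\ne -v_j\},
\]
under the implicit assumption $\hat x\in\cK$, i.e., $\|\hat x\|_2\le 1$ (otherwise $\HF_v(\hat x)$ can be made arbitrarily negative and the inequality fails).

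The first step is a careful sign analysis. Since $\sign(x^v_j)=-v_j$, membership $j\in I$ means $\sign(\hat x_j)\in\{0,v_j\}$, so $\hat x_j v_j\ge 0$ and in fact $\hat x_j v_j = |\hat x_j|$ (including the case $\hat x_j=0$). Conversely, for $j\notin I$ we have $\sign(\hat x_j)=-v_j$, so $\hat x_j v_j = -|\hat x_j|$. Summing coordinate-wise yields
\[
\langle \hat x,v\rangle \;=\; \sum_{j\in I}|\hat x_j| \;-\; \sum_{j\notin I}|\hat x_j| \;\ge\; -\sum_{j\notin I}|\hat x_j|.
\]

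The second step bounds the negative term by Cauchy--Schwarz using $\|\hat x\|_2\le 1$: $\sum_{j\notin I}|\hat x_j|\le \sqrt{d-|I|}\cdot\sqrt{\sum_{j\notin I}\hat x_j^2}\le \sqrt{d-|I|}$. Substituting gives $\HF_v(\hat x)-\HF_v(x^v)\ge \delta(\sqrt d - \sqrt{d-|I|})$.

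The final step is the elementary rationalization
\[
\sqrt d - \sqrt{d-|I|} \;=\; \frac{|I|}{\sqrt d + \sqrt{d-|I|}} \;\ge\; \frac{|I|}{2\sqrt d},
\]
which yields the claim. There is no real obstacle beyond getting the sign bookkeeping right; the only subtlety is that under the convention $\sign(0)=0$, a zero coordinate of $\hat x$ is counted as ``incorrectly signed'' (i.e., in $I$), which is consistent with the step above because such a coordinate contributes $\hat x_j v_j = 0 = |\hat x_j|$ to the $I$-sum and so the lower bound $\langle\hat x,v\rangle\ge -\sum_{j\notin I}|\hat x_j|$ remains valid.
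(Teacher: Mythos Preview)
Your proof is correct. The paper does not give its own proof of this claim; it is quoted verbatim from \cite{DJWW15}, so there is nothing to compare against. Your coordinate-wise sign analysis together with Cauchy--Schwarz and the rationalization $\sqrt{d}-\sqrt{d-|I|}\ge |I|/(2\sqrt d)$ is exactly the standard argument, and you are right to flag that the stated ``$\hat x\in\R^d$'' must be read as $\hat x\in\cK$ (the unit ball), since otherwise the inequality is false as you observe.
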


Claim~\ref{clm:error_determine_sign} provides a method to lower bound the minimax error.
Specifically, 
we define the hamming distance between any two vectors $x,y\in\R^d$ as $d_H(x,y)=\sum_{j=1}\indicator\{\sign(x_j)\neq \sign(y_j)\}$, and
we have
\begin{align}
\label{eq:minimax_to_testing}
    \eps_k^*(\cP_G,\cK)\geq \frac{\delta}{2\sqrt{d}}\{\inf_{\hat{v}}\E[d_H(\hv,v)]\},
\end{align}
where $\hv$ denotes the output of any algorithm mapping from the observation $(Y^1,\cdots,Y^k)$ to $\{-1,1\}^d$, and the probability is taken over the distribution of the underlying $v$, the observation $(Y^1,\cdots,Y^k) $ and any additional randomness in the algorithm.


By Equation~\eqref{eq:minimax_to_testing}, it suffices to lower bound the value of the testing error $\E[d_H(\hv,v)]$. 
As discussed in \cite{ACCD12,DJWW15}, the randomness in the algorithm can not help, and we can assume the algorithm is deterministic, i.e. $(X^t,S^t)$ is a deterministic function of $Y^{[t-1]}$.\footnote{We use $Y^{[t]}$ to denote the first $t$ observations, i.e. $(Y^1,\cdots,Y^t)$}
The argument is basically based on the easy direction of Yao's principle.

Now we continue our proof of the lower bound. We will make use of the property of the Bayes risk.

\begin{lemma}[{\cite[Lemma 1]{ACCD12}}]
\label{lm:Bayes_risk}
Consider the problem of testing hypothesis $H_{-1}:v \sim \mathbb{P}_{-1}$ and $H_1:v\sim \mathbb{P}_1$, where $H_{-1}$ and $H_1$ occur with prior probability $\pi_{-1}$ and $\pi_1\defeq1-\pi_{-1}$ respectively prior to the experiment. 
For any algorithm that takes one sample $v$ and outputs $\hat{i}:v\rightarrow\{-1,1\}$, we define the Bayes risk $B$ be the minimum average probability that algorithm fails ($v$ is not sampled from $H_{\hat{i}(v)}$).
That is $B=\inf_{\hat{i}}\pi_{-1}\Pr[\hat{i}(v)=1\mid v\sim \mathbb{P}_{-1}]+\pi_1\Pr[\hat{i}(v)=0\mid v\sim \mathbb{P}_1]$.
Then, we have
\begin{align*}
    B\geq \min(\pi_{-1},\pi_1)(1-\|\mathbb{P}_1-\mathbb{P}_{-1}\|_{\mathrm{TV}}).
\end{align*}
\end{lemma}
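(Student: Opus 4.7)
The plan is to reduce the Bayes risk to an infimum over measurable subsets of $\cV$ and then control that infimum using the variational characterization of total variation distance. First, since conditioning on an algorithm's internal randomness can only improve its average error at one of the two conditional Bayes risks, I may assume without loss of generality that $\hat{i}$ is a deterministic measurable map $\cV \to \{-1,1\}$. Letting $A = \{v : \hat{i}(v) = 1\}$ denote its acceptance region, the average failure probability equals
\begin{equation*}
R(A) := \pi_{-1}\mathbb{P}_{-1}(A) + \pi_1 \mathbb{P}_1(A^c),
\end{equation*}
and $B = \inf_A R(A)$ as $A$ ranges over the measurable subsets of $\cV$.

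Next, I would assume without loss of generality that $\pi_{-1} = \min(\pi_{-1},\pi_1) \leq \pi_1$, so that $\pi_1 - \pi_{-1} \geq 0$. Substituting $\mathbb{P}_1(A^c) = 1 - \mathbb{P}_1(A)$ and regrouping yields the key algebraic identity
\begin{equation*}
R(A) = \pi_1 \;-\; \pi_{-1}\bigl(\mathbb{P}_1(A) - \mathbb{P}_{-1}(A)\bigr) \;-\; (\pi_1 - \pi_{-1})\mathbb{P}_1(A).
\end{equation*}
This decomposition is the crux of the argument: it separates the quantity subtracted from $\pi_1$ into a ``TV piece'' weighted by $\pi_{-1}$ and a ``prior-imbalance piece'' weighted by $\pi_1 - \pi_{-1}$, each of which admits an upper bound that does not depend on $A$.

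Finally, I would apply the two elementary bounds $\mathbb{P}_1(A) - \mathbb{P}_{-1}(A) \leq \|\mathbb{P}_1 - \mathbb{P}_{-1}\|_{\mathrm{TV}}$ (the variational form of the total variation distance) and $\mathbb{P}_1(A) \leq 1$ to conclude
\begin{equation*}
R(A) \;\geq\; \pi_1 - \pi_{-1}\|\mathbb{P}_1 - \mathbb{P}_{-1}\|_{\mathrm{TV}} - (\pi_1 - \pi_{-1}) \;=\; \pi_{-1}\bigl(1 - \|\mathbb{P}_1 - \mathbb{P}_{-1}\|_{\mathrm{TV}}\bigr),
\end{equation*}
which matches $\min(\pi_{-1},\pi_1)(1 - \|\mathbb{P}_1 - \mathbb{P}_{-1}\|_{\mathrm{TV}})$ under the WLOG assumption. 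Taking the infimum over $A$ completes the proof. There is no real obstacle here; the only care needed is to choose the decomposition so that the two pieces can be bounded separately without introducing extra losses, which is a standard trick in reducing testing to TV distance.
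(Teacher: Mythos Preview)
Your proof is correct. The paper does not supply its own proof of this lemma; it is simply cited from \cite{ACCD12}, so there is nothing to compare against beyond noting that your argument is the standard one (reduce to deterministic rules, write the risk as a function of the acceptance region, and bound via the variational form of total variation distance).
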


\begin{lemma}
\label{lm:error_binary_test}
Suppose that $v$ is uniformly sampled from $\cV=\{-1,1\}^d$, then any estimate $\hat{v}$
obeys
\begin{align*}
    \E[d_H(\hv,v)]\geq
    \frac{d}{2}\lp 1-\frac{\delta\sqrt{k}}{\sigma\sqrt{d}}\rp.
\end{align*}
\end{lemma}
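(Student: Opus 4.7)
The plan is to reduce the $d$-dimensional estimation of $v$ to $d$ coordinate-wise binary hypothesis tests and then quantify how much information each of the $k$ adaptive observations can convey about a single sign.

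First, since $v$ is uniform on $\{-1,+1\}^d$, I would bound the expected Hamming loss coordinate-by-coordinate:
\[
\E[d_H(\hat v, v)] = \sum_{j=1}^{d} \Pr[\sign(\hat x_j) \neq \sign(x^v_j)].
\]
For each fixed $j$, identifying $\sign(\hat x_j)$ with a binary guess for $v_j$ (breaking ties arbitrarily when $\sign(\hat x_j)=0$) yields a binary test whose error probability is at most the corresponding indicator. Applying Lemma~\ref{lm:Bayes_risk} with the uniform prior on $v_j\in\{-1,+1\}$ gives
\[
\Pr[\sign(\hat x_j)\neq \sign(x^v_j)] \geq \frac{1}{2}\Bigl(1 - \mathrm{TV}(\mathbb{P}_{+j},\mathbb{P}_{-j})\Bigr),
\]
where $\mathbb{P}_{\pm j}$ is the law of $Y^{[k]}$ conditioned on $v_j = \pm 1$ (marginalizing $v_{-j}$ uniformly on $\{-1,+1\}^{d-1}$).

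Second, I would sum over $j$ and pass to KL divergence using Cauchy--Schwarz and Pinsker:
\[
\sum_{j=1}^d \mathrm{TV}(\mathbb{P}_{+j},\mathbb{P}_{-j}) \leq \sqrt{d\sum_j \mathrm{TV}^2(\mathbb{P}_{+j},\mathbb{P}_{-j})} \leq \sqrt{\frac{d}{2}\sum_{j=1}^d \mathrm{KL}(\mathbb{P}_{+j}\|\mathbb{P}_{-j})}.
\]
The main obstacle is bounding $\sum_j \mathrm{KL}(\mathbb{P}_{+j}\|\mathbb{P}_{-j})$ despite the adversary's adaptive query strategy. I would handle this by first applying joint convexity of KL to condition on $v_{-j}$, and then invoking the chain rule along the observations:
\[
\mathrm{KL}(\mathbb{P}_{+j}\|\mathbb{P}_{-j}) \leq \E_{v_{-j}}\sum_{t=1}^k \E_{Y^{[t-1]}}\,\mathrm{KL}\bigl(\mathbb{P}(Y^t\mid Y^{[t-1]},v_j{=}{+}1,v_{-j})\,\big\|\,\mathbb{P}(Y^t\mid Y^{[t-1]},v_j{=}{-}1,v_{-j})\bigr).
\]
Conditioned on $v$ and the history, the adversary's choice of $X^t$ is a deterministic function of $Y^{[t-1]}$; taking a fresh sample per query (WLOG an upper bound on the information available, which one can formalize by letting the adversary pick $n\geq k$ so reuse is never beneficial), the conditional law is $Y^t\sim\cN(\delta\langle X^t,v\rangle,\sigma^2\|X^t\|_2^2)$. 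Flipping $v_j$ shifts the mean by $2\delta X^t_j$, so the standard Gaussian KL formula gives a per-step, per-coordinate contribution of $2\delta^2 (X^t_j)^2/(\sigma^2\|X^t\|_2^2)$.

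The final key observation, which makes the bound independent of the adversary, is that summation over $j$ telescopes the $X^t$ dependence:
\[
\sum_{j=1}^d \frac{2\delta^2 (X^t_j)^2}{\sigma^2\|X^t\|_2^2} = \frac{2\delta^2}{\sigma^2}.
\]
Summing over the $k$ steps yields $\sum_j\mathrm{KL}(\mathbb{P}_{+j}\|\mathbb{P}_{-j})\leq 2k\delta^2/\sigma^2$, so
\[
\E[d_H(\hat v,v)] \geq \frac{d}{2} - \frac{1}{2}\sqrt{\frac{d}{2}\cdot\frac{2k\delta^2}{\sigma^2}} = \frac{d}{2}\left(1-\frac{\delta\sqrt k}{\sigma\sqrt d}\right),
\]
as claimed. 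The trickiest technical point is cleanly absorbing adaptivity into the chain rule conditioning; the proof succeeds because, after conditioning on $v$, the query process is deterministic and the $j$-sum eliminates all dependence on $X^t$.
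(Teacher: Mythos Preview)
Your proposal follows essentially the same route as the paper: coordinate-wise reduction to binary tests via Lemma~\ref{lm:Bayes_risk}, Cauchy--Schwarz plus Pinsker to pass to KL, joint convexity over $v_{-j}$, the chain rule over the $k$ observations, the one-dimensional Gaussian KL formula, and the observation that summing over $j$ collapses the dependence on the query point. The arithmetic matches the paper's exactly.

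The one place where your argument is thinner than the paper's is the handling of sample reuse. You write that taking a fresh sample at every step is ``WLOG an upper bound on the information available, which one can formalize by letting the adversary pick $n\geq k$ so reuse is never beneficial.'' The direction you need is that the fresh-sample adversary can \emph{simulate} any reuse adversary (so a lower bound against the former implies one against the latter); merely noting that when $n\geq k$ the original adversary \emph{could} choose fresh users does not establish that reuse is never strictly better. The paper makes this step rigorous via an \emph{Orthogonal Query} reduction: by Gram--Schmidt one may assume that successive queries to the same user $s_i$ are orthogonal, and for Gaussian $s_i$ the inner products against orthogonal directions are independent, hence distributionally identical to fresh-sample queries. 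That argument is exactly what justifies your conditional law $Y^t\sim\cN(\delta\langle X^t,v\rangle,\sigma^2\|X^t\|_2^2)$ with independence across $t$ given the history; once you add it, your proof and the paper's coincide.
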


\begin{proof}
Let $\pi_{-1}=\pi_1=1/2$.
For each $j$, define $\mathbb{P}_{-1,j}=\bbP(Y^{[k]}\mid v_j=-1)$ and $\bbP_{1,j}=\bbP(Y^{[k]}\mid v_j=1)$ to be distributions over the observations $(Y^1,\cdots,Y^k)$ conditional on $v_j\neq 1$ and $v_j=1$ respectively.
Let $B_j$ be the Bayes risk of the decision problem for $j$-th coordinate of $v$ between $H_{-1,j}: v_j=-1$ and $H_{1,j}: v_j=1$.
We have that
\begin{align*}
    \E[d_H(\hv,v)]
    \geq& \sum_{j=1}^{d}B_j\\
    \geq& \pi_1\sum_{j=1}^{d}(1-\|\bbP_{1,j}-\bbP_{-1,j}\|_{\mathrm{TV}})\\
    \geq& \frac{d}{2}\lp 1-\frac{1}{\sqrt{d}}\sqrt{\sum_{j=1}^{d}\|\bbP_{1,j}-\bbP_{-1,j}\|^2_{\mathrm{TV}}}\rp,
\end{align*}
where the first inequality follows from the definition of Bayes risk $B_j$, the second inequality follows by Lemma~\ref{lm:Bayes_risk} and the last inequality follows by the Cauchy-Schwartz inequality.

To complete the proof, it suffices to show that
\begin{align}
\label{eq:bounded_TVD}
    \sum_{j=1}^{d}\|\bbP_{1,j}-\bbP_{-1,j}\|_{\mathrm{TV}}^2\leq \frac{\delta^2}{\sigma^2}k.
\end{align}

Assuming Equation~\eqref{eq:bounded_TVD} first, which will be established later.
Then we know that
\begin{align*}
    \E[d_H(\hv,v)]\geq \frac{d}{2}(1-\frac{\delta\sqrt{k}}{\sigma\sqrt{d}}).
\end{align*}
\end{proof}

We will complete the proof of Lemma~\ref{lm:error_binary_test} by showing the following bounded total variation distance.
\begin{claim}
\begin{align*}
    \sum_{j=1}^{d}\|\bbP_{1,j}-\bbP_{-1,j}\|_{\mathrm{TV}}^2\leq \frac{\delta^2}{\sigma^2}k.
\end{align*}
\end{claim}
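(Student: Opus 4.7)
The plan is to reduce the sum of squared TV distances to a sum of KL divergences via Pinsker's inequality, pass from the marginals to the full joint conditionals $\bbP_v$ by joint convexity of KL, and then bound those joint KLs step-by-step along the observation filtration using the Gaussian posterior of $s_{i_t}$ given the history.

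First I would apply Pinsker's inequality in both directions (using symmetry of $\|\cdot\|_{\mathrm{TV}}$) together with joint convexity of KL. Since each marginal $\bbP_{\pm 1,j}$ is the uniform mixture over $v_{-j}\in\{-1,1\}^{d-1}$ of the full conditional $\bbP_v$, one obtains the clean reduction
\begin{equation*}
\|\bbP_{1,j}-\bbP_{-1,j}\|_{\mathrm{TV}}^2 \ \le\ \tfrac12\min\!\bigl(\mathrm{D}_{KL}(\bbP_{1,j}\|\bbP_{-1,j}),\,\mathrm{D}_{KL}(\bbP_{-1,j}\|\bbP_{1,j})\bigr) \ \le\ \tfrac12\,\E_{v\sim\unif\{-1,1\}^d}\mathrm{D}_{KL}(\bbP_v\|\bbP_{v^{(j)}}),
\end{equation*}
where $v^{(j)}$ denotes $v$ with coordinate $j$ flipped; the second inequality uses $\min(a,b)\le (a+b)/2$ together with the elementary identity $\E_{v_{-j}}[\mathrm{D}_{KL}(\bbP_{v^+}\|\bbP_{v^-})+\mathrm{D}_{KL}(\bbP_{v^-}\|\bbP_{v^+})]=2\,\E_v\mathrm{D}_{KL}(\bbP_v\|\bbP_{v^{(j)}})$. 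Taking the algorithm to be deterministic by Yao's principle and summing over $j$, the claim reduces to establishing the uniform bound $\sum_{j=1}^{d}\mathrm{D}_{KL}(\bbP_v\|\bbP_{v^{(j)}})\le 2k\delta^2/\sigma^2$ for every fixed $v\in\{-1,1\}^d$.

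For this uniform bound I would apply the KL chain rule along the observation filtration $(Y^{[t-1]})_t$. Fix $v$, $t$, and a history $Y^{[t-1]}$; the algorithm chooses $(X^t,i_t)$ deterministically, and $Y^t=\langle X^t,s_{i_t}\rangle$. The past observations touching $s_{i_t}$ are linear measurements $W=Z^T s_{i_t}$ for some $d\times m$ matrix $Z$ of earlier query vectors aimed at user $i_t$; conditioning the Gaussian prior $s_{i_t}\sim\cN(\delta v,\sigma^2 I_d)$ on these measurements gives a Gaussian posterior with mean $\mu_v=\delta v+Z(Z^TZ)^{-1}(W-\delta Z^T v)$ and covariance $\sigma^2 P_\perp$, where $P_\perp=I-Z(Z^TZ)^{-1}Z^T$ is the orthogonal projector onto $\mathrm{span}(Z)^\perp$. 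Hence $Y^t\mid(Y^{[t-1]},v)\sim\cN(\langle X^t,\mu_v\rangle,\sigma^2\|P_\perp X^t\|^2)$ with a variance that does not depend on $v$, while $\mu_v-\mu_{v^{(j)}}=\delta P_\perp(v-v^{(j)})=2\delta v_j P_\perp e_j$. The equal-variance Gaussian KL formula then gives
\begin{equation*}
\mathrm{D}_{KL}\!\bigl(\bbP_v(Y^t\mid Y^{[t-1]})\,\big\|\,\bbP_{v^{(j)}}(Y^t\mid Y^{[t-1]})\bigr) \ =\ \frac{2\delta^2\,(P_\perp X^t)_j^{\,2}}{\sigma^2\,\|P_\perp X^t\|^2}.
\end{equation*}

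The payoff is the Pythagorean identity $\sum_{j=1}^{d}(P_\perp X^t)_j^2/\|P_\perp X^t\|^2=1$ (with the convention that the step-$t$ conditional KL is zero whenever $P_\perp X^t=0$, since $Y^t$ is then a deterministic function of $Y^{[t-1]}$), which makes the sum over $j$ of the step-$t$ conditional KL equal to exactly $2\delta^2/\sigma^2$ independently of $Y^{[t-1]}$. Summing over $t=1,\dots,k$ yields $\sum_j\mathrm{D}_{KL}(\bbP_v\|\bbP_{v^{(j)}})\le 2k\delta^2/\sigma^2$, and feeding this into the reduction of the previous paragraph gives the desired $\sum_j\|\bbP_{1,j}-\bbP_{-1,j}\|_{\mathrm{TV}}^2\le\delta^2 k/\sigma^2$. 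The main obstacle I anticipate is the adaptive Gaussian posterior bookkeeping: I have to argue that even when user $i_t$ has been queried many times before with query vectors that themselves depend on prior noise, the conditional distribution of $Y^t$ given the entire history remains Gaussian with a $v$-independent variance, so that $v$ enters only through the mean shift $2\delta v_j P_\perp e_j$. The secondary subtlety is to track the factors of $2$ cleanly through Pinsker, joint convexity, and the averaging over $v$ so that the final constant is exactly $\delta^2 k/\sigma^2$ rather than a larger multiple.
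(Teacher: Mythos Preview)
Your proof is correct and follows the same high-level skeleton as the paper (Pinsker, convexity of KL to pass to fully conditioned laws $\bbP_v$, chain rule along the observation filtration, Gaussian KL), but it diverges at the key technical step in a way worth noting.

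The paper does not compute the adaptive Gaussian posterior. Instead it argues that without loss of generality the algorithm satisfies an \emph{Orthogonal Query} assumption: for each user $i$, the sequence of query vectors $Q_i^t$ can be taken orthonormal (via Gram--Schmidt, since only linear functionals of $s_i$ are observed). Under that reduction the conditional law of $Y^t$ given $Y^{[t-1]}$ is simply $\cN(\delta\langle Q_i^t,v\rangle,\sigma^2)$ with $v$-independent variance $\sigma^2$, the per-step KL is $2\delta^2 (Q_i^t(j))^2/\sigma^2$, and the sum over $j$ and $t$ collapses using $\|Q_i^t\|_2\in\{0,1\}$ and one query per step.

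Your route skips the Orthogonal Query reduction entirely and confronts the general posterior head-on: you compute $s_{i_t}\mid Y^{[t-1]}\sim\cN(\mu_v,\sigma^2 P_\perp)$ with $P_\perp$ the projector off the span of prior queries to $i_t$, observe that the variance $\sigma^2\|P_\perp X^t\|^2$ is $v$-independent, and then use the Pythagorean identity $\sum_j (P_\perp X^t)_j^2=\|P_\perp X^t\|^2$ to collapse the sum over $j$. This is arguably cleaner, since the paper's Gram--Schmidt reduction is only sketched (``Roughly speaking, for any algorithm, we can modify it\ldots''), whereas your posterior argument is self-contained. The price you pay is exactly the ``adaptive Gaussian posterior bookkeeping'' you flagged; the justification is that conditioning on a valid history $Y^{[t-1]}$ is, for a deterministic algorithm, conditioning on the product event $\{Z_i^\top s_i=W_i\ \forall i\}$ with $Z_i,W_i$ determined by $Y^{[t-1]}$, so the $s_i$ remain independent Gaussians a posteriori and your formula for $\mu_v,\sigma^2 P_\perp$ follows. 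One minor slip: when $P_\perp X^t=0$ the step contributes $0$, not exactly $2\delta^2/\sigma^2$, so the per-step bound is an inequality rather than an equality; this only helps.
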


\begin{proof}
Applying Pinsker's inequality, we know $\|\bbP_{1,j}-\bbP_{-1,j}\|_{\mathrm{TV}}^2\leq \frac{1}{2}\mathrm{D}_{KL}(\bbP_{-1,j}\|\bbP_{1,j})$.
To bound the KL divergence between $\bbP_{-1,j}$ and $\bbP_{1,j}$ over all possible $Y^{[k]}$, consider $v'=(v_1,\cdots,v_{j-1},v_{j+1},\cdots,v_d)$, and define $\bbP_{-1,j,v'}(Y^{[k]})\defeq \bbP(Y^{[k]}\mid v_j=-1,v')$ to be the distribution conditional on $v_j=-1$ and $v'$.
We have
\begin{align*}
    \bbP_{-1,j}(Y^{[k]})= \sum_{v'}\Pr[v']\bbP_{-1,j,v'}(Y^{[k]}).
\end{align*}

The convexity of the KL divergence suggests that
\begin{align*}
    \mathrm{D}_{KL}(\bbP_{-1,j}\|\bbP_{1,j})\leq \sum_{v'}\Pr[v']\mathrm{D}_{KL}(\bbP_{-1,j,v'}\|\bbP_{1,j,v'}).
\end{align*}
Fixing any possible $v'$, we want to bound the KL divergence $\mathrm{D}_{KL}(\bbP_{-1,j,v'}\|\bbP_{1,j,v'})$.

Recall we are considering deterministic algorithms and
$(X^t,S^t)$ is a deterministic function of $Y^{[t-1]}$.
Let $Q_i\in \R^{d\times k}$ be a (random) matrix, which records the set of points the algorithm queries for the user $s_i$.
Specifically, for $t$-th step, if the algorithm queries $(X^t,S^t)$, then $Q_{i}^{t}=X^t$ if $S^t=s_i$, otherwise $Q_{i}^{t}=0$, where $Q_i^t$ is the $t$-th column of $Q_i$.

As we are considering linear functions, without loss of generality we can assume  $\langle Q_{i}^j,Q_{i}^{j'}\rangle=0$ for each $i$ and any $j\neq j'$, and $\|Q_{i}^t\|_2\in\{0,1\}$ for any $i$ and $t$.
We name this assumption \textsc{Orthogonal Query}.
Roughly speaking, for any algorithm, we can modify it to satisfy the Orthogonal Query.
Whenever the algorithm wants to query some point, we can use Gram–Schmidt process to query another point and satisfy Orthogonal Query, and recover the function value at the original point queried by the algorithm.

By the chain-rule of KL-divergence, if we define $P_{-1,j,v'}(Y^t\mid Y^{[t-1]})$ to be the distribution of $t$th observation $Y^t$ conditional on $v'$, $v_j=-1$ and $Y^{[t-1]}$, then we have
\begin{align*}
    \mathrm{D}_{KL}(\bbP_{-1,j,v'}\|\bbP_{1,j,v'})=\sum_{t=1}^{k}\int_{\cY^{t-1}} \mathrm{D}_{KL}(P_{-1,j,v'}(Y^t\mid Y^{[t-1]}=y)\|P_{1,j,v'}(Y^t\mid Y^{[t-1]}=y)\d P_{-1,j,v'}(y).
\end{align*}

Fix $Y^{[t-1]}$ such that $Y^{[t-1]}=y$.
Since the algorithm is deterministic and $(X^t,S^t)$ is fixed given $Y^{[t-1]}$.
Let $S^t=s_i$ so $X^t=Q_i^t$.

Note that the $n$ users in $\cD$ are i.i.d. sampled.
Then $\mathrm{D}_{KL}(P_{-1,j,v'}(Y^t\mid Y^{[t-1]}=y)\|P_{1,j,v'}(Y^t\mid Y^{[t-1]}=y)$ only depends on the randomness of $s_i$ and the first $t$ columns of $Q_{i}$, which is denoted by $Q_{i}^{[t]}$.
We use $Y^{t}_{j}$ to denote the observation corresponding to user $s_j$ for the $t$th query (if $S^t\neq s_j$, we have $Y^{t}_j=0$).
Note that the observation $Y^{[t]}_i=Q_i^{[t]\top} s_i$ where $s_i\sim \cN(\delta v,\sigma^2 I_d)$. Then we know $Y^{[t]}_i$ is normally distributed with mean $\delta Q_{i}^{[t]\top} v$ and co-variance $\sigma^2 Q_{i}^{[t]\top} Q_{i}^{[t]}$.

Recall that the KL divergence between two normal distributions is $\mathrm{D}_{KL}(\cN(\mu_1,\Sigma)\|\cN(\mu_2,\Sigma))=\frac{1}{2}(\mu_1-\mu_2)^{\top}\Sigma^{-1}(\mu_1-\mu_2)$. Recall that we have the Orthogonal Query assumption and thus $Q_{i}^{[t]\top} Q_{i}^{[t]}\in\{0,1\}^{t\times t}$ is a diagonal matrix.
By the conditional distributions of Gaussian, we know $Y^t_i$ only depends on the $Q_{i}^t$ and it is independent of $Q_{i}^{[t-1]}$.

Hence we have
\begin{align*}
    &\mathrm{D}_{KL}(P_{-1,j,v'}(Y^t\mid Y^{[t-1]}=y)\|P_{1,j,v'}(Y^t\mid Y^{[t-1]}=y))\\
    =&\mathrm{D}_{KL}(P_{-1,j,v'}(Y^t_i\mid Y^{[t-1]}=y)\|P_{1,j,v'}(Y^t_i\mid Y^{[t-1]}=y))\\
    =&\frac{1}{2} (2\delta Q_{i}^t(j))^2/\sigma^2,
\end{align*}
where $Q_{i}^t(j)$ is the $j$-th coordinate of $Q_{i}^t$.
Summing over the terms, one has
\begin{align*}
\sum_{j=1}^{d}\|\bbP_{1,j}-\bbP_{-1,j}\|_{\mathrm{TV}}^2\leq &
    \frac{1}{2}\mathrm{D}_{KL}(\bbP_{-1,j}\|\bbP_{1,j})\\
    \leq&\frac{1}{2} \sum_{t=1}^{k}\sum_{j=1}^{d}\sum_{i=1}^{n}\E[\frac{1}{2} (2\delta Q_{i}^t(j))^2/\sigma^2] \\
    \leq& \frac{\delta^2}{\sigma^2}k,
\end{align*}
where the last line follows from the fact that for each $t$,$ \sum_{i=1}^{n}\|Q_{i}^t\|_2^2=\sum_{i=1}^{n}\sum_{j=1}^{d}(Q_{i}^t(j))^2=1$ as we only query one user for $t$-th step.

This completes the proof.
\end{proof}

Having Lemma~\ref{lm:error_binary_test}, we can complete the proof of Theorem~\ref{thm:info_bound}.

\begin{proof}{of Theorem~\ref{thm:info_bound}.}
As discussed before, we know
\begin{align*}
    \HF_v(\hat{x})-\HF_v(x^v)\geq\frac{\delta}{2\sqrt{d}}\sum_{j=1}^{d}\indicator\{\sign(\hat{x}_j)\neq \sign(x^v_j) \},
\end{align*}
and hence we know that
\begin{align*}
     \eps_k^*(\cP_G,\cK)\geq & \frac{\delta}{2\sqrt{d}}\inf_{\hat{v}}\E[d_H(\hat{v},v)]\\
    \geq & \frac{\delta \sqrt{d}}{4}\lp 1-\frac{\delta\sqrt{k}}{\sigma\sqrt{d}}\rp,
\end{align*}
where the last line follows from Lemma~\ref{lm:error_binary_test}.
We now set $\delta=\frac{\sigma \sqrt{d}}{2\sqrt{k}}$ and $\sigma=\frac{G}{\sqrt{d+d^2/4k}}$, so that $d(\sigma^2+\delta^2)=G^2$.
Hence one has
\begin{align*}
 \eps_k^*(\cP_G,\cK)
\geq \frac{\delta\sqrt{d}}{8} 
= \frac{D\delta\sqrt{d}}{16}
= \frac{GD}{16\sqrt{1+\frac{4k}{d}}}
\ge \frac{GD}{16}\min\left\{1,\sqrt{\frac{d}{4k}}\right\}.
\end{align*}
Thus we complete the proof.
\end{proof}

\begin{corollary}[Lower bound for DP-SCO]
\label{cor:DPSCOlower}
For any (non-private) algorithm which makes less than $O\lp\min\{\frac{\eps^2n^2}{\log(1/\delta)},nd\}\rp$ function value queries, there exist a convex domain $\cK\subset \R^d$ of diameter $D$, a distribution $\cP$ supported on $G$-Lipschitz linear functions $f(x;s)\defeq\langle x,s\rangle$, such that the output $\hx$ of the algorithm satisfies that
\begin{align*}
    \E_{s\sim\cP}[\langle \hx,s\rangle]-\min_{x\in\cK}\E_{s\sim\cP}[\langle x,s\rangle]\geq \Omega\lp \frac{G D}{\sqrt{1+\log(n)/d}} \cdot \min\lc\frac{\sqrt{\log(1/\delta)d}}{\eps n}+\frac{1}{\sqrt{n}},1\rc \rp.
\end{align*}
\end{corollary}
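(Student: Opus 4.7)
The plan is to derive the corollary directly from Theorem~\ref{thm:info_bound} by specializing the query budget $k$ and by passing from the Gaussian hard instance used inside the theorem (which only satisfies an in-expectation bound on $\|\nabla f\|_2^2$) to a truncated version almost-surely supported on $G$-Lipschitz linear functions.

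First I would take $k = c\min\{\eps^2 n^2/\log(1/\delta),\,nd\}$ for a small absolute constant $c>0$, matching the hypothesis of the corollary. Substituting into the quantity $\sqrt{d/(4k)}$ appearing inside the $\min$ of Theorem~\ref{thm:info_bound}, one gets a lower bound of order $\sqrt{d\log(1/\delta)}/(\eps n)$ when the first term of the inner $\min$ is active and of order $1/\sqrt{n}$ when the second is active. Applying the theorem twice (once for each of the two candidate values of $k$) and keeping the larger of the two lower bounds yields $\Omega\bigl(GD\cdot(\sqrt{d\log(1/\delta)}/(\eps n) + 1/\sqrt{n})\bigr)$, since $\max\{a,b\}$ and $a+b$ agree up to a factor of $2$ for nonnegative $a,b$. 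The outer $\min\{\cdot,1\}$ in the statement just encodes the trivial cap $GD$ already present in the $\min\{1,\sqrt{d/(4k)}\}$ of the theorem.

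Second, I need to upgrade the hard instance to be almost-surely $G$-Lipschitz. The instance from Theorem~\ref{thm:info_bound} is $s\sim\cN(\delta v,\sigma^2 I_d)$ with $d(\delta^2+\sigma^2)=G^2$, which only guarantees $\E\|s\|_2^2\le G^2$, whereas the corollary requires the support to consist of $G$-Lipschitz linear functions, i.e.\ $\|s\|_2\le G$ surely. To fix this I would invoke Theorem~\ref{thm:info_bound} at parameter $\tilde G^2 := G^2/(1+C\log(n)/d)$ for sufficiently large $C$ and then truncate each sample to the Euclidean ball of radius $G$. Standard Gaussian concentration for $\|s\|_2^2$ implies that, under this rescaling, $\max_{i\le n}\|s_i\|_2 \le G$ holds simultaneously on all $n$ samples except on a failure event of probability at most $1/\poly(n)$. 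Coupling the Gaussian product measure to its truncation gives total variation $1/\poly(n)$, so the algorithm's output distribution, and hence its expected objective value, differ by $O(GD/\poly(n))$ on the two instances. The rescaling $G\to \tilde G$ contributes exactly the advertised factor $1/\sqrt{1+\log(n)/d}$.

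The step I expect to be the main obstacle is the truncation argument: one has to verify that the algorithm's expected optimality gap, and not merely its ability to distinguish the two instances, is preserved under the coupling. The cleanest way is to choose $C$ so that the truncation failure event has probability at most $n^{-10}$, and then bound its contribution to both $\HF_v(\hx)$ and to the minimizer $x^v$ trivially by $GD$; the resulting additive error $O(GD\cdot n^{-10})$ is dwarfed by the claimed lower bound whenever $n$ is at least a small constant. Aside from this, the corollary is a routine case analysis on which term of $\min\{\eps^2 n^2/\log(1/\delta),\,nd\}$ is active, with Theorem~\ref{thm:info_bound} invoked as a black box.
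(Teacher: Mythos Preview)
Your proposal is correct and follows essentially the same route as the paper: substitute the query budget $k=O(\min\{\eps^2 n^2/\log(1/\delta),nd\})$ into Theorem~\ref{thm:info_bound}, then pass from the Gaussian hard instance to an almost-surely $G$-Lipschitz one via Gaussian norm concentration and a rescaling that produces the $1/\sqrt{1+\log(n)/d}$ factor. The only cosmetic differences are that the paper conditions on the high-probability event $\{\max_i\|s_i\|_2\le cG\sqrt{1+\log(nd)/d}\}$ and then renames $G$ (whereas you rescale first and then truncate), and the paper substitutes $k$ once rather than invoking the theorem twice and taking the larger bound; both lead to the same conclusion.
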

\begin{proof}

Note that Theorem~\ref{thm:info_bound} almost gives us what we want, except that the Lipschitz constant of the functions in the hard distribution is bounded only on average by $G$. To get distributions over $G$-Lipschitz functions, we just condition on the bad event not happening.

Recall that we are considering the set of distributions $\cN_v=\cN(\delta v,\sigma^2 I_d)$ for which $\E_{s\sim\cN_v}\|s\|_2^2\le G^2=d(\delta^2+\sigma^2)$.
And we proved that 
$\inf_{\cA\in \mathbb{A}_k}\sup_{v\in \cV}\E_{s\sim \cN_v,\cA}[\HF_v(\hat{x}_k)-\HF_v^*]\ge \frac{GD}{16}\min\left\{1,\sqrt{\frac{d}{4k}}\right\}$ in Theorem~\ref{thm:info_bound}, where $\hat{x}_k$ is the output of $\cA$ with $k$ observations $Y^{[k]}$.
To prove Corollary~\ref{cor:DPSCOlower}, we need to modify the distribution of $s$ to satisfy the Lipschitz continuity.

In particularly, for some constant $c$, we know 
\begin{align*}
    &\E[\HF_v(\hat{x}_k)-\HF_v^*]\\
    =&\E\Big[ \HF_v(\hat{x}_k)-\HF_v^*\mid \max_{s_i\in \cD}\|s_i\|_2\leq cG\sqrt{1+\log(nd)/d}\Big]\Pr\Big[\max_{s_i\in \cD}\|s_i\|_2\leq cG \sqrt{1+\log(nd)/d}\Big]+\\
    &~~\E\Big[ \HF_v(\hat{x}_k)-\HF_v^*\mid \max_{s_i\in \cD}\|s_i\|_2> cG\sqrt{1+\log(nd)/d}\Big]\Pr\Big[\max_{s_i\in \cD}\|s_i\|_2> cG\sqrt{1+\log(nd)/d}\Big].
\end{align*}
By the concentration of spherical Gaussians, we know if $s\sim\cN(\delta v,\sigma^2 I_d)$, then 
\begin{align*}
    \Pr\Big[\|s-\delta v\|_2^2\leq \sigma^2 d(1+2\sqrt{\ln(1/\eta)/d}+2\ln(1/\eta)/d)\Big]\geq 1-\eta.
\end{align*}
We can choose the constant $c$ large enough, such that $\Pr[\max_{s_i\in \cD}\|s_i\|_2\leq cG\sqrt{1+\log(nd)/d}]\geq 1-1/\poly(nd)$, which implies
\begin{align*}
    \inf_{\cA\in \mathbb{A}_k}\sup_{v\in \cV}\E_{\cD \sim \cN_v^n,\cA}\Big[\HF_v(\hat{x}_k)-\HF_v^*\mid \max_{s_i\in \cD}\|s_i\|_2\leq cG\sqrt{1+\log(nd)/d}\Big]\geq \Omega(GD\frac{\min\{\sqrt{d},\sqrt{k}\}}{\sqrt{k}}).
\end{align*}
If we use the distributions conditioned on $\max_{s_i\in \cD}\|s_i\|_2\leq cG\sqrt{1+\log(nd)/d}$ rather than the Gaussians, and scale the constant to satisfy the assumption on Lipschitz continuity, we can prove the statement.
Particularly, let $G'=cG(\sqrt{1+\log(nd)/d})$. 
If the algorithm can only make $k=O\lp\min\{\frac{\eps^2n^2}{\log(1/\delta)},nd\}\rp$ observations, we know 
\begin{align*}
    &\inf_{\cA\in \mathbb{A}_k}\sup_{v\in \cV}\E_{\cD \sim \cN_v^n,\cA}\Big[\HF_v(\hat{x}_k)-\HF_v^*\mid \max_{s_i\in \cD}\|s_i\|_2\leq G'\Big]\\
    \geq&
    \Omega\lp GD\cdot \min\lc(\frac{\sqrt{\log(1/\delta)d}}{\eps n}+\frac{1}{\sqrt{n}}),1\rc \rp  \\
    =&\Omega\lp \frac{G' D}{\sqrt{1+\log(nd)/d}}\cdot \min\lc\frac{\sqrt{\log(1/\delta)d}}{\eps n}+\frac{1}{\sqrt{n}},1\rc  \rp,
\end{align*}
which proves the lower bound claimed in the Corollary statement.
\end{proof}

\begin{corollary}[Lower bound for sampling scheme]
\label{cor:Samplinglower}
Given any $G > 0$ and $\mu > 0$. For any algorithm which takes function values queries less than $O\lp\frac{G^2}{\mu}/(1+\log(G^2/\mu)/d)\rp$ times, there is a family of $G$-Lipschitz linear functions $\{f_i(x)\}_{i\in I}$ defined on some $\ell_2$ ball $\cK\subset\R^d$, such that the total variation distance between the distribution of the output of the algorithm and the distribution proportional to $\exp(-\E_{i\in I}f_i(x)-\mu \|x\|^2 / 2)$ is at least $\min(1/2, \sqrt{d \mu / G^2})$.
\end{corollary}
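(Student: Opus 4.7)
The plan is to reduce the sampling lower bound to the information-theoretic testing lower bound that underlies Theorem~\ref{thm:info_bound}, in the spirit of the proof of Corollary~\ref{cor:DPSCOlower}. I use the hard instance from Theorem~\ref{thm:info_bound}: index $v\in\{-1,1\}^d$, take the family $\{f_i\}_{i\in I}$ to be the linear functions $f(x;s)=\langle x,s\rangle$ with $s\sim\cN(\delta v,\sigma^2 I_d)$, conditioned on the high-probability event $\|s\|_2\le G$ so that each $f(\cdot;s)$ is $G$-Lipschitz on any $\ell_2$ ball (the conditioning is what produces the $1+\log(G^2/\mu)/d$ factor in the query budget, exactly as in the proof of Corollary~\ref{cor:DPSCOlower}). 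Under these choices the target measure becomes, up to the negligible conditioning effect, $\pi_v\propto\exp(-\delta\langle x,v\rangle-\mu\|x\|_2^2/2)=\cN(-\delta v/\mu,\,I/\mu)$.

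Next I convert a sampler into a sign-recovery procedure. Let $\hat x$ denote the algorithm's output and set $\hat v_j:=-\sgn(\hat x_j)$. Under the exact sampler the coordinates of $\hat x$ are independent Gaussians, so $\Pr_{\pi_v}[\hat v_j\ne v_j]=1-\Phi(\delta/\sqrt{\mu})$ and hence $\E_{\pi_v}[d_H(\hat v,v)]\le d(1-\Phi(\delta/\sqrt{\mu}))$. If the output distribution $\hat\pi_v$ satisfies $\mathrm{TV}(\hat\pi_v,\pi_v)\le\eta$, then since $d_H$ is $[0,d]$-valued we get $\E_{\hat\pi_v}[d_H(\hat v,v)]\le d(1-\Phi(\delta/\sqrt{\mu}))+d\eta$. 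On the other hand, the proof of Lemma~\ref{lm:error_binary_test} is purely information-theoretic in $(k,\delta,\sigma,d)$ and applies to any $k$-query estimator of $v$, so for the adversarial $v$, $\E_{\hat\pi_v}[d_H(\hat v,v)]\ge\tfrac{d}{2}\bigl(1-\tfrac{\delta\sqrt{k}}{\sigma\sqrt{d}}\bigr)$. Chaining these two inequalities yields
\[
\eta \ge \Phi(\delta/\sqrt{\mu})-\tfrac12-\tfrac{\delta\sqrt{k}}{2\sigma\sqrt{d}}.
\]

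I finish by tuning $\delta,\sigma$ under the Lipschitz budget $d(\delta^2+\sigma^2)\le G^2$. Set $\sigma^2=G^2/(2d)$ and $\delta=c\sqrt{\mu}$ with $c\in(0,G/\sqrt{2d\mu}]$ to be chosen. In the low-regularization regime $d\mu\le G^2/2$, the upper bound on $c$ is at least $1$, so I pick $c$ to be a sufficiently small absolute constant; then $\Phi(c)-\tfrac12\gtrsim c/\sqrt{2\pi}$ while $\delta\sqrt{k}/(2\sigma\sqrt{d})=\sqrt{c^2 k\mu/(2G^2)}$, so a query budget $k\le c'G^2/\mu$ with a sufficiently small absolute constant $c'$ makes the noise term strictly smaller than $c/\sqrt{2\pi}$, giving $\eta=\Omega(1)$; this dominates $\min(1/2,\sqrt{d\mu/G^2})$ throughout this regime. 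In the high-regularization regime $d\mu>G^2/2$ the Lipschitz cap forces $c=G/\sqrt{2d\mu}<1$, and the first-order estimate $\Phi(c)-\tfrac12\ge c/3$ for $c\in[0,1]$ plugged into the same chain yields $\eta\ge\Omega(c)=\Omega(G/\sqrt{d\mu})$ under $k\le c'G^2/\mu$; in this regime $\sqrt{d\mu/G^2}\ge1/\sqrt{2}$ so the claim reduces to the constant $1/2$ branch and the derived bound suffices.

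The main obstacle is the careful parameter balance at the transition $d\mu\approx G^2$, where the Lipschitz cap just begins to bind while the sign-recovery advantage $\Phi(\delta/\sqrt{\mu})-1/2$ simultaneously shrinks; both sides of the testing inequality must be tracked to constants. A secondary technicality is the Gaussian-to-bounded conversion: making every $f(\cdot;s)$ genuinely $G$-Lipschitz requires conditioning $s$ on $\|s\|_2\le G$, which loses at most a $1+\log(G^2/\mu)/d$ factor in the query budget, an adjustment carried out exactly as in the proof of Corollary~\ref{cor:DPSCOlower}.
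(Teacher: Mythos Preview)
Your reduction via sign recovery is a genuinely different route from the paper's. The paper converts an approximate sampler into an optimizer by invoking the utility guarantee (Lemma~\ref{lm:utility_tech}) to bound the excess regularized loss, and then plugs that into the information-theoretic loss lower bound (equation~\eqref{eq:lower_sampling_SCO}) with the choice $D=\sqrt{d/\mu}$. You instead compute the per-coordinate sign-recovery probability $\Phi(\delta/\sqrt{\mu})$ under the (approximately) Gaussian target and compare it directly to the Hamming lower bound of Lemma~\ref{lm:error_binary_test}. Your path is more hands-on and avoids the detour through excess loss, but it hinges on the signal strength $\delta/\sqrt{\mu}$, which becomes the bottleneck.

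There is a real gap in your high-regularization regime. When $d\mu\gg G^2$ the Lipschitz cap forces $c=G/\sqrt{2d\mu}\to 0$, and your chain yields only $\eta\ge\Omega(c)=\Omega(G/\sqrt{d\mu})$. This does \emph{not} suffice: in that regime $\sqrt{d\mu/G^2}\ge 1$, so the target bound is $1/2$, while $G/\sqrt{d\mu}$ can be arbitrarily small (e.g.\ if $d\mu=100G^2$ you get $\eta\ge c_0/10$, not $\eta\ge 1/2$). The sentence ``the derived bound suffices'' is simply false here. The underlying reason is structural: when $\delta/\sqrt{\mu}$ is tiny, even an \emph{exact} sample from $\cN(-\delta v/\mu,I/\mu)$ barely distinguishes $v_j$ from $-v_j$, so your sampler-to-sign-recovery reduction loses almost all its power exactly where you need it.

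The paper handles the case $G^2/\mu\le d$ by a dimension-reduction step you omitted: build the hard instance on only the first $d'=\Theta(G^2/\mu)$ coordinates, with the remaining coordinates of every $f_i$ set to zero. Then $d'\mu=\Theta(G^2)$, which places you back in the favorable regime where your sign-recovery argument gives $\eta=\Omega(1)$; and since the target distribution on $\R^d$ factorizes across the two blocks, the TV in $\R^d$ is at least the TV on the first $d'$ coordinates. With this fix your approach goes through; without it, the high-regularization case is not covered.
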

\begin{proof}
By a similar argument in the proof of Corollary~\ref{cor:DPSCOlower}, for any algorithm which can only make $k$ observations, there are a family of $G$-Lipschitz linear functions restricted on an $\ell_2$ ball $\cK$ of diameter $D$ centered at $\mathbf{0}$ such that  
\begin{align}
\label{eq:lower_sampling_SCO}
    \E\Big[\HF_v(\hat{x}_k)-\HF_v^*\Big]
    \ge&\Omega\lp \frac{G D}{\sqrt{1+\log(k)/d}} \cdot \min\lc \sqrt{\frac{d}{k}},1\rc \rp,
\end{align}
where $\HF_v^*=\min_{x\in\cK}\HF_v(x)$ and $\hx_k\in\cK$ is the output of $\cA$.

Suppose we have a sampling algorithm that takes $k$ queries. We use it to sample from $x^{(sol)}$ proportional to $p(x):=\exp(-\HF_v(x)-\frac{\mu}{2} \|x\|^2)$ on $\cK$ with total variation distance $\eta\leq \min(1/2, \sqrt{d \mu / G^2})$. 

Lemma \ref{lm:utility_tech} shows that
\begin{align*}
    \E[\HF_{v}(x^{(sol)})+\frac{\mu}{2}\|x^{(sol)}\|^{2}]\leq\min_{x\in \cK}\lp\HF_{v}(x)+\frac{\mu}{2}\|x\|^{2}\rp+O(d) + O(\eta) \cdot (GD+\mu D^2),
\end{align*}
where the last term involving $\eta$ is due to the total variation distance between $x^{(sol)}$ and $p$. Setting $D=\sqrt{d/\mu}$ and using the diameter of $\cK$ is $D$ and $\eta \leq \min(1/2, \sqrt{d \mu / G^2})$, we have
\begin{align*}
\E[\HF_{v}(x^{(sol)})] & \leq\min_{x\in\cK}\HF_{v}(x)+\frac{\mu}{2}D^{2}+O(d+\eta\cdot (GD+\mu D^2))\\
& \leq \min_{x\in\cK}\HF_{v}(x)+O(d).
\end{align*}
Note that we set $D = \sqrt{d/\mu}$. Comparing with \eqref{eq:lower_sampling_SCO}, we have
\[
\frac{G\sqrt{d/\mu}}{\sqrt{1+\log(k)/d}}\min\left\{ \sqrt{\frac{d}{k}},1\right\} \leq O(d).
\]

If $d\leq G^{2}/\mu\leq \exp(d)$, we have
\[
G\sqrt{d/\mu}\sqrt{\frac{d}{k}}\leq O(d)
\]
and hence $k=\Omega(G^{2}/\mu)$.
If $G^{2}/\mu\geq\exp(d)$,
we have
\[
\frac{G\sqrt{d/\mu}}{\sqrt{\log(k)/d}}\sqrt{\frac{d}{k}}\leq O(d)
\]
and hence $k=\Omega(\frac{G^{2}d/\mu}{\log(G^{2}/\mu)})$.
If $G^{2}/\mu\leq d$, we can construct our function only on the first
$O(G^{2}/\mu)$ dimensions to get a lower bound $k=\Omega(G^{2}/\mu).$
Combining all cases gives the result.
\end{proof}

\addcontentsline{toc}{section}{References}
\bibliographystyle{alpha}
\bibliography{ref}

\newcommand{\etalchar}[1]{$^{#1}$}
\begin{thebibliography}{MMW{\etalchar{+}}21}

\bibitem[Abo16]{Abo16}
John~M. Abowd.
\newblock The challenge of scientific reproducibility and privacy protection
  for statistical agencies.
\newblock {\em Technical report, Census Scientific Advisory Committee}, 2016.

\bibitem[ACCD12]{ACCD12}
Ery Arias-Castro, Emmanuel~J Candes, and Mark~A Davenport.
\newblock On the fundamental limits of adaptive sensing.
\newblock {\em IEEE Transactions on Information Theory}, 59(1):472--481, 2012.

\bibitem[AFKT21]{AFKT21}
Hilal Asi, Vitaly Feldman, Tomer Koren, and Kunal Talwar.
\newblock Private stochastic convex optimization: Optimal rates in l1 geometry.
\newblock In {\em International Conference on Machine Learning}, pages
  393--403. PMLR, 2021.

\bibitem[AKRS19]{AKR+19}
Jordan Awan, Ana Kenney, Matthew Reimherr, and Aleksandra Slavkovi{\'c}.
\newblock Benefits and pitfalls of the exponential mechanism with applications
  to hilbert spaces and functional pca.
\newblock In {\em International Conference on Machine Learning}, pages
  374--384. PMLR, 2019.

\bibitem[App17]{Apple17}
Differential Privacy~Team Apple.
\newblock Learning with privacy at scale.
\newblock {\em Technical report, Apple}, 2017.

\bibitem[BE02]{BE02}
Olivier Bousquet and Andr{\'e} Elisseeff.
\newblock Stability and generalization.
\newblock {\em The Journal of Machine Learning Research}, 2:499--526, 2002.

\bibitem[BEL18]{BEL18}
S{\'e}bastien Bubeck, Ronen Eldan, and Joseph Lehec.
\newblock Sampling from a log-concave distribution with projected langevin
  monte carlo.
\newblock {\em Discrete \& Computational Geometry}, 59(4):757--783, 2018.

\bibitem[BEM{\etalchar{+}}17]{BEM+17}
Andrea Bittau, {\'U}lfar Erlingsson, Petros Maniatis, Ilya Mironov, Ananth
  Raghunathan, David Lie, Mitch Rudominer, Ushasree Kode, Julien Tinnes, and
  Bernhard Seefeld.
\newblock Prochlo: Strong privacy for analytics in the crowd.
\newblock In {\em Proceedings of the 26th Symposium on Operating Systems
  Principles}, pages 441--459, 2017.

\bibitem[BFGT20]{bfgt20}
Raef Bassily, Vitaly Feldman, Crist{\'o}bal Guzm{\'a}n, and Kunal Talwar.
\newblock Stability of stochastic gradient descent on nonsmooth convex losses.
\newblock {\em Advances in Neural Information Processing Systems}, 33, 2020.

\bibitem[BFTT19]{bftt19}
Raef Bassily, Vitaly Feldman, Kunal Talwar, and Abhradeep~Guha Thakurta.
\newblock Private stochastic convex optimization with optimal rates.
\newblock In {\em Advances in Neural Information Processing Systems}, pages
  11282--11291, 2019.

\bibitem[BGN21]{bgn21}
Raef Bassily, Crist{\'o}bal Guzm{\'a}n, and Anupama Nandi.
\newblock Non-euclidean differentially private stochastic convex optimization.
\newblock In {\em Conference on Learning Theory}, pages 474--499. PMLR, 2021.

\bibitem[BNS13]{BNS13}
Amos Beimel, Kobbi Nissim, and Uri Stemmer.
\newblock Private learning and sanitization: Pure vs. approximate differential
  privacy.
\newblock In {\em Approximation, Randomization, and Combinatorial Optimization.
  Algorithms and Techniques}, pages 363--378. Springer, 2013.

\bibitem[BST14]{BST14}
Raef Bassily, Adam Smith, and Abhradeep Thakurta.
\newblock Private empirical risk minimization: Efficient algorithms and tight
  error bounds.
\newblock In {\em 2014 IEEE 55th Annual Symposium on Foundations of Computer
  Science}, pages 464--473. IEEE, 2014.

\bibitem[BV19]{BV19}
Victor Balcer and Salil Vadhan.
\newblock Differential privacy on finite computers.
\newblock {\em Journal of Privacy and Confidentiality}, 9:2, 2019.

\bibitem[BW18]{BalleW18}
Borja Balle and Yu-Xiang Wang.
\newblock Improving the gaussian mechanism for differential privacy: Analytical
  calibration and optimal denoising.
\newblock In {\em International Conference on Machine Learning}, pages
  403--412, 2018.

\bibitem[CDJB20]{CDJB20}
Niladri Chatterji, Jelena Diakonikolas, Michael~I Jordan, and Peter Bartlett.
\newblock Langevin monte carlo without smoothness.
\newblock In {\em International Conference on Artificial Intelligence and
  Statistics}, pages 1716--1726. PMLR, 2020.

\bibitem[CDWY20]{CDW+20}
Yuansi Chen, Raaz Dwivedi, Martin~J Wainwright, and Bin Yu.
\newblock Fast mixing of metropolized hamiltonian monte carlo: Benefits of
  multi-step gradients.
\newblock {\em J. Mach. Learn. Res.}, 21:92--1, 2020.

\bibitem[Che21]{C21}
Yuansi Chen.
\newblock An almost constant lower bound of the isoperimetric coefficient in
  the kls conjecture.
\newblock {\em Geometric and Functional Analysis}, 31(1):34--61, 2021.

\bibitem[CKS20]{CKS20}
Cl{\'e}ment~L Canonne, Gautam Kamath, and Thomas Steinke.
\newblock The discrete gaussian for differential privacy.
\newblock {\em Advances in Neural Information Processing Systems},
  33:15676--15688, 2020.

\bibitem[CM08]{CM08}
Kamalika Chaudhuri and Claire Monteleoni.
\newblock Privacy-preserving logistic regression.
\newblock In {\em NIPS}, volume~8, pages 289--296. Citeseer, 2008.

\bibitem[CMS11]{cms11}
Kamalika Chaudhuri, Claire Monteleoni, and Anand~D Sarwate.
\newblock Differentially private empirical risk minimization.
\newblock {\em Journal of Machine Learning Research}, 12(3), 2011.

\bibitem[CSS13]{CSS13}
Kamalika Chaudhuri, Anand~D Sarwate, and Kaushik Sinha.
\newblock A near-optimal algorithm for differentially-private principal
  components.
\newblock {\em Journal of Machine Learning Research}, 14, 2013.

\bibitem[CV19]{CV19}
Zongchen Chen and Santosh~S Vempala.
\newblock Optimal convergence rate of hamiltonian monte carlo for strongly
  logconcave distributions.
\newblock In {\em Approximation, Randomization, and Combinatorial Optimization.
  Algorithms and Techniques (APPROX/RANDOM 2019)}. Schloss
  Dagstuhl-Leibniz-Zentrum fuer Informatik, 2019.

\bibitem[Dal17]{D17}
Arnak~S Dalalyan.
\newblock Theoretical guarantees for approximate sampling from smooth and
  log-concave densities.
\newblock {\em Journal of the Royal Statistical Society: Series B (Statistical
  Methodology)}, 79(3):651--676, 2017.

\bibitem[DJWW15]{DJWW15}
John~C Duchi, Michael~I Jordan, Martin~J Wainwright, and Andre Wibisono.
\newblock Optimal rates for zero-order convex optimization: The power of two
  function evaluations.
\newblock {\em IEEE Transactions on Information Theory}, 61(5):2788--2806,
  2015.

\bibitem[DKL18]{DKL18}
Etienne De~Klerk and Monique Laurent.
\newblock Comparison of lasserre’s measure-based bounds for polynomial
  optimization to bounds obtained by simulated annealing.
\newblock {\em Mathematics of Operations Research}, 43(4):1317--1325, 2018.

\bibitem[DKM{\etalchar{+}}06]{DKMMN06}
Cynthia Dwork, Krishnaram Kenthapadi, Frank McSherry, Ilya Mironov, and Moni
  Naor.
\newblock Our data, ourselves: Privacy via distributed noise generation.
\newblock In {\em Annual International Conference on the Theory and
  Applications of Cryptographic Techniques}, pages 486--503. Springer, 2006.

\bibitem[DKY17]{DKY17}
Bolin Ding, Janardhan Kulkarni, and Sergey Yekhanin.
\newblock Collecting telemetry data privately.
\newblock {\em Advances in Neural Information Processing Systems}, 30, 2017.

\bibitem[DMM19]{DMM19}
Alain Durmus, Szymon Majewski, and B{\l}a{\.z}ej Miasojedow.
\newblock Analysis of langevin monte carlo via convex optimization.
\newblock {\em The Journal of Machine Learning Research}, 20(1):2666--2711,
  2019.

\bibitem[DMNS06]{DMNS06}
Cynthia Dwork, Frank McSherry, Kobbi Nissim, and Adam Smith.
\newblock Calibrating noise to sensitivity in private data analysis.
\newblock In {\em Theory of cryptography conference}, pages 265--284. Springer,
  2006.

\bibitem[DRS19]{dong2019gaussian}
Jinshuo Dong, Aaron Roth, and Weijie~J Su.
\newblock Gaussian differential privacy.
\newblock {\em Journal of the Royal Statistical Society: Series B (Statistical
  Methodology)}, 2019.

\bibitem[EPK14]{EPK14}
{\'U}lfar Erlingsson, Vasyl Pihur, and Aleksandra Korolova.
\newblock Rappor: Randomized aggregatable privacy-preserving ordinal response.
\newblock In {\em Proceedings of the 2014 ACM SIGSAC conference on computer and
  communications security}, pages 1054--1067, 2014.

\bibitem[Fel16]{Fel16}
Vitaly Feldman.
\newblock Generalization of erm in stochastic convex optimization: The
  dimension strikes back.
\newblock {\em Advances in Neural Information Processing Systems},
  29:3576--3584, 2016.

\bibitem[FKT20]{FKT20}
Vitaly Feldman, Tomer Koren, and Kunal Talwar.
\newblock Private stochastic convex optimization: optimal rates in linear time.
\newblock In {\em Proceedings of the 52nd Annual ACM SIGACT Symposium on Theory
  of Computing}, pages 439--449, 2020.

\bibitem[FTS17]{fts17}
Kazuto Fukuchi, Quang~Khai Tran, and Jun Sakuma.
\newblock Differentially private empirical risk minimization with input
  perturbation.
\newblock In {\em International Conference on Discovery Science}, pages 82--90.
  Springer, 2017.

\bibitem[GT20]{GT20}
Arun Ganesh and Kunal Talwar.
\newblock Faster differentially private samplers via r{\'e}nyi divergence
  analysis of discretized langevin mcmc.
\newblock {\em Advances in Neural Information Processing Systems},
  33:7222--7233, 2020.

\bibitem[GTU22]{GTU22}
Arun Ganesh, Abhradeep Thakurta, and Jalaj Upadhyay.
\newblock Langevin diffusion: An almost universal algorithm for private
  euclidean (convex) optimization.
\newblock {\em arXiv preprint arXiv:2204.01585}, 2022.

\bibitem[HK12]{HK12}
Zhiyi Huang and Sampath Kannan.
\newblock The exponential mechanism for social welfare: Private, truthful, and
  nearly optimal.
\newblock In {\em 2012 IEEE 53rd Annual Symposium on Foundations of Computer
  Science}, pages 140--149. IEEE, 2012.

\bibitem[HRS16]{HRS16}
Moritz Hardt, Ben Recht, and Yoram Singer.
\newblock Train faster, generalize better: Stability of stochastic gradient
  descent.
\newblock In {\em International Conference on Machine Learning}, pages
  1225--1234. PMLR, 2016.

\bibitem[HT10]{HT10}
Moritz Hardt and Kunal Talwar.
\newblock On the geometry of differential privacy.
\newblock In {\em Proceedings of the forty-second ACM symposium on Theory of
  computing}, pages 705--714, 2010.

\bibitem[INS{\etalchar{+}}19]{ins+19}
Roger Iyengar, Joseph~P Near, Dawn Song, Om~Thakkar, Abhradeep Thakurta, and
  Lun Wang.
\newblock Towards practical differentially private convex optimization.
\newblock In {\em 2019 IEEE Symposium on Security and Privacy (SP)}, pages
  299--316. IEEE, 2019.

\bibitem[JLLV21]{JLLV21}
He~Jia, Aditi Laddha, Yin~Tat Lee, and Santosh Vempala.
\newblock Reducing isotropy and volume to kls: an $o(n^3 \psi^2)$ volume
  algorithm.
\newblock In {\em Proceedings of the 53rd Annual ACM SIGACT Symposium on Theory
  of Computing}, pages 961--974, 2021.

\bibitem[JT14]{jt14}
Prateek Jain and Abhradeep~Guha Thakurta.
\newblock (near) dimension independent risk bounds for differentially private
  learning.
\newblock In {\em International Conference on Machine Learning}, pages
  476--484. PMLR, 2014.

\bibitem[KCK{\etalchar{+}}18]{KCK+18}
Yu-Hsuan Kuo, Cho-Chun Chiu, Daniel Kifer, Michael Hay, and Ashwin
  Machanavajjhala.
\newblock Differentially private hierarchical count-of-counts histograms.
\newblock {\em Proceedings of the VLDB Endowment}, 11(11), 2018.

\bibitem[KD99]{KD99}
Jayesh~H Kotecha and Petar~M Djuric.
\newblock Gibbs sampling approach for generation of truncated multivariate
  gaussian random variables.
\newblock In {\em 1999 IEEE International Conference on Acoustics, Speech, and
  Signal Processing. Proceedings. ICASSP99 (Cat. No. 99CH36258)}, volume~3,
  pages 1757--1760. IEEE, 1999.

\bibitem[KJ16]{kj16}
Shiva~Prasad Kasiviswanathan and Hongxia Jin.
\newblock Efficient private empirical risk minimization for high-dimensional
  learning.
\newblock In {\em International Conference on Machine Learning}, pages
  488--497. PMLR, 2016.

\bibitem[KLL21]{KLL21}
Janardhan Kulkarni, Yin~Tat Lee, and Daogao Liu.
\newblock Private non-smooth erm and sco in subquadratic steps.
\newblock {\em Advances in Neural Information Processing Systems}, 34, 2021.

\bibitem[KT13]{KT13}
Michael Kapralov and Kunal Talwar.
\newblock On differentially private low rank approximation.
\newblock In {\em Proceedings of the twenty-fourth annual ACM-SIAM symposium on
  Discrete algorithms}, pages 1395--1414. SIAM, 2013.

\bibitem[KV06]{KV06}
Adam~Tauman Kalai and Santosh Vempala.
\newblock Simulated annealing for convex optimization.
\newblock {\em Mathematics of Operations Research}, 31(2):253--266, 2006.

\bibitem[LC21]{LC21}
Jiaming Liang and Yongxin Chen.
\newblock A proximal algorithm for sampling from non-smooth potentials.
\newblock {\em arXiv preprint arXiv:2110.04597}, 2021.

\bibitem[Led99]{Led99}
Michel Ledoux.
\newblock Concentration of measure and logarithmic sobolev inequalities.
\newblock In {\em Seminaire de probabilites XXXIII}, pages 120--216. Springer,
  1999.

\bibitem[LL21]{LL21}
Daogao Liu and Zhou Lu.
\newblock Curse of dimensionality in unconstrained private convex erm.
\newblock {\em arXiv preprint arXiv:2105.13637}, 2021.

\bibitem[LST20]{LST20}
Yin~Tat Lee, Ruoqi Shen, and Kevin Tian.
\newblock Logsmooth gradient concentration and tighter runtimes for
  metropolized hamiltonian monte carlo.
\newblock In {\em Conference on Learning Theory}, pages 2565--2597. PMLR, 2020.

\bibitem[LST21]{LST21}
Yin~Tat Lee, Ruoqi Shen, and Kevin Tian.
\newblock Structured logconcave sampling with a restricted gaussian oracle.
\newblock In {\em Conference on Learning Theory}, pages 2993--3050. PMLR, 2021.

\bibitem[LSV18]{LSV18}
Yin~Tat Lee, Zhao Song, and Santosh~S Vempala.
\newblock Algorithmic theory of odes and sampling from well-conditioned
  logconcave densities.
\newblock {\em arXiv preprint arXiv:1812.06243}, 2018.

\bibitem[LT19]{LT19}
Jingcheng Liu and Kunal Talwar.
\newblock Private selection from private candidates.
\newblock In {\em Proceedings of the 51st Annual ACM SIGACT Symposium on Theory
  of Computing}, pages 298--309, 2019.

\bibitem[MASN16]{MASN16}
Kentaro Minami, HItomi Arai, Issei Sato, and Hiroshi Nakagawa.
\newblock Differential privacy without sensitivity.
\newblock In {\em Advances in Neural Information Processing Systems}, pages
  956--964, 2016.

\bibitem[MBST21]{MBST21}
Paul Mangold, Aur{\'e}lien Bellet, Joseph Salmon, and Marc Tommasi.
\newblock Differentially private coordinate descent for composite empirical
  risk minimization.
\newblock {\em arXiv preprint arXiv:2110.11688}, 2021.

\bibitem[Mir17]{Mir17}
Ilya Mironov.
\newblock R{\'e}nyi differential privacy.
\newblock In {\em 2017 IEEE 30th Computer Security Foundations Symposium
  (CSF)}, pages 263--275. IEEE, 2017.

\bibitem[MMW{\etalchar{+}}21]{MMW+19}
Wenlong Mou, Yi-An Ma, Martin~J Wainwright, Peter~L Bartlett, and Michael~I
  Jordan.
\newblock High-order langevin diffusion yields an accelerated mcmc algorithm.
\newblock {\em J. Mach. Learn. Res.}, 22:42--1, 2021.

\bibitem[MT07]{MT07}
Frank McSherry and Kunal Talwar.
\newblock Mechanism design via differential privacy.
\newblock In {\em 48th Annual IEEE Symposium on Foundations of Computer Science
  (FOCS'07)}, pages 94--103. IEEE, 2007.

\bibitem[MV21]{MV21}
Oren Mangoubi and Nisheeth~K Vishnoi.
\newblock Sampling from log-concave distributions with infinity-distance
  guarantees and applications to differentially private optimization.
\newblock {\em arXiv preprint arXiv:2111.04089}, 2021.

\bibitem[OV00]{OV00}
Felix Otto and C{\'e}dric Villani.
\newblock Generalization of an inequality by talagrand and links with the
  logarithmic sobolev inequality.
\newblock {\em Journal of Functional Analysis}, 173(2):361--400, 2000.

\bibitem[RBHT12]{rbht09}
Benjamin~IP Rubinstein, Peter~L Bartlett, Ling Huang, and Nina Taft.
\newblock Learning in a large function space: Privacy-preserving mechanisms for
  svm learning.
\newblock {\em Journal of Privacy and Confidentiality}, 4(1):65--100, 2012.

\bibitem[R{\'e}n61]{Ren61}
Alfr{\'e}d R{\'e}nyi.
\newblock On measures of entropy and information.
\newblock In {\em Proceedings of the Fourth Berkeley Symposium on Mathematical
  Statistics and Probability, Volume 1: Contributions to the Theory of
  Statistics}, pages 547--561. University of California Press, 1961.

\bibitem[RS16]{RS16}
Sofya Raskhodnikova and Adam Smith.
\newblock Lipschitz extensions for node-private graph statistics and the
  generalized exponential mechanism.
\newblock In {\em 2016 IEEE 57th Annual Symposium on Foundations of Computer
  Science (FOCS)}, pages 495--504. IEEE, 2016.

\bibitem[SL19]{shen2019randomized}
Ruoqi Shen and Yin~Tat Lee.
\newblock The randomized midpoint method for log-concave sampling.
\newblock In {\em Proceedings of the 33rd International Conference on Neural
  Information Processing Systems}, pages 2100--2111, 2019.

\bibitem[SSSSS09]{SSSSS09}
Shai Shalev-Shwartz, Ohad Shamir, Nathan Srebro, and Karthik Sridharan.
\newblock Stochastic convex optimization.
\newblock In {\em COLT}, volume~2, page~5, 2009.

\bibitem[SSTT21]{sstt21}
Shuang Song, Thomas Steinke, Om~Thakkar, and Abhradeep Thakurta.
\newblock Evading the curse of dimensionality in unconstrained private glms.
\newblock In {\em International Conference on Artificial Intelligence and
  Statistics}, pages 2638--2646. PMLR, 2021.

\bibitem[TS13]{TS13}
Abhradeep~Guha Thakurta and Adam Smith.
\newblock Differentially private feature selection via stability arguments, and
  the robustness of the lasso.
\newblock In {\em Conference on Learning Theory}, pages 819--850. PMLR, 2013.

\bibitem[VEH14]{EH14}
Tim Van~Erven and Peter Harremos.
\newblock R{\'e}nyi divergence and kullback-leibler divergence.
\newblock {\em IEEE Transactions on Information Theory}, 60(7):3797--3820,
  2014.

\bibitem[Wan18]{Wang18}
Yu-Xiang Wang.
\newblock Revisiting differentially private linear regression: optimal and
  adaptive prediction \& estimation in unbounded domain.
\newblock {\em arXiv preprint arXiv:1803.02596}, 2018.

\bibitem[WM10]{WM10}
Oliver Williams and Frank McSherry.
\newblock Probabilistic inference and differential privacy.
\newblock {\em Advances in Neural Information Processing Systems},
  23:2451--2459, 2010.

\bibitem[WZ10]{WZ10}
Larry Wasserman and Shuheng Zhou.
\newblock A statistical framework for differential privacy.
\newblock {\em Journal of the American Statistical Association},
  105(489):375--389, 2010.

\bibitem[ZP19]{ZP19}
Tianqing Zhu and S~Yu Philip.
\newblock Applying differential privacy mechanism in artificial intelligence.
\newblock In {\em 2019 IEEE 39th International Conference on Distributed
  Computing Systems (ICDCS)}, pages 1601--1609. IEEE, 2019.

\bibitem[ZZMW17]{zzmw17}
Jiaqi Zhang, Kai Zheng, Wenlong Mou, and Liwei Wang.
\newblock Efficient private erm for smooth objectives.
\newblock In {\em IJCAI}, 2017.

\end{thebibliography}


\end{document}